\definecolor{ForestGreen}{rgb}{0.1333,0.5451,0.1333}
\definecolor{DarkRed}{rgb}{0.8,0,0}
\definecolor{Red}{rgb}{1,0,0}
\newtheorem{theorem}{Theorem}[section]
\newtheorem{corollary}[theorem]{Corollary}
\newtheorem{lemma}[theorem]{Lemma}
\newtheorem{claim}[theorem]{Claim}
\newtheorem{invariant}[theorem]{Invariant}
\newtheorem{definition}[theorem]{Definition}
\newtheorem{remark}[theorem]{Remark}
\newtheorem*{theorem*}{Theorem}
\newtheorem*{corollary*}{Corollary}
\newtheorem*{conjecture*}{Conjecture}
\newtheorem*{lemma*}{Lemma}
\newtheorem*{thm*}{Theorem}
\newtheorem*{prop*}{Proposition}
\newtheorem*{obs*}{Observation}
\newtheorem*{definition*}{Definition}
\newtheorem*{remark*}{Remark}
\newtheorem*{rec*}{Recommendation}
\newenvironment{fminipage}%
  {\begin{Sbox}\begin{minipage}}%
  {\end{minipage}\end{Sbox}\fbox{\TheSbox}}
\def\defeq{\stackrel{\mathrm{def}}{=}}
\newcommand{\veryshortarrow}[1][3pt]{\mathrel{%
   \hbox{\rule[\dimexpr\fontdimen22\textfont2-.2pt\relax]{#1}{.4pt}}%
   \mkern-4mu\hbox{\usefont{U}{lasy}{m}{n}\symbol{41}}}}
\def\norm#1{\left\| #1 \right\|_{\veryshortarrow}}
\newcommand\PPi{\boldsymbol{\Pi}}
\renewcommand\ll{\boldsymbol{\mathit{l}}}
\renewcommand{\deg}{\operatorname{deg}}
\newcommand\Otil{\widetilde{O}}
\newcommand\R{\mathbb{R}}
\newcommand{\proj}{\PPi}
\newcommand{\concat}{\oplus}
\newcommand\Z{\mathbb{Z}}
\renewcommand{\forall}{\mathrm{\text{ for all }}}
\newcommand{\length}{\mathrm{length}}
\newcommand{\pconcat}{\oplus}
\renewcommand{\root}{\mathsf{root}}
\renewcommand{\hat}{\widehat}
\renewcommand{\tilde}{\widetilde}
\DeclareFontFamily{U}{mathb}{\hyphenchar\font45}
\DeclareFontShape{U}{mathb}{m}{n}{<5> <6> <7> <8> <9> <10> gen * mathb
<10.95> mathb10 <12> <14.4> <17.28> <20.74> <24.88> mathb12}{}
\DeclareSymbolFont{mathb}{U}{mathb}{m}{n}
\DeclareMathSymbol{\rcirclearrow}{\mathbin}{mathb}{'367}
\renewcommand{\bar}{\overline}
\newif\ifrandom
\newcommand{\str}{{\mathsf{str}}}
\newcommand{\todolater}[1]{}
\renewcommand{\root}{\mathsf{root}}
\DeclareMathOperator{\vcong}{vcong}
\DeclareMathOperator{\econg}{econg}
\DeclareMathOperator{\master}{master}
\DeclareMathOperator{\dist}{dist}
\title{Bootstrapping Dynamic APSP via Sparsification}
\newcommand*\samethanks[1][\value{footnote}]{\footnotemark[#1]}
\author{Rasmus Kyng\thanks{The research leading to these results has received funding from grant no. 200021 204787 of the Swiss National Science Foundation.} \\ ETH Zurich \\ kyng@inf.ethz.ch\and Simon Meierhans\samethanks[1] \\ ETH Zurich \\ mesimon@inf.ethz.ch \and Gernot Zöcklein \\ ETH Zurich \\ gzoecklein@ethz.ch}
\begin{document}

\maketitle

\begin{abstract}

    We give a simple algorithm for the dynamic approximate All-Pairs Shortest Paths (APSP) problem. Given a graph $G = (V, E, \ll)$ with polynomially bounded edge lengths, our data structure processes $|E|$ edge insertions and deletions in total time $|E|^{1 + o(1)}$ and provides query access to $|E|^{o(1)}$-approximate distances in time $\Otil(1)$ per query. 

    We produce a data structure that mimics Thorup-Zwick distance oracles \cite{thorup05}, but is dynamic and deterministic.
    Our algorithm selects a small number of pivot vertices.
    Then, for every other vertex, it reduces distance computation to maintaining distances to a small neighborhood around that vertex
    and to the nearest pivot.
    We maintain distances between pivots efficiently by representing them in a smaller graph and recursing.
    We construct these smaller graphs by 
    (a) reducing vertex count using the dynamic distance-preserving core graphs of Kyng-Meierhans-Probst Gutenberg \cite{kyng2023dynamic} in a black-box manner 
    and 
    (b) reducing edge-count using a dynamic spanner akin to Chen-Kyng-Liu-Meierhans-Probst Gutenberg \cite{CKL24}.
    Our dynamic spanner internally uses an APSP data structure. 
    Choosing a large enough size reduction factor in the first step allows us to simultaneously bootstrap 
    our spanner and a dynamic APSP data structure.
    Notably, our approach does not need expander graphs, an otherwise ubiquitous tool in derandomization.

\end{abstract}

\newpage

\section{Introduction}

The dynamic \emph{All-Pairs Shortest Paths} (APSP) problem asks to maintain query access to pairwise distances for a dynamic graph $G = (V, E)$. Recently, the setting of maintaining crude distance estimates at low computational cost has received significant attention \cite{HKN18, chechik18, FGH21, CS21a, chuzhoy21, BGS21, forster_et_al:LIPIcs.ESA.2023.50, forster23incremental} culminating in deterministic algorithms with sub-polynomial update and query time for graphs receiving both edge insertions and deletions \cite{CZ23, kyng2023dynamic, haeupler2024dynamicdeterministicconstantapproximatedistance}. 
In contrast to other approaches, \cite{kyng2023dynamic} allows efficient implicit access to approximate shortest paths via a small collection of dynamic trees.
This is crucial for applications to decremental single source shortest paths and incremental minimum cost flows \cite{CKL24}.

All existing dynamic approximate APSP data structures that work against adaptive adversaries are highly complex \cite{CZ23,kyng2023dynamic,haeupler2024dynamicdeterministicconstantapproximatedistance}.
The algorithm \cite{kyng2023dynamic} establishes the following
\emph{Thorup-Zwick-distance oracle}-inspired framework:
To maintain approximate APSP in the graph, we select a small (dynamic) set of pivot vertices.
For every other vertex, we can reduce distance computation to (a) computing distances in a small neighborhood around the vertex and (b) computing the distance to the nearest pivot vertex.
Then,
\cite{kyng2023dynamic} maintains 
a smaller graph where distances between the pivot vertices can be computed, and recursively maintains these with the same approach.
To maintain this smaller graph, \cite{kyng2023dynamic} designs a dynamic vertex sparsification procedure and performs edge sparsification using the dynamic spanner of \cite{maxflow,detmax}.
This spanner internally relies on complex data structures for decremental APSP on expanders \cite{CS21a} and uses expander embedding techniques in a non-trivial way.
We will use the overall dynamic Thorup-Zwick approach of \cite{kyng2023dynamic}, and we will maintain a small graph for computing distances between pivots in a simpler way.

A key ingredient for us is a new, more powerful fully-dynamic spanner from \cite{CKL24}. This spanner requires both dynamic APSP and expander embedding techniques to maintain.
In this work, we observe that this spanner can be simplified so that it only requires dynamic APSP to maintain.
Thus, given a dynamic APSP data structure, we can maintain a fully-dynamic spanner. 
But, \cite{kyng2023dynamic} also showed that given a fully-dynamic spanner, we can obtain dynamic APSP.
This presents a striking opportunity: we can simultaneously bootstrap a dynamic APSP data structure and spanner. 
This gives our main result, a dynamic approximate APSP data structure.

\begin{restatable}{theorem}{mainTheorem}
    \label{thm:main}
    For a graph $G = (V, E, \ll)$ with polynomially bounded integral edge lengths $\ll \in \R^E_{> 0}$ undergoing up to $m$ edge updates (insertions and deletions), there is an algorithm with total update time $m^{1 + o(1)}$ that maintains query access to approximate distances $\widetilde{\dist}(u, v)$ such that
    \begin{align*}
        \dist(u, v) \leq \widetilde{\dist}(u, v) \leq m^{o(1)} \cdot \dist(u, v). 
    \end{align*}
    The query time is $\Otil(1)$.
\end{restatable}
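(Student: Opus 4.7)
The plan is to follow the dynamic Thorup--Zwick framework of \cite{kyng2023dynamic}. Fix a size-reduction parameter $k = m^\delta$ for some small $\delta > 0$. The top-level data structure maintains a set $P \subseteq V$ of $\widetilde{O}(|V|/k)$ pivots such that every vertex $v$ has a pivot within its $k$-nearest-neighbor ball $N_k(v)$. For each $v$, explicitly track approximate distances to all vertices in $N_k(v)$ and to its nearest pivot $p(v)$. A query $(u,v)$ is answered by taking the minimum of the direct distance (if $v \in N_k(u)$) and $\widetilde{\dist}(u, p(u)) + \widetilde{\dist}(p(u), v)$, where the second term uses inter-pivot distances maintained at the next recursion level. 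Standard arguments show this gives an $O(1)$-stretch overhead per level and $\widetilde{O}(1)$ query time.

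To maintain inter-pivot distances, build a smaller dynamic graph $G'$ that approximately preserves pairwise distances between pivots, then recurse on $G'$. Construct $G'$ in two stages: first, invoke the dynamic core graph data structure of \cite{kyng2023dynamic} as a black box to reduce the vertex set to (essentially) $P$ while blowing up distances by a factor $m^{o(1)}$; second, apply a dynamic spanner to further reduce the edge count by a factor of $k$, so that $G'$ has $\widetilde{O}(m/k)$ edges. Recursing yields depth $L = O(1/\delta)$, so if $\delta$ is chosen appropriately (say $\delta = 1/\sqrt{\log m}$) the stretch compounds to $m^{o(1)}$ total.

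The new ingredient is a simplified fully-dynamic spanner whose only non-trivial subroutine is itself a dynamic APSP oracle. The plan is to revisit the \cite{CKL24} spanner and argue that its uses of decremental expander APSP and expander embeddings can be replaced by calls to a black-box fully-dynamic APSP data structure on the relevant graph. This creates a circular dependence---the spanner needs APSP, and APSP needs the spanner---but the circle is broken by the size reduction: the APSP instance fed to the spanner lives on the vertex-sparsified graph produced by the core graph data structure, which has only $\widetilde{O}(m/k)$ edges. Hence the spanner at size $m$ recurses into APSP at size $m/k$, and so on down until the problem is trivial.

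The main obstacle is closing the bootstrapping recursion with total update time $m^{1+o(1)}$. Letting $T(m)$ bound the total update time for an instance of size $m$, the construction above yields a recurrence of the form $T(m) \leq (L+1) \cdot T(m/k) + m^{1+o(1)}$, whose solution is $T(m) = m^{1+o(1)}$ provided $k$ and $\delta$ are balanced so that both the multiplicative stretch overhead and the per-level work remain subpolynomial. Verifying that the modified \cite{CKL24} spanner retains its sparsity and stretch guarantees when driven by a black-box APSP oracle---rather than the specific expander-embedding machinery used there---is the most delicate step, and is precisely what makes the mutual bootstrapping between spanner and APSP close.
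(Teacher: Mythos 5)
Your overall architecture is exactly the paper's: Thorup--Zwick pivots, KMG core graphs for vertex reduction, a spanner driven by a black-box APSP oracle for edge reduction, and bootstrapping closed by the size reduction. However, your parameter choice $\delta = 1/\sqrt{\log m}$ (hence recursion depth $L = \sqrt{\log m}$) rests on the claim of an ``$O(1)$-stretch overhead per level,'' and that claim fails for the bootstrapped construction. The spanner's stretch is not a fixed constant: it is $\gamma_l^{O(K)}$ where $\gamma_l = \Otil(\gamma_{apxAPSP})$ and $\gamma_{apxAPSP}$ is the stretch of the \emph{recursive} APSP instance the spanner queries. So the stretch at level $i$ satisfies $\gamma_i = \Otil(\gamma_{i+1})^{O(K)}$, i.e., it is \emph{powered}, not multiplied, at each level, giving a total stretch of $\Otil(1)^{O(K)^{L}}$. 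Moreover, the spanner's sparsity and recourse overhead is $m^{\Theta(1/K)}$, and for the recursion to actually shrink you need this overhead (raised to the depth $L$) to be beaten by $k = m^{1/L}$, which forces $K \gtrsim L^2$. Combining $K \gtrsim L^2$ with the requirement $O(K)^L = (\log m)^{o(1)}$ rules out $L = \sqrt{\log m}$ entirely (the stretch would be super-polynomial) and forces $L = \mathrm{poly}(\log\log m)$, i.e., a far more aggressive size reduction $k = m^{1/\mathrm{poly}(\log\log m)}$; the paper uses $K = (\log\log m)^{3/4}$ and $L = (\log\log m)^{1/4}$. This is not a tuning detail --- with your parameters the recursion does not close.

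A second, smaller gap: the efficient spanner requires the graphs it runs on to have bounded maximum degree (its embedding-repair cost scales with the vertex congestion, which scales with $\Delta_{\max}$), but iterated sparsification can create high-degree vertices (e.g., sparsifying a star). The paper handles this with a $\textsc{ReduceDegree}$ operation on the vertex sparsifier plus a potential-function argument showing the induced splits terminate; your proposal does not address how degrees stay controlled across recursion levels, which is one of the genuinely delicate steps alongside the spanner modification you do flag.
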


While we present our theorem in the setting of processing $m$ updates in time $m^{1 + o(1)}$ to simplify the presentation, we believe that this algorithm can be extended directly to the worst-case update time setting of \cite{kyng2023dynamic}. Furthermore, the witness paths can be implicitly represented as forest paths as in \cite{kyng2023dynamic}, which enables the cheap flow maintenance via dynamic tree data structures necessary for applications. 

\paragraph{Roadmap. } We first define the necessary preliminaries in \Cref{sec:preliminaries}. Then, we present a simplified version of our algorithm as a warm-up in \Cref{sec:warm_up}. Afterwards, we present our bootstrapped algorithm based on (a) the KMG-vertex sparsifier \cite{kyng2023dynamic} and (b) a dynamic spanner in \Cref{sec:algorithm}. Then, we present details of (b), our dynamic spanner, in \Cref{sec:spanner}. Finally, we sketch the construction of (a), the KMG-vertex sparsifier, in \Cref{apx:vertex_sparsifier}.

\section{Preliminaries} \label{sec:preliminaries}

\paragraph{Graph  Notation. } We use $G = (V, E, \ll)$ to refer to a graph $G$ with vertex set $V$, edge set $E$ and length vector $\ll \in \R^E_{\geq 1}$. We will restrict our attention to polynomially bounded integral lengths, and will denote their upper bound by $L$. 

\paragraph{Paths, Distances and Balls. } For two vertices $u, v$, we let $\pi^G_{u, v}$ denote the shortest path from $u$ to $v$ in $G$ (lowest sum of edge lengths). Furthermore, we let $\dist_G(u, v)$  be the length of said path. 

We let $B_G(u, v) \defeq \{w: \dist(u, w) < \dist(u, v)\}$ and $\bar{B}(u, v) \defeq \{w: \dist(u, w) \leq \dist(u, v)\}$. Then, for a set $A \subseteq V$ we let $B_G(u, A) \defeq \bigcup_{a \in A}B_G(u, a)$ and $\bar{B}_G(u, A) \defeq \bigcup_{a \in A} \bar{B}_G(u, a)$ respectively. 

Finally, we define the neighborhood of a vertex $u$ as $\mathcal{N}(u) \defeq \{v: (u, v) \in E\} \cup \{u\}$ and of a set $A \subseteq V$ as $\mathcal{N}(A) \defeq \bigcup_{a \in A} \mathcal{N}(a)$. For a vertex $v$, we let $E_G(v) \defeq \{(v, u): u \in \mathcal{N}(v)\} \subseteq E(G)$.

When it is clear from the context, we sometimes omit the subscript/superscript $G$.  

\begin{definition}[Spanner]
    Given a graph $G = (V, E, \ll)$, we say that a subgraph $H \subseteq G$ is an $\alpha$-spanner of $G$ if for any $u, v \in V$ we have $\dist_G(u, v) \leq \dist_H(u, v) \leq \alpha \cdot \dist_G(u, v)$.
\end{definition}

\paragraph{Dynamic Graphs. }Vertex splits are a crucial operation in our algorithms. They arise since we contract graph components and represent them as vertices. When such a component needs to be divided, this naturally corresponds to a vertex split defined as follows. 

\begin{definition}[Vertex Split]\label{def:vertex_split}
    For a graph $G = (V, E, \ll)$, a vertex split takes a vertex $v \in V$ and a set $U \subseteq \mathcal{N}(v)$, and replaces the vertex $v$ with two vertices $v', v''$ such that $\mathcal{N}(v') = U$ and  $\mathcal{N}(v'') = \mathcal{N}(v) \setminus U$.
\end{definition}

\begin{definition}[Master Nodes]Initially, each $v \in V(G^{(0)})$ is associated with a unique master node $\master(v) = \{v\}$. Then, at any time $t$, if some $v \in V(G^{(t)})$ is split into $v'$ and $v''$, we set $\master(v') = \master(v'') = \{v\} \cup \master(v)$. 
\end{definition}

We can also define our notion of fully-dynamic graphs, which includes vertex splits as an operation. 

\begin{definition}[fully-dynamic Graph]
A fully-dynamic graph is a graph undergoing edge insertions/deletions, vertex splits and isolated vertex insertions.
\end{definition}

Whenever we disallow vertex splits, we refer to edge-dynamic graphs instead. 

\begin{definition}[Edge-Dynamic Graph]
    An edge-dynamic graph is a graph undergoing edge insertions/deletions and isolated vertex insertions.
\end{definition}

\paragraph{Dynamic Objects and Recourse. }
For a dynamic object $(F(t))_{t \in [T]}$ defined on an index set $[T] = \{1,\ldots, T\}$, we refer to $F^{(T')} \defeq (F(t))_{t \in [T']}$ as the full sequence up to time $T'$. We can then apply functions $g(\cdot)$ that operate on sequences to  $F^{(T')}$, e.g. $g(F^{(T')})$. This is for example very useful to keep track of how an object changes over time. By providing a difference function $\triangle(F(t), F(t-1))$ for a dynamic object $F$, one may define the recourse as the sum of differences $\norm{F^{(t)}} \defeq \sum_{i = 0}^{t - 1} \triangle(F(i + 1), F(i))$.

For functions $f$ that operate on a static object $G$ instead of on a sequence, we overload the notation by saying $f(G^{(t)})$ is simply $f$ applied to the last object in the sequence, that is, $f(G^{(t)}) = f(G(t))$. For example, $\Delta_{\max}(G^{(t)})$ is defined to be the maximum degree of the graph $G$ at time $t$.

Whenever we make a statements about a dynamic object $F$ without making reference to a specific time $t$, we implicitly mean that the statement holds at all times.
For example, $\norm{H} \leq \norm{G}$ formally means
$\forall t: \norm{H^{(t)}} \leq \norm{G^{(t)}}$.
Meanwhile, and $|E(H)| \leq |E(G)|$
formally means $\forall t: |E(H^{(t)})| \leq |E(G^{(t)})|$, and by our convention for applying functions of static objects to a dynamic object, we have that $|E(H^{(t)})| \leq |E(G^{(t)})|$ means  $|E(H(t))| \leq |E(G(t))|$.

The time scales we use are fine-grained as we will consider multiple dynamic objects that change more frequently than the input graph $G$ does. 
We will choose an index set for the dynamic sequence which is fine-grained enough to capture changes to all the objects relevant to a given proof.
A dynamic object $H$ might not change in a time span in which another object $F$ changes multiple times. In this case, to have sequences that are indexed by the same set, we simply pad the sequence of $H$ with the same object, so that both dynamic objects would share the same index set $T$.

For fully-dynamic graphs $G$, we let $\triangle(G(t), G(t-1))$ denote the number of edge insertions, edge deletions, vertex splits and isolated vertex insertions that happened between $G(t)$ and $G(t - 1)$.
Note that generally, vertex splits cannot be implemented in constant time, and thus recourse bounds are not directly relevant to bounding running time.
Nevertheless, this convention for measuring recourse is useful, as it is tailored precisely to the type of recourse bounds we prove.

We can store anything that occurs at various times throughout our algorithm in a dynamic object and define what recourse means for it. For example, we will repeatedly call a procedure $\textsc{ReduceDegree}(X)$. If at time $t$ we call the function with input $X(t)$, we might want to track the quantity $\sum_{t' \leq t} |X(t')|$, so we simply let $R_X$ be the dynamic object such that $R_X(t) = \sum_{t' \leq t} |X(t')|$. By defining the difference $\Delta(R_X(t), R_X(t-1)) = R_X(t) - R_X(t-1)$, we can find out how much the value changed during two points in time. A statement such as $\norm{H} \leq \norm{G} + \norm{R_X}$ then means that at any time $t$ (suitably fine-grained), the total amount of changes that $H$ underwent are not more than the amount of $G$, plus some extra number of updates determined by the calls to $\textsc{ReduceDegree}(\cdot)$ made up to that time.

\paragraph{Dynamic APSP. } We introduce APSP data structures for edge-dynamic graphs.
\begin{definition}[Dynamic APSP] \label{def:apsp}
    For an edge-dynamic graph $G = (V, E, \ll)$, a $\gamma$-approximate $\beta$-query APSP data structure supports the following operations.
    \begin{itemize}
        \item $\textsc{AddEdge}(u,v)$ / $\textsc{RemoveEdge}(u, v)$: Add/Remove edge $(u, v)$ from/to $G$.
        \item $\textsc{Dist}(u, v)$: Returns in time $\beta$ a distance estimate $\widetilde{\dist}(u,v)$ such that at all times
        \begin{equation*}
            \dist_G(u, v) \leq \widetilde{\dist}(u,v) \leq \gamma \cdot \dist_G(u,v). 
        \end{equation*}
        \item $\textsc{Path}(u,v)$: Returns a $uv$-path $P$ in $G$ of length $|P| \leq \textsc{Dist}(u,v)$ in time $\beta \cdot |P|$.
    \end{itemize}
    Given an APSP data structure, we let $APSP(|E|, \Delta_{\max}(G), u)$ denote the total run time of initialization and $u$ calls to $\textsc{AddEdge}$ and $\textsc{RemoveEdge}$ on an edge-dynamic graph with $\Delta_{\max}(G) \leq \Delta$ and $|E|$ initial edges.

    We sometimes also call $\gamma$ the (worst-case) stretch of the data structure.
\end{definition}

\paragraph{Degree Reduction Through Binary Search Trees}\label{fact:BSTred}Given an \emph{edge-dynamic} graph $G$, we can maintain in time $\Otil(1) \cdot \norm{G}$ another edge-dynamic graph $G'$ in which we replace vertices by Binary Search Trees with one leaf node per edge of the vertex in the original graph.
It is immediate that if we can find shortest paths in $G'$, we can translate them into shortest paths in $G$. The advantage of this technique is that while 
$|E(G')| = O(|E(G)|)$, $|V(G')| = O(|E(G)|)$ and $\norm{G'} = \Otil(\norm{G})$, we have $\Delta_{\max}(G') \leq 3$.

\section{Warm up: Low-Recourse Thorup-Zwick'05} \label{sec:warm_up}

Before we describe our algorithm, we turn our attention to a much more modest goal that still showcases the main ideas used in our full construction: maintaining a relaxed version of Thorup-Zwick pivots \cite{thorup05} in a fully-dynamic graph with low recourse. This allows us to do away with the careful controlling of maximum degrees in intermediate graphs, which is the root cause of most of the technicalities when making the construction computationally efficient. Furthermore, we do not need to bootstrap our data structure in this setting. 

\paragraph{Pivot Hierarchies.} We first introduce pivot hierarchies. 

\begin{definition}[Pivot Hierarchy] \label{def:pivot_hierarchy}
    For a graph $G = (V, E, \ll)$, we let a pivot hierarchy of depth $\Lambda$ and cluster size $\gamma$ be  
    \begin{itemize}
        \item a collection of vertex sets $\mathcal{C} = A_0, \ldots, A_{\Lambda}$ with $A_i \subseteq V$ for all $i$, $A_0 = V$ and $|A_{\Lambda}| = 1$,
        \item pivot maps $p_i: A_{i} \rightarrow A_{i + 1}$ for all $0 \leq i < \Lambda$ and
        \item cluster maps $C_i: A_i \rightarrow 2^{A_i}$ for all $0 \leq i < \Lambda$ such that for all $u \in A_i$ $|\{v \in A_i: v \in C_i(u)\}| \leq \gamma$ and $u \in C_i(u) $. 
    \end{itemize}
\end{definition}

Pivot maps should be thought of as mapping vertices to smaller and smaller sets of representatives, and the cluster maps should be thought of as small neighborhoods of pivots for which (approximate) distances are stored. In classical Thorup-Zwick, the cluster $C_i(v)$ consists of all vertices in $A_i$ that are closer to $v$ than $p_i(v)$. In our algorithms, these balls become somewhat distorted due to sparsification. We then define the approximate distances maintained by the hierarchy.

\begin{definition}[Distances]
    For a pivot hierarchy of depth $\Lambda$, we define $\widetilde{\dist}^{(\Lambda)}(u, v) \defeq 0$ for $u, v \in A_{\Lambda}$ and 
    \begin{equation*}
        \widetilde{\dist}^{(i)}(u, v) \defeq \begin{cases}
            \dist_{G}(u,v) & \text{if } u \in C_{i}(v) \\
            \dist_{G}(u, p_i(u)) + \dist^{(i + 1)}(p_i(u), p_i(v)) + \dist_{G}(p_i(v), v) & \text{otherwise}
        \end{cases}.
    \end{equation*}
    We let $\widetilde{\dist}(u, v) \defeq \widetilde{\dist}^{(0)}(u, v)$ for $u, v \in V$ and call a pivot hierarchy $\alpha$ distance preserving if $\widetilde{\dist}(u, v) \leq \alpha \cdot \dist_G(u, v)$ for all $u,v \in V$. 
\end{definition}

We note that efficient distance queries can be implemented as long as the distances within clusters and the distances to pivots are efficiently maintained. In this warm up, we only focus on maintaining the collection $\mathcal{C}$ with low recourse. However, our full algorithm will maintain this additional information. 

Formally, the goal of this warm-up section is to prove the following theorem.

\begin{theorem}\label{thm:low_recourse_pivots}
    For an edge-dynamic graph $G = (V, E)$ with $n = |V|$ vertices, a $\sqrt{\log n}$-depth $n^{o(1)}$-cluster size pivot hierarchy can be maintained in polynomial time per update with total recourse $\norm{\mathcal{C}} \leq n^{o(1)} \cdot \norm{G}$ on the pivot sets.
\end{theorem}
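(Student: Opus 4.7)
The plan is to build the hierarchy one level at a time via a deterministic pivot-selection subroutine, and then compose $\Lambda = \sqrt{\log n}$ such levels. Setting the target shrinkage per level to $\gamma = n^{1/\Lambda} = 2^{\sqrt{\log n}} = n^{o(1)}$ simultaneously gives $|A_\Lambda| \leq 1$ and meets the required cluster-size bound of $n^{o(1)}$.

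\textbf{Single-level subroutine.} Given an edge-dynamic set $A \subseteq V$, I would maintain a subset $A' \subseteq A$ with $|A'| \leq |A|/\gamma$, a pivot map $p \colon A \to A'$, and a cluster map $C \colon A \to 2^A$ satisfying $C(v) = \{w \in A : \dist_G(v,w) < \dist_G(v,p(v))\}$ and $|C(v)| \leq \gamma \cdot \operatorname{polylog}(n)$. For the deterministic choice of $A'$, I would fix a total ordering $\pi$ on $V$ and include $v \in A'$ iff $v$ is $\pi$-minimal in a canonical ``$\gamma$-nearest-neighbor'' set of $v$ in $A$. This derandomizes the classical Thorup--Zwick random pivot sampling at the cost of a polylogarithmic slack in the cluster-size bound.

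\textbf{Composition and recourse.} Starting from $A_0 = V$ and applying the subroutine inductively yields $A_0 \supseteq A_1 \supseteq \cdots \supseteq A_\Lambda$. The central recourse claim is that each primitive update to $G$ triggers at most $\operatorname{polylog}(n)$ membership changes in $A'$, via a charging argument that tracks how distance changes in $G$ affect the canonical nearest-neighbor sets. Summing over the inductive cascade (updates to $A_i$ induce updates to $A_{i+1}$ at the same $\operatorname{polylog}(n)$ rate), I would obtain $\norm{A_i} \leq \operatorname{polylog}(n)^i \cdot \norm{G}$. Since $\operatorname{polylog}(n)^\Lambda = 2^{O(\sqrt{\log n}\, \log \log n)} = n^{o(1)}$, it follows that $\norm{\mathcal{C}} = \sum_{i=0}^{\Lambda} \norm{A_i} \leq n^{o(1)} \cdot \norm{G}$.

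\textbf{Main obstacle.} The hard part is the per-level stability argument. A single edge update can in principle rearrange many pairwise distance comparisons, and a change in a single nearest-neighbor comparison can cascade through the pivot map and the cluster definition at the current level and all subsequent levels. Controlling this cascade requires designing the canonical nearest-neighbor set and the deterministic pivot rule so that each elementary update affects only $\operatorname{polylog}(n)$ vertices in $A'$ --- either via buffering/amortization, or via a careful combinatorial argument leveraging that edge lengths are polynomially bounded integers. Running time is not a concern, since the theorem only demands polynomial time per update and we may recompute affected nearest-neighbor structures from scratch.
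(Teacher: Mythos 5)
There is a genuine gap, and it is exactly the step you flag as the ``main obstacle'': the claim that each primitive update to $G$ causes only $\operatorname{polylog}(n)$ membership changes in a pivot set defined as a canonical function of current distances. This claim is the entire content of the theorem, and your proposed mechanism ($\pi$-minimality within a canonical $\gamma$-nearest-neighbor set) does not deliver it. A single edge deletion can change $\dist_G(v,\cdot)$ for $\Omega(n)$ vertices $v$ (e.g.\ deleting an edge of a path or a bridge), so the canonical nearest-neighbor set --- and hence the membership decision --- can flip for $\Omega(n)$ vertices simultaneously. Any rule that recomputes membership as a pure function of the current metric faces this problem; stability must instead be engineered into the update rule. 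The composition step has the same issue: ``updates to $A_i$ induce updates to $A_{i+1}$'' is not well-defined, because level $i+1$ depends on the metric restricted to $A_i$, which changes globally under a single update to $G$ even when no membership in $A_i$ changes.

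The paper avoids this entirely by never removing vertices from a pivot set: the KMG vertex sparsifier (\Cref{thm:RecourseVertexSparsifier}) maintains a \emph{monotonically growing} set $A$ with $|A|\le n/k+2\norm{G}$ --- each update simply adds the touched endpoints to $A$ --- so the recourse of $A$ is trivially bounded by the number of updates, and the size is controlled by periodically rebuilding levels (level $j$ is rebuilt after roughly $\gamma^{j}2^{\Lambda-j}$ updates, with geometrically decreasing thresholds). The second ingredient you are missing is edge sparsification between levels: the vertex sparsifier outputs a graph with few vertices but potentially $\Otil(m)$ edges, so the paper interleaves it with a low-recourse greedy spanner (\Cref{thm:RecourseEdgeSparsifier}), whose key property is that an edge leaves the spanner only when it is deleted from the input, giving recourse proportional to decremental updates plus $|V|$. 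The per-level recourse blow-up is then $\gamma_{\textit{VS}}\gamma_{\textit{DS}}=\Otil(1)$, which powered over $\Lambda=\sqrt{\log n}$ levels stays $n^{o(1)}$ --- matching the shape of your intended bound, but obtained by monotonicity plus rebuilding rather than by a stability argument for a canonical selection rule. To repair your proof you would need to either adopt this monotone-plus-rebuild structure or supply the missing combinatorial stability lemma, which I do not believe holds as stated.
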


We emphasize that this theorem is not particularly useful, as it comes without meaningful guarantees on the update time, and one could just use Dijkstras algorithm to obtain exact distances in this regime.
Even recourse guarantee is also of limited use, as we do not control the recourse of the pivot maps or the cluster maps of the pivot hierarchy.
Nevertheless, our algorithm for proving this theorem is instructive and closely mirrors our final dynamic APSP algorithm.

The illustrative algorithm for maintaining pivot hierarchies repeatedly decreases the vertex count via low-recourse core graphs, and the edge count via a low-recourse spanner. We present these two pieces separately. 

\paragraph{Low-Recourse KMG-vertex sparsifier. } In this paragraph, we state a theorem about maintaining a KMG-vertex sparsifier with low recourse \cite{kyng2023dynamic}. Although their algorithm only works for edge-dynamic graphs, it can be extended to fully-dynamic graphs rather straightforwardly in the recourse setting. Recall that we count a vertex split as a single operation when defining the recourse of a fully-dynamic graph. 

\begin{definition}[Vertex Sparsifier]%
    Given $G = (V, E, \ll)$ and $A \subseteq V$, we call a graph $\Tilde{G}$ with vertex set $V(\Tilde{G}) \supseteq A$ a $\beta$-approximate vertex sparsifier of $G$ with respect to the set $A$ if for every two vertices $u, v \in V(\Tilde{G})$ we have $\dist_G(u, v) \leq \dist_{\Tilde{G}}(u, v)$ and if $u, v \in A$ we further have $\dist_{\Tilde{G}}(u, v) \leq \beta \cdot \dist_G(u, v)$.
\end{definition}

\begin{restatable}[See Theorem 3.1 in \cite{kyng2023dynamic} \& \Cref{apx:vertex_sparsifier}]{theorem}{vertexSparsifier}\label{thm:RecourseVertexSparsifier}
Consider a size reduction parameter $k > 1$ and an fully-dynamic graph $G$ with lengths in $[1, L]$. Then, there is a deterministic algorithm that explicitly maintains, 
\begin{enumerate}
    \item a monotonically growing pivot set $A \subseteq V(G)$ and pivot function $p : V \rightarrow A$, such that  $|A| \leq n/k + 2\norm{G}$ and $|C(u)| \leq \tilde{O}(k)$ for all $u \in V$ where $C(u) = \{v \in V: \dist_G(u, v) < \dist_G(u, p(u)\}$.
    \item \label{prop:workhorseDistancePreserve} a fully-dynamic graph $\Tilde{G}$, such that $\Tilde{G}$ is a $\gamma_{\textit{VS}}$-approximate vertex sparsifier of $G$ with respect to $A$. We have that $\Tilde{G}^{(0)}$ has at most $m \cdot \gamma_{\textit{VS}}$ edges and $\gamma_{\textit{VS}} \cdot m /k$ vertices, where $\gamma_{\textit{VS}} = \Otil(1)$.
    The following properties hold:
    \begin{enumerate}
        \item \label{prop:maxDegreeVS}  $\Tilde{G}$ has lengths $\ll_{\Tilde{G}}$ in $[1, nL]$, and 
        \item given any edge $e = (u,v) \in \Tilde{G}$, there exists a $uv$-path $P$ in $G$ with $l_G(P) \leq l_{\Tilde{G}}(e)$. 
        \item  $\norm{\Tilde{G}} \leq \gamma_{\text{VS}} \cdot (\norm{G} + m)$ and $\norm{D_{\tilde{G}}} \leq \gamma_{\text{VS}} \cdot \norm{G}$ where $D_{\tilde{G}}$ is the dynamic object containing the decremental operations (deletions and vertex splits) on $\tilde{G}$.
    \end{enumerate}
\end{enumerate}
The algorithm runs in polynomial time.
\end{restatable}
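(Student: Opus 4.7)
The plan is to invoke the edge-dynamic construction of \cite{kyng2023dynamic} (their Theorem 3.1) in an essentially black-box manner and then augment it with a handling rule for vertex splits so that all stated guarantees survive. Concretely, I would proceed in three stages: (i) recall the static pivot/sparsifier construction, (ii) describe the maintenance under edge operations, and (iii) show that vertex splits reduce to an amortized constant number of edge/isolated-vertex events on both $G$ and $\tilde{G}$.

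\textbf{Static construction and pivot properties.} The pivot set $A$ is produced deterministically by a greedy covering scheme: walking through $V$, whenever a vertex $u$ has cluster $C(u) = \{v : \dist_G(u,v) < \dist_G(u, p(u))\}$ of size exceeding $c \cdot k$ for an appropriate polylogarithmic $c$, we add an element of $C(u)$ to $A$ and reassign $p$ accordingly. Classical covering arguments (essentially Thorup-Zwick's analysis, but with deterministic selection) give $|A| \le n/k$ initially and $|C(u)| \le \tilde{O}(k)$ for every $u$. The vertex sparsifier $\tilde{G}$ is then built by taking $A$ together with $\tilde{O}(m/k)$ auxiliary Steiner vertices that encode low-diameter shortest-path trees rooted at pivots; its edges carry lengths equal to distances in $G$, which by construction lie in $[1, nL]$ and, by taking the witness path inside the corresponding rooted tree, satisfy the edge-to-path correspondence in property~2(b). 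The $\gamma_{\textit{VS}}$-approximate vertex sparsifier property follows because every shortest $A$-to-$A$ path in $G$ can be routed, up to stretch $\tilde{O}(1)$, through the encoded trees.

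\textbf{Maintenance under edge updates (edge-dynamic case).} Here I would simply quote Theorem 3.1 of \cite{kyng2023dynamic}: they show that any edge insertion/deletion in $G$ affects only $\tilde{O}(1)$ clusters and trees, and induces at most $\gamma_{\textit{VS}}$ updates in $\tilde{G}$. Monotonic growth of $A$ is enforced by never deleting pivots: a deletion that would shrink a pivot's cluster leaves $A$ unchanged, while an insertion that blows up some cluster beyond the $\tilde{O}(k)$ threshold triggers one more pivot addition. Charging this addition to the $\ge k$ responsible updates yields $|A| \le n/k + 2\|G\|$. Bookkeeping the number of deletions and insertions separately in $\tilde G$ gives the recourse bounds $\|\tilde G\| \le \gamma_{\textit{VS}}(\|G\| + m)$ and $\|D_{\tilde G}\| \le \gamma_{\textit{VS}} \|G\|$, since each decremental change in $G$ can only force decremental changes in $\tilde G$ (removals of tree edges or Steiner-vertex splits), never insertions.

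\textbf{Extension to vertex splits.} A vertex split of $v$ into $v', v''$ with neighborhood partition $U, \mathcal{N}(v)\setminus U$ is simulated by: delete $E(v) \setminus U$, insert an isolated vertex $v''$, reinsert $E(v)\setminus U$ attached to $v''$; simultaneously, if $v \in A$ then add both $v', v''$ to $A$ (preserving monotonicity), else leave $A$ unchanged and allow $p(v'), p(v'')$ to be set to the previous $p(v)$ (cluster sizes can only shrink, so the $\tilde{O}(k)$ bound is preserved without new pivots). Each of these elementary operations already counts as one unit of recourse in $\|G\|$, and each triggers $\tilde{O}(1)$ updates in $\tilde{G}$ by the edge-dynamic analysis. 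The delicate point is that the Steiner auxiliary structure inside $\tilde{G}$ must itself be split consistently with $v$ so that the edge-to-path correspondence in $G$ is preserved; this is done by percolating the split up through the encoded trees, which the vertex-split recourse bound of \cite{kyng2023dynamic} (their Appendix analysis) already accommodates.

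The main obstacle I expect is the decremental recourse bound $\|D_{\tilde G}\| \le \gamma_{\textit{VS}} \|G\|$, which must hold \emph{without} the additive $+m$ slack. This requires showing that edge insertions into $G$ do not trigger vertex splits or edge deletions inside $\tilde G$, i.e., that all insertions into $G$ translate purely into insertions into $\tilde G$. This is true by the construction (new edges can only add new tree-edges and new Steiner vertices, never force an existing Steiner vertex to split), but verifying it for all the corner cases of the pivot-reassignment routine is where the bulk of the appendix's technical work goes.
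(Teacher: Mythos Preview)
Your plan has a genuine gap in stage~(iii), the handling of vertex splits. You propose to simulate a split of $v$ into $v',v''$ by deleting the edges $E(v)\setminus U$, inserting an isolated $v''$, and reinserting those edges at $v''$. But in the paper's recourse convention a vertex split counts as \emph{one} unit of $\norm{G}$, whereas your simulation expends $2|E(v)\setminus U|+1$ edge-dynamic operations. Feeding these to the edge-dynamic black box produces $\gamma_{\textit{VS}}\cdot\Theta(\deg(v))$ updates to $\tilde G$ per split, not $\gamma_{\textit{VS}}\cdot 1$. This destroys both recourse bounds in item~2(c): $\norm{\tilde G}\le\gamma_{\textit{VS}}(\norm{G}+m)$ and especially $\norm{D_{\tilde G}}\le\gamma_{\textit{VS}}\norm{G}$ would acquire an uncontrolled degree factor. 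Since degrees in $G$ are not bounded in this theorem (no $\Delta$ appears in the statement), the argument cannot be rescued by absorbing the factor into $\gamma_{\textit{VS}}$.

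The paper's actual route is different and avoids this blow-up. The sparsifier $\tilde G$ is not built from ``Steiner vertices encoding shortest-path trees rooted at pivots'' as you sketch, but from \emph{core graphs}: a collection of $\tilde O(1)$ low-stretch spanning forests of $G$ whose components (cores) are contracted to single vertices, with a multiplicative-weights scheme ensuring every path in the precomputed path set $\mathcal{P}$ is well-stretched in at least one of them. On any update the algorithm simply adds the two affected endpoints to $A$; this is what gives $|A|\le n/k+2\norm{G}$, not a lazy threshold mechanism. For a vertex split, the paper first makes $v$ a terminal (hence a root in every forest), and then observes that splitting a root into two roots cannot increase the stretch of any edge. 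Thus the split is processed natively as a single split in each core graph, yielding $\tilde O(1)$ operations on $\tilde G$ per split of $G$ without ever touching $\deg(v)$ many edges. This is the step you are missing, and it is also why your anticipated ``delicate point'' about percolating splits through Steiner trees does not arise in the paper's construction.
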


\begin{remark}
    We always associate a map from $V(\tilde{G})$ to $V$ with a vertex sparsifier, where multiple vertices in $V(\tilde{G})$ might map to the same vertex in $V$.
    When we apply multiple vertex sparsifiers to a graph $G$, we refer to recursively applying this map as mapping vertices back to $G$. 
\end{remark}

\paragraph{Low-Recourse Dynamic Spanner. } In this paragraph, we describe a simple algorithm for maintaining a spanner $H$ of a fully-dynamic graph $G$ with remarkably low recourse. The construction is based on the greedy spanner of \cite{althofer1993sparse}.\footnote{The low-recourse property of this algorithm has been observed for graphs that only receive edge deletions in \cite{bhattacharya_et_al}.} 

\begin{theorem}[Dynamic Spanner]
\label{thm:RecourseEdgeSparsifier}
    There exists a data structure that given a fully-dynamic graph $G = (V, E, \ll)$ with polynomially bounded edge lengths, maintains a fully-dynamic $O(\log |V|)$-spanner $H$ of $G$ such that $|E(H^{(0)})| \leq \gamma_{\text{DS}} \cdot |V(H^{(0)})|$ and $\norm{H} \leq \gamma_{\text{DS}} \cdot (\norm{D_G} + |V|)$ for $\gamma_{\text{DS}} \in \tilde{O}(1)$, where $D_G$ is a dynamic object that contains all decremental updates to $G$ (deletions and vertex splits).
    
    $H$ can be maintained in polynomial time per update.
\end{theorem}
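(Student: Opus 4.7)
The plan is to maintain $H$ as a fully-dynamic realization of the classical greedy $\alpha$-spanner of Alth\"ofer et al., with stretch $\alpha = 2\log|V| - 1 = O(\log|V|)$. The defining invariant to maintain is: for every edge $e = (u,v) \in E(G) \setminus E(H)$, $\dist_H(u,v) \leq \alpha \cdot \ll(e)$. This invariant alone implies that $H$ is an $\alpha$-spanner of $G$, since any shortest $G$-path can be ``lifted'' into $H$ by replacing each non-spanner edge with its $\alpha$-approximate $H$-path and summing over the path.

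For initialization, I sort $E(G^{(0)})$ by length and process edges in increasing order, admitting $(u,v)$ to $H$ exactly when $\dist_H(u,v) > \alpha \cdot \ll(u,v)$ at the moment of processing (so $H$ contains only shorter edges when each test is performed). A standard moat/girth argument then shows that the unweighted girth of $H^{(0)}$ is at least $2\log|V|+1$, giving the size bound $|E(H^{(0)})| = O(|V|^{1 + 1/\log|V|}) = O(|V|)$, which establishes the first part of the theorem.

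For updates I proceed as follows. Upon insertion of $(u,v)$ of length $\ell$, compute $\dist_H(u,v)$ in polynomial time via Dijkstra and add $(u,v)$ to $H$ if this distance exceeds $\alpha \ell$; if $(u,v)$ is added, then also scan all $e' \in H$ with $\ll(e') \geq \ell$ and remove any that have become redundant in the greedy sense, i.e., there is now an $\alpha$-short path in $H$ between the endpoints of $e'$ using only edges of length at most $\ll(e')$ (excluding $e'$ itself). On a deletion or vertex split, remove the affected edges from $H$ and rescan all non-spanner edges, adding any that newly violate the invariant. All operations are implementable in polynomial time per update.

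The main obstacle is the recourse bound $\norm{H} \leq \gamma_{\text{DS}} \cdot (\norm{D_G} + |V|)$. The cleanup step in the insertion handler is essential: it ensures that at all times every edge of $H$ remains greedy-necessary with respect to shorter edges, which via the moat/girth argument keeps $|E(H^{(t)})| = O(|V^{(t)}|)$ throughout. Edge removals from $H$ split into (i) decremental-induced removals, bounded by $\norm{D_G}$, and (ii) voluntary cleanups performed after short-edge insertions; each voluntary cleanup can be amortized against a prior edge addition via a standard potential-function argument, so that their total count is absorbed into the total number of additions. Since total additions equal the current $|E(H^{(t)})|$ plus the total number of removals, the recourse telescopes to $\tilde{O}(|V^{(t)}| + \norm{D_G^{(t)}})$, yielding $\gamma_{\text{DS}} = \tilde{O}(1)$ as claimed.
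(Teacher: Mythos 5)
Your starting point (greedy spanner, girth-based size bound) matches the paper's, but your handling of weighted lengths creates genuine gaps. The paper first buckets edges into length classes $[2^i,2^{i+1})$ and maintains a separate unit-length spanner per class. In the unit-length setting the admission test ``add $(u,v)$ iff $\dist_H(u,v)>2\log 2n$'' preserves girth $>2\log 2n$ under \emph{arbitrary} insertion order, and deletions and vertex splits only increase girth; hence no edge ever needs to be removed from $H$ voluntarily, the size bound $|E(H)|=O(n)$ holds throughout, and the recourse bound is immediate because an edge leaves $H$ only when a decremental update forces it (combined with a restart every $n$ decremental updates to keep $|V|\le 2n$). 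By working with weighted lengths directly, you are forced to introduce a ``cleanup'' step that voluntarily evicts edges of $H$ whose greedy justification is destroyed by a later, shorter insertion --- and this step is where your argument breaks.

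Two concrete problems. First, correctness: your cleanup criterion for evicting $e'$ is that $H$ currently contains an $\alpha$-short witness path for $e'$; but a subsequent cleanup (triggered by the same or a later insertion) may evict an edge lying on that witness path, and your insertion handler never re-adds edges, so the invariant $\dist_H(u,v)\le\alpha\,\ll(e)$ for $e\in E(G)\setminus E(H)$ can silently fail and the stretch guarantee is lost. Second, recourse: the bound you must prove, $\norm{H}\le\gamma_{\text{DS}}(\norm{D_G}+|V|)$, is independent of the number of \emph{insertions} to $G$ (this independence is exactly what the warm-up argument of \Cref{thm:low_recourse_pivots} relies on), yet your voluntary evictions are triggered by insertions, and your amortization is circular: writing $A$ for total additions, $R_v$ for voluntary removals and $R_d$ for forced removals, you have $A=|E(H)|+R_d+R_v$ and the charge $R_v\le A$ yields only the vacuous $0\le |E(H)|+R_d$. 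You would need to exhibit a potential that charges each voluntary eviction to a decremental update or to the $O(|V|)$ budget, and no such charging is given (nor is one obvious, since a long alternating sequence of long-edge and short-edge insertions already produces $\Theta(1)$ churn per insertion). The bucketing device is not cosmetic; it is what eliminates both issues at once.
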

\begin{proof}
    We assume unit lengths via maintaining separate spanners for length ranges $[2^i, 2^{i + 1})$ and returning the direct product of the appropriately scaled spanners. Furthermore, we let $|V^{(0)}| = n$ and re-start after $n$ decremental updates. 
    
    Then, there can be at most $n$ vertex splits and therefore $G$ contains at most $2n$ vertices before re-starting. We initialize $H^{(0)}$ to the greedy spanner of $G^{(0)}$, i.e. we consider the edges $(u,v) \in E(G^{(0)})$ in arbitrary order and add them to $E(H^{(0)})$ if $\dist_{H^{(0)}}(u,v) > 2 \log 2n$. This ensure that the graph $H^{(0)}$ is a $O(\log |V|)$-spanner of $G^{(0)}$, and $|E^{(0)}| \leq O(n)$ directly follows from the fact that a girth $2 \log |V| + 1$ graph contains at most $O(|V|)$ vertices. 

    Whenever $G$ is updated, we perform the corresponding operation on $H$ in case of edge deletions and vertex splits such that $H \subseteq G$. Then, we iteratively check for every edge $(u,v) \in E(G) \setminus E(H)$ if $\dist_H(u,v) > 2 \log 2n$ and insert the edges for which the condition is true.

    Clearly $H$ is an $O(\log n)$ spanner of $G$ throughout because every edge is stretched by at most $2\log 2n$. Furthermore, vertex splits and edge deletions only increase the girth of $H$, and therefore $H$ contains at most $O(n)$ edges throughout. Finally, the total recourse bound $\norm{H} \leq O(n)$ follows because an edge only leaves the graph $H$ when it is deleted. 
\end{proof}

\paragraph{Low-Recourse Pivot Hierarchies via Iterated Sparsification. } Given the low-recourse vertex and edge sparsification routines described above, we are ready to give the algorithm for \Cref{thm:low_recourse_pivots}.  

Our algorithm consists of layers $0, \ldots, \Lambda - 1$ where $\Lambda = \sqrt{\log n}$.\footnote{We remark that our full algorithm uses a much more drastic size reduction to enable bootstrapping, which results in only $O(\log^\epsilon \log n)$ layers for some small constant $\epsilon < 1$. This is necessary because the stretch is powered every time we recurse.} Each layer consists of two graphs $G_i$ and $H_i$, where $H_i \subseteq G_i$. We let $G_0 = G$. Then we obtain $H_i$ from $G_i$ and $G_i$ from $H_{i - 1}$ as follows:
\begin{enumerate}
    \item $G_i$ is the output of the vertex sparsifier from \Cref{thm:RecourseVertexSparsifier} on the graph $H_{i - 1}$ for size reduction parameter $k \defeq 2^{\sqrt{\log n}}$. For simplicity, we assume that $k$ is an integer without loss of generality.
    \item  $H_i$ is the dynamic spanner from \Cref{thm:RecourseEdgeSparsifier} on the graph $G_i$.
\end{enumerate}
Whenever the an update to $G$ occurs, we pass the updates up the hierarchy until we reach the first layer $j$ that has recieved more than $\gamma_{\textit{DS}}^{j} \gamma_{\textit{VS}}^{j} 2^{\Lambda - j}$ since it was initialized. Then, we re-initialize all the layers $j, \ldots, \Lambda$ via the steps described above. Notice in particular that the recourse of each layer is much smaller than the size reduction, leading to manageable recourse over all.

We let the sets $A_i$ be the vertex sets of the graphs $G_i$ mapped back to $G$, and we let $A_{\Lambda}$ contain an arbitrary vertex $r$ in $A_{\Lambda - 1}$ (also mapped back to $G$). Furthermore, we let the pivot map $p_i(\cdot)$ be the pivot function maintained by the data structure from \Cref{thm:RecourseVertexSparsifier} at layer $i + 1$ for $i = 0, \ldots, \Lambda - 1$, and the cluster map $C_i$ is the cluster map from vertex sparsifier data structure at layer $i + 1$. Finally, we define the cluster map $C_{\Lambda}(v) \defeq A_{\Lambda}$ and pivot map $p_{\Lambda}(v)$ for every vertex $v \in A_{\Lambda}$. 

We conclude the warm up section by proving the main theorem.

\begin{proof}[Proof of \Cref{thm:low_recourse_pivots}]
    We first bound the sizes and recourse of the sets $A_0, \ldots, A_k$. The recourse of the set $A_0$ and the recourse of $G_0$ is $\norm{G}$ by definition, and the recourse of $H_0$ is at most  $\gamma_{\text{DS}}\big(\norm{G} + |V|\big)$.

    We then observe that recourse caused by non-decremental updates on $G_i$ can be ignored because they do not show up in the recourse of the subsequent spanner $H_i$ by \Cref{thm:RecourseEdgeSparsifier}. The recourse of the decremental updates to $G_{i}$ is $\gamma_{\textit{VS}} \cdot \norm{H_{i - 1}}$. The recourse of $H_i$ is $\norm{H_i} \leq \gamma_{\textit{DS}} (\norm{D_{G_i}} + |V(G_i)|)$. Therefore, $t$ updates to $G$ cause at most $O(t \cdot \gamma_{\textit{VS}}^{i} \gamma_{\textit{DS}}^{i})$ vertices to be added to $G_i$. Since we re-start every layer $i$ after $\gamma_{\textit{DS}}^{i} \gamma_{\textit{VS}}^{i} 2^{\Lambda - i}$ updates have arrived and the total amount of updates that reach layer $i$ is $\gamma_{\textit{VS}}^{i} \gamma_{\textit{DS}}^{i} n$, the total recourse of the layer is $O(2^\Lambda \cdot n \cdot \gamma_{\textit{VS}}^{i + 1} \gamma_{\textit{DS}}^{i + 1})$ and the total recourse of all the sets is at most $O(k \cdot \gamma_{\textit{VS}}^{\Lambda + 1} \gamma_{\textit{DS}}^{\Lambda + 1} \cdot n \cdot \Lambda)$ by \Cref{thm:RecourseVertexSparsifier} and \Cref{thm:RecourseEdgeSparsifier}. 

    Then, we show that the final set $A_{\Lambda - 1}$ is of small enough size for bounding the size of the clusters $C_{\Lambda - 1}(\cdot)$. By the recourse bound above, there are at most $n^{o(1)} n/k^{\Lambda - i + 1}$ updates to a level before it gets rebuilt. Therefore, the size of level $\Lambda - 1$ is at most $n^{o(1)}$ at all times. 
    The bound on the other cluster sizes directly follows from \Cref{thm:RecourseVertexSparsifier} and the description of our algorithm. 

    It remains to show that the distance estimates $\tilde{\dist}$ are good. To do so, we first notice that for every $u, v \in A_i$ we have $\dist_G(u,v) \leq \dist_{G_i}(u,v) \leq n^{o(1)} \dist_G(u,v)$ since each layer only loses $\tilde{O}(1)$ in distance approximation by \Cref{thm:RecourseVertexSparsifier} and \Cref{thm:RecourseEdgeSparsifier} and there are $\sqrt{\log n}$ levels. Then, the distance approximation follows by induction since the length of the detour to the pivots can lose at most a poly-logarithmic factor per level because the pivot is closer in the graph $G_i$ and $\dist_G(u,v) \leq \dist_{G_i}(u,v) \leq n^{o(1)} \cdot \dist_G(u,v)$. 
\end{proof}

\section{The Algorithm} \label{sec:algorithm}

We solve the APSP problem on a large graph $G$ using a pivot hierarchy obtained by iterated vertex and edge sparsification, 
just like in the warm up in \Cref{sec:warm_up}.
Again, we will select a set of pivots for the first level of a pivot hierarchy, and perform edge and vertex sparsification to represent the distances between the pivots using a smaller dynamic graph, and then recurse on this graph.
The main difference to the previous section is that (a) we want to bootstrap dynamic APSP along the way and (b) we want to develop and use a computationally efficient spanner.
This spanner creates a few technical complications.
In particular, to make the spanner efficient, we need all our graphs to have somewhat bounded maximum degree.

\paragraph{Vertex sparsification. }  First, our algorithm reduces the APSP problem to a graph $\tilde{G}$ with significantly fewer vertices using the KMG-vertex sparsifier (See \Cref{sec:vertex_sparsification}). When compared to the low-recourse version presented in \Cref{sec:warm_up}, our vertex sparsifier has an additional routine called $\textsc{ReduceDegree}(\cdot)$ which forces some of the vertices in the vertex sparsifier to be split and therefore have smaller degree. We further elaborate this point in the paragraph below. 

The vertex sparsification algorithm is presented in \Cref{sec:vertex_sparsification}.

\paragraph{Edge sparsification.} Although $\tilde{G}$ is supported on a much smaller vertex set, it may still contain most of the edges from $G$. To achieve a proper size reduction, we additionally employ edge sparsification on top of the vertex sparsifier $\tilde{G}$.

There are two key differences between the warm up and the efficient spanner algorithm. 

\begin{itemize}
    \item To quickly check for detours in the spanner, the efficient algorithm recursively uses APSP datastructures on small instances (See \Cref{sec:edge_sparsification} and \Cref{sec:spanner}). 
    \item To efficiently maintain the dynamic spanners, it is crucial that the degree of the graph remains bounded. Although we can assume that the degree of the initial graph $G$ is bounded by a constant, repeatedly sparsifying the graph could seriously increase its maximum degree. Fortunately, our dynamic spanner guarantees that the average degree after edge sparsification is low. However, the same cannot be achieved for the maximum degree, which becomes apparent when trying to sparsify a star graph. 
    
    To remedy this issue, we use that the underlying graph has bounded degree and call the $\textsc{ReduceDegree}(\cdot)$ routine of the KMG-vertex sparsifier to reduce the degree of high-degree vertices in the spanner. This operation can lead to changes in $\tilde{G}$ and its spanner, but we show that repeatedly splitting vertices with high degree in the spanner quickly converges because the progress of splits is larger than the recourse they cause to the spanner.  
\end{itemize}

The edge sparsification algorithm is presented in \Cref{sec:edge_sparsification} and \Cref{sec:spanner}.

\subsection{Pivots and Vertex Sparsification}
\label{sec:vertex_sparsification}

We use the KMG-vertex sparsifier algorithm \cite{kyng2023dynamic}, which is based on dynamic core graphs, which in turn are based on static low-stretch spanning trees. 

We first introduce certified pivots. These are mapping vertices that get sparsified away to vertices in $\tilde{G}$, and certify the distance for vertices that are closer to each other than to their respective pivots. 

\begin{definition}[Certified Pivot]
    \label{def:certified_pivot}
    Consider a graph $G=(V,E)$ with edge lengths $\ll$ and a set $A \subseteq V$.
    We say that a function $p : V \rightarrow A$ is a pivot function for $A$, if for all $v \in V$, we have that $p(v)$ is the nearest vertex in $A$ (ties broken arbitrarily but consistently).

    Additionally, we say that $p$ is a certified pivot function if
    for all vertices $v \in V$, for all $u \in B_G(v, p(v)) \cup \{p(v)\}$
    we have computed the exact $(u,v)$ shortest paths and distances.
\end{definition}

To motivate \Cref{def:certified_pivot}, we show that pivots can be used to approximate the distance between sparsified vertices whenever their exact distance isn't stored already.  

\begin{claim}\label{thm:pivotdist}
Consider a graph $G = (V, E, l)$ with pivot set $A$ and corresponding pivot function $p$. Then for any $u, v \in V$, if $v \not \in B_G(u, p(u))$, we have $\dist_G(p(u), p(v)) \leq 4 \cdot \dist_G(u, v)$. 
\end{claim}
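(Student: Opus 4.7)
The plan is to proceed by two applications of the triangle inequality, using the defining property of pivot functions (that $p(w)$ is the nearest vertex in $A$ to $w$) twice.

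First I would unpack the hypothesis: $v \notin B_G(u, p(u))$ means, by definition of $B_G$, that $\dist_G(u,v) \geq \dist_G(u, p(u))$. Combined with the fact that $p(u)$ is the closest vertex of $A$ to $u$, this will let me bound several auxiliary distances in terms of $\dist_G(u,v)$.

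Next, I would bound $\dist_G(v, p(v))$. Since $p(v)$ is the nearest vertex of $A$ to $v$, and $p(u) \in A$, we have $\dist_G(v, p(v)) \leq \dist_G(v, p(u))$. Applying the triangle inequality gives $\dist_G(v, p(u)) \leq \dist_G(v, u) + \dist_G(u, p(u)) \leq 2 \dist_G(u, v)$, using the hypothesis in the last step. Hence $\dist_G(v, p(v)) \leq 2 \dist_G(u, v)$.

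Finally, I would apply the triangle inequality once more to the target quantity:
\begin{equation*}
\dist_G(p(u), p(v)) \leq \dist_G(p(u), u) + \dist_G(u, v) + \dist_G(v, p(v)) \leq \dist_G(u,v) + \dist_G(u,v) + 2\dist_G(u,v) = 4 \dist_G(u,v),
\end{equation*}
where $\dist_G(p(u), u) \leq \dist_G(u,v)$ follows again from the hypothesis. There is no real obstacle here; the only subtlety is remembering to use the hypothesis in both the bound on $\dist_G(u, p(u))$ directly and (transitively) in the bound on $\dist_G(v, p(v))$.
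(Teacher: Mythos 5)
Your proof is correct and follows exactly the same route as the paper's: unpack $v \notin B_G(u,p(u))$ to get $\dist_G(u,p(u)) \leq \dist_G(u,v)$, use the nearest-pivot property of $p(v)$ to bound $\dist_G(v,p(v)) \leq 2\dist_G(u,v)$, and finish with the triangle inequality. Nothing to add.
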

\begin{proof}
By $v \not \in B_G(u, p(u))$ we directly obtain $\dist_G(u, p(u) \leq \dist_G(u, v)$ and since $p(v)$ is (one of) the neareast vertices to $v$ in $A$, we have $\dist_G(v, p(v)) \leq \dist_G(v, p(u)) \leq \dist_G(v, u) + \dist_G(u, p(u)) \leq 2 \dist_G(u, v)$ as well.  
By using the triangle inequality we get $\dist_G(p(u), p(v)) \leq \dist_G(p(u), u) + \dist_G(u, v) + \dist_G(v, p(v)) \leq 4 \dist_G(u, v)$.
\end{proof}

Next, we state a version of the KMG-vertex sparsifier \cite{kyng2023dynamic}, which is based on dynamic core graphs. We refer the reader to \Cref{apx:vertex_sparsifier} for a more detailed discussion of their vertex sparsifier. 

The following theorem should be thought of as an efficient analog to  \Cref{thm:RecourseVertexSparsifier} with the extra operation $\textsc{ReduceDegree}(\cdot)$. As mentioned at the start of this section, this routine comes into play when the spanner produces a high-degree vertex down the line. 

\begin{restatable}[KMG-vertex sparsifier, See Theorem 3.1 in \cite{kyng2023dynamic} and \Cref{apx:vertex_sparsifier}]{theorem}{vertexSparsifier}\label{thm:VertexSparsifier}
Consider a size reduction parameter $k > 1$ and an edge-dynamic graph $G$, with $\Delta_{\max}(G) \leq \Delta$ and lengths in $[1, L]$. Then, there is a deterministic algorithm that explicitly maintains, 
\begin{enumerate}
    \item a monotonically growing pivot set $A \subseteq V(G)$ and certified pivot function $p : A \rightarrow V$, such that  $|A| \leq n/k + 2\norm{G}$.
    \item \label{prop:workhorseDistancePreserve} a fully-dynamic graph $\Tilde{G}$, such that $\Tilde{G}$ is a $\gamma_{\textit{VS}}$-approximate vertex sparsifier of $G$ with respect to $A$. We have that $\Tilde{G}^{(0)}$ has at most $m \cdot \gamma_{\textit{VS}}$ edges and $\gamma_{\textit{VS}} \cdot m /k$ vertices, where $\gamma_{\textit{VS}} = \Otil(1)$. The following properties hold:
    \begin{enumerate}
        \item \label{prop:maxDegreeVS} $\Delta_{\max}(\Tilde{G}) \leq \Delta \cdot  \gamma_{\textit{VS}} \cdot k$, and $\Tilde{G}$ has lengths $\ll_{\Tilde{G}}$ in $[1, nL]$, and 
        \item given any edge $e = (u,v) \in \Tilde{G}$, the algorithm can return a $uv$-path $P$ in $G$ with $l_G(P) \leq l_{\Tilde{G}}(e)$ in time $O(|P|)$. \label{prop:mapBackToPath}
        \item a procedure $\textsc{ReduceDegree}(E', z)$, where $E' \subseteq \{e \in E(\Tilde{G}): v \in e\}$ for some $v \in V(\Tilde{G})$, and $z$ is a positive integer. It updates the graph $\Tilde{G}$, performing at most $\gamma_{\textit{VS}} \cdot |E'|/z$ vertex splits, such that afterwards any vertex $v' \in V(\Tilde{G})$ with $v \in \master(v')$ is adjacent to at most $z \cdot \Delta$ edges in $E'$.  This procedure runs in time at most $\gamma_{\textit{VS}} \cdot k \cdot |E'|$.
        \item\label{prop:RD} If $R_V$ is the dynamic object containing the total number of vertex splits performed by \textsc{ReduceDegree}, then $\norm{\Tilde{G}} \leq \gamma_{\textit{VS}} \cdot \norm{G} + \norm{R_V}$.
    \end{enumerate}
\end{enumerate}
The algorithm runs in time $m \cdot \gamma_{\textit{VS}} \Delta k^4 + \norm{G} \cdot \gamma_{\textit{VS}}k^4 \Delta$.
\end{restatable}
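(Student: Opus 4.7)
The plan is to construct the data structure by building on the KMG core graph framework of \cite{kyng2023dynamic}, with the main new ingredient being the \textsc{ReduceDegree} routine together with the explicit maximum-degree bound. The base construction maintains a low-stretch spanning forest $T$ of $G$ and partitions $V(G)$ into subtrees of diameter $\tilde{O}(k)$ via a heavy-path-like decomposition. Each subtree $S$ contracts into a single super-node of $\tilde{G}$, whose incident edges are the inter-subtree edges of $G$ with lengths inflated by the intra-subtree tree distances to the corresponding boundary endpoints; the anchor vertex of each subtree becomes a pivot. This immediately delivers $|A| \leq n/k + 2\norm{G}$ (with the $\norm{G}$ term absorbing newly promoted pivots caused by decremental updates splitting subtrees), the $\gamma_{\textit{VS}} = \tilde{O}(1)$-approximate vertex sparsifier property, the claimed vertex and edge counts for $\tilde{G}^{(0)}$, and the maximum-degree bound $\Delta_{\max}(\tilde{G}) \leq \Delta \cdot \gamma_{\textit{VS}} \cdot k$, since each super-node contains $\tilde{O}(k)$ vertices of $G$ of $G$-degree at most $\Delta$. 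Certification of the pivot function is obtained by running Dijkstra inside each subtree (affordable because each has $\tilde{O}(k)$ vertices), and the edge-to-path map of Property \ref{prop:mapBackToPath} follows by concatenating these intra-subtree shortest paths with the underlying edge in $G$.

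For the dynamic maintenance, edge insertions and non-disconnecting deletions translate directly into updates on $\tilde{G}$. Disconnecting deletions trigger local subtree rebuilds following the standard KMG hierarchical rebuild schedule, which amortizes both total running time and recourse $\norm{\tilde{G}}$ to $\gamma_{\textit{VS}} \cdot \norm{G}$. The new \textsc{ReduceDegree}$(E', z)$ operation is implemented as follows: since the edges of $E'$ are all incident to some $v \in V(\tilde{G})$ corresponding to a subtree with $\tilde{O}(k)$ vertices of $G$-degree at most $\Delta$, I would greedily partition $\master(v)$ into groups so that each group contributes at most $z \cdot \Delta$ edges of $E'$. Since each underlying vertex contributes at most $\Delta$ such edges, this produces at most $\gamma_{\textit{VS}} \cdot |E'|/z$ groups, each of which is separable from the rest by a vertex split along a single tree edge of the low-stretch forest. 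Each such split propagates through the core graph machinery in time $\tilde{O}(k)$ per affected edge, giving the claimed $\gamma_{\textit{VS}} \cdot k \cdot |E'|$ total time and the desired degree condition afterwards.

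The main obstacle will be establishing the clean recourse decomposition $\norm{\tilde{G}} \leq \gamma_{\textit{VS}} \cdot \norm{G} + \norm{R_V}$ of Property \ref{prop:RD}. A naive accounting fails because a \textsc{ReduceDegree}-induced split could interact with the KMG rebuild schedule and cascade into further recourse that cannot be charged to either term. The fix is to have each \textsc{ReduceDegree} split align with a low-stretch forest edge inside the relevant subtree so that, from the perspective of the downstream data structures, the split is indistinguishable from a local decremental event within that subtree. The KMG amortization then treats each such split as one unit of external recourse, counted once in $R_V$ and paid for at the same amortized cost as a single adversarial deletion, with no multiplicative blow-up beyond the $\gamma_{\textit{VS}}$ factor already incurred in the standard analysis. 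Combining this charging scheme with the amortized rebuild analysis of \cite{kyng2023dynamic} yields both Property \ref{prop:RD} and the overall running time $m \cdot \gamma_{\textit{VS}} \Delta k^4 + \norm{G} \cdot \gamma_{\textit{VS}} k^4 \Delta$.
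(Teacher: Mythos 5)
Your skeleton --- contract connected pieces of a low-stretch spanning forest into super-nodes, inflate edge lengths by tree distances to the contracted endpoints, and implement \textsc{ReduceDegree} by splitting cores along forest edges --- matches the core-graph construction the paper sketches in \Cref{apx:vertex_sparsifier}. But there is a genuine gap in how you obtain the worst-case stretch $\gamma_{\textit{VS}} = \Otil(1)$. A single low-stretch spanning tree bounds stretch only \emph{on average} (over edges, or over a fixed path collection); for an individual pair of pivots the tree detours, and hence the inflated lengths in $\tilde{G}$, can be polynomially larger than the true distance. The paper's construction avoids this with two ingredients you omit entirely: (i) an explicit path collection $\mathcal{P}$, inherited from the dynamized vertex sparsifier of \cite{andoni20vert}, consisting of the $k^{o(1)}\cdot n\cdot\Delta$ paths $\pi_{u',u}\concat (u,v)\concat \pi_{v,v'}$ for $u',v'$ in small balls around edge endpoints --- preserving just these paths already preserves distances between all pairs of pivots up to a constant; and (ii) maintaining $\Otil(1)$ core graphs in parallel, chosen via a multiplicative-weights scheme against stretch over-estimates, so that \emph{every} path of $\mathcal{P}$ is stretched by only $\Otil(1)$ in at least one copy, with $\tilde{G}$ being the union of the copies (roots identified by length-one edges). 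Without (i) and (ii), ``partition one LSST into subtrees and contract'' does not yield an $\Otil(1)$-approximate vertex sparsifier.

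Your pivot set is also not the right one. The theorem requires $p(v)$ to be the \emph{nearest} vertex of $A$ to $v$ in $G$, with certification over the ball $B_G(v,p(v))$ --- a ball in $G$ that is not confined to $v$'s subtree --- so taking subtree anchors as pivots and running Dijkstra inside each subtree certifies nothing. The paper instead chooses $A$ as a (derandomized) hitting set with $|B(v,A)|\leq\Otil(k)$ for all $v$ and adds both endpoints of every updated edge to $A$ (whence the $2\norm{G}$ term); this ball bound is what makes both the certification and the path collection affordable. Relatedly, controlling subtree \emph{diameter} does not control the number of vertices per core, so neither the vertex count $\gamma_{\textit{VS}}\cdot m/k$ nor the degree bound $\Delta\cdot\gamma_{\textit{VS}}\cdot k$ follows from your decomposition as stated. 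Your \textsc{ReduceDegree} and recourse-charging sketches are plausible and close to the paper's remark (first make the split vertices roots/terminals so that stretch does not increase), but they rest on a stretch argument that is missing its load-bearing pieces.
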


\subsection{Edge Sparsification}
\label{sec:edge_sparsification}

We state our main dynamic spanner theorem whose proof is deferred to \Cref{sec:spanner}. The theorem assumes access to a $\gamma_{apxAPSP}$-approximate, $\Otil(1)$-query APSP data structure (See \Cref{def:apsp}.)

\begin{theorem}[Dynamic Spanner]
\label{thm:EdgeSparsifier}
    There exists a data structure that given a fully-dynamic graph $G$ with unit lengths and $|V(G^{(0)})| = n$, $\norm{G} \leq n$ and a parameter $1 \leq K \leq O(\log^{1/3} n )$, maintains a fully-dynamic $\gamma_l^{O(K)}$-spanner $S$ of $G$ such that $|E(S)| \leq \gamma_{\textit{ES}} \cdot n$ and $\norm{S} \leq \gamma_{\textit{ES}} \cdot \norm{G}$, where $\gamma_l = \Otil(\gamma_{apxAPSP})$ and $\gamma_{\textit{ES}} = \Otil(n^{1/K})$. It runs in time $n\Delta_{\max}(G^{(0)}) \gamma_{\textit{ES}}\gamma_{l}^{O(K^2)} + APSP(\gamma_{\textit{ES}} n, 3, \gamma_{\textit{ES}} n)$.
\end{theorem}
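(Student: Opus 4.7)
The plan is to give an efficient dynamic analog of the low-recourse greedy spanner of \Cref{thm:RecourseEdgeSparsifier}, replacing its exact-distance detour checks with $\gamma_l$-approximate queries to the provided APSP oracle. I will maintain $S$ together with a bounded-degree auxiliary graph $S'$ obtained from $S$ via the binary-search-tree reduction of \Cref{fact:BSTred}, so that $S'$ has maximum degree $3$ and $|V(S')| = O(|E(S)|) \leq \tilde{O}(n^{1+1/K})$. The provided APSP data structure is instantiated on $S'$, accounting for the $APSP(\gamma_{\textit{ES}} n, 3, \gamma_{\textit{ES}} n)$ term in the total running time; distance queries between pairs of $V(S)$ translate to queries at the corresponding leaves of $S'$.

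The central subroutine mirrors the warm-up greedy spanner but with a threshold $T = \gamma_l^{O(K)}$: for each edge $(u,v) \in E(G) \setminus E(S)$ we query $\widetilde{\dist}_{S'}(u,v)$, and we add $(u,v)$ to $S$ (and $S'$) if this exceeds $T$. Because the oracle has multiplicative error $\gamma_l$, an edge is added only when the actual distance in $S$ without it is at least $T/\gamma_l$; choosing $T$ so that $T/\gamma_l \geq K$ yields, by a Moore-bound argument on the ``approximate girth'' of $S$, the sparsity bound $|E(S)| \leq \tilde{O}(n^{1+1/K}) = \gamma_{\textit{ES}} \cdot n$. For stretch, every edge $(u,v) \in E(G)$ not included in $S$ satisfies $\dist_S(u,v) \leq T$ by the non-inclusion rule, and this extends to arbitrary pairs by concatenating per-edge witnesses along shortest paths in $G$, yielding the target $\gamma_l^{O(K)}$-spanner property. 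Edge deletions and vertex splits in $G$ are propagated immediately to $S$ and $S'$; previously-rejected edges whose witness paths were destroyed by the decrement must be re-examined via fresh APSP queries, and (re-)added if they now exceed the threshold.

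The main obstacle lies in controlling the cost and recourse of this re-examination cascade. I will handle it by processing updates in phases of geometrically growing size and rebuilding from scratch when a phase's budget is exhausted, so that each update to $G$ is charged at most $\gamma_l^{O(K^2)}$ APSP queries (roughly $\gamma_l^{O(K)}$ from the stretch threshold, times another $\gamma_l^{O(K)}$ factor from cascading across phases), each costing $\tilde{O}(1)$ time. The recourse bound $\norm{S} \leq \gamma_{\textit{ES}} \cdot \norm{G}$ then follows by charging each insertion into $S$ to the decrement in $G$ that triggered the re-examination, amortized within a phase, together with the sparsity $|E(S)| \leq \gamma_{\textit{ES}} \cdot n$ bounding insertions during a rebuild. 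Combining the per-update query cost with the initial edge count $n \Delta_{\max}(G^{(0)})$ and the underlying APSP running time $APSP(\gamma_{\textit{ES}} n, 3, \gamma_{\textit{ES}} n)$ gives the claimed total. A minor technical point to verify is that distances in $S'$ accurately reflect those in $S$ up to constants (since tree edges of the BST reduction must be assigned lengths that do not distort the approximate-girth argument), so that the stretch and sparsity analyses survive the reduction cleanly.
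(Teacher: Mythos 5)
There is a genuine gap, and it sits exactly where you flag ``the main obstacle'': bounding the re-examination cascade. To know which previously rejected edges must be re-queried after a deletion, you must store a witness path for each rejected edge; and to bound \emph{how many} witnesses a single deletion (or vertex split) destroys, you need an a priori bound on how many rejected edges route their witness through any given edge \emph{and vertex} of $S$. Your proposal never establishes such a congestion bound, and the greedy rule does not give one: a single low-degree vertex of $G$ can become a hub of $S$ through which $\Omega(m)$ witness paths pass, so one deletion can force $\Omega(m)$ fresh APSP queries. The phase/rebuild scheme does not repair this, because a rebuild costs $\Omega(m)$ queries and therefore cannot be amortized against the single update that triggered it; within a phase the per-update cost remains unbounded. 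This is precisely why the paper maintains an explicit edge embedding $\Pi_{G\to H}$ and builds $K+1$ layered spanners trading maximum degree against edge congestion (Theorem~\ref{lma:CompSpanner}), so that $\vcong(\Pi_{G\to H})\approx \Delta_{\max}(G)$ and each deletion affects only $\approx\Delta$ embedded edges --- which is where the $n\Delta_{\max}(G^{(0)})$ factor in the stated running time comes from.

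A second, independent gap concerns the recourse bound $\norm{S}\le\gamma_{\textit{ES}}\cdot\norm{G}$. Even granting that you can identify the affected edges, greedily re-inserting every affected edge that now lacks a short detour can add far more than $\Otil(1)$ edges per decremental update (the exact-girth monotonicity argument of Theorem~\ref{thm:RecourseEdgeSparsifier} breaks once queries are only $\gamma_l$-approximate, since an edge admitted under an overestimate may later be redundant yet never removed, and conversely a batch of deletions can make many rejected edges simultaneously necessary). The paper's fix is the \emph{patching} step: project the affected edges onto the $O(|U_G|)$ touched vertices, sparsify that projected graph $J'$ down to $\Otil(|V(J')|)=\Otil(|U_G|)$ edges via \textsc{Sparsify}, and add only those back, with the batching scheme of \Cref{sec:batching_scheme} keeping the sequential depth of patching at $O(K)$ so the stretch and congestion losses compound only $O(K)$ and $O(K^2)$ times. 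Your outline contains neither the projection-and-sparsify step nor a mechanism that keeps the compounding depth bounded, so both the recourse bound and the $\gamma_l^{O(K)}$ stretch (as opposed to a stretch that degrades with the number of updates) remain unproven.
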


\begin{remark}
    \label{rem:unit_length}
    We can directly extend \Cref{thm:EdgeSparsifier} to graphs with edge lengths in $[1, L]$ by bucketing edges in the intervals $[2^i, 2^{i+1})$ for $i = 1, \ldots, O(\log L)$.

    Note also that as at all times $H$ is a subgraph of $G$, it does not undergo more vertex splits and isoalted vertex insertions than $G$, i.e., all extra updates to maintain $H$ are edge insertions/deletions.
\end{remark}

\subsection{Bootstrapping via Edge and Vertex Sparsification}
We are given an edge-dynamic graph $G$ on initially $n$ vertices and $m$ edges, that at all times has maximum degree at most $\Delta$. Applying \Cref{thm:VertexSparsifier} to $G$ reduces the problem to finding short paths in $\Tilde{G}$, a graph of $\Otil(m/k)$ vertices. Although the number of vertices significantly decreases, the graph still contains up to $\Otil(m)$ edges, preventing efficient recursion. To address this, we apply \Cref{thm:EdgeSparsifier} on $\Tilde{G}$ to produce a graph $H$ on $\gamma \cdot m / k$ vertices \emph{and} edges, where $\gamma$ crucially depends only on the parameter $K$ of \Cref{thm:EdgeSparsifier}.

Note that the APSP data structure that \Cref{thm:EdgeSparsifier} requires also only needs to run on a graph of at most $\gamma \cdot m/k$ vertices and edges, so by choosing a large enough size reduction factor $k$, the problem size reduces by a factor of $\gamma/k$, while the approximation factor increases from $\gamma_{apxAPSP<m}$ to $\Otil(\gamma_{apxAPSP<m})^{O(K)}$.\footnote{We only have a valid size reduction if also $\gamma \cdot m/k < n$. We will see that this is indeed the case as we can assume that initially $m = O(n)$.}  Balancing both parameters $k$ and $K$ allows us to bootstrap an APSP data structure with the guarantees from \Cref{thm:main}.

While $H$ is sparse, it might still have high maximum degree, preventing efficient recursion. The sparsity still implies, however, that there cannot be too many high degree vertices. Hence, we carefully perform a few vertex splits in $\Tilde{G}$ by leveraging the $\textsc{ReduceDegree}(\cdot)$ functionality from \Cref{thm:VertexSparsifier} to enforce that $H$ also has small maximum degree.

\begin{algorithm}
Maintain $\tilde{G}$ from \Cref{thm:VertexSparsifier}. \\
Let $\tilde{H}$ be the graph maintained by applying \Cref{thm:EdgeSparsifier} to graph $\tilde{G}$ with parameter $K$. \\
\tcc{ Here $\gamma_{degConstr}$ is a constant fixed later. }
\While{$\exists v \in V(H)$ with $\deg_{H}(v) > 8 \gamma_{degConstr} \cdot \Delta$}{\label{algo:initDynamicAPSP:while}
    $\textsc{ReduceDegree}(E_{H}(v), \gamma_{degConstr})$.
}
Initialize and maintain a recursive dynamic APSP data structure on $H$.
\caption{$\textsc{initDynamicAPSP}()$}
\label{algo:initDynamicAPSP}
\end{algorithm}

The recursive APSP instance that is running on $H$ expects an edge-dynamic graph, while the spanner from \Cref{thm:EdgeSparsifier} is fully-dynamic. To circumvent this issue, we simulate the vertex splits by suitable edge insertions/removals and vertex insertions. As the maximum degree of $H$ gets controlled during the while loops, this simulation can lead to at most an extra factor of $O(\gamma_{degConstr} \cdot \Delta)$ in the number of updates.

\begin{algorithm}
Update $\tilde{G}$ and $H$ accordingly.\\
\While{$\exists v \in V(H)$ with $\deg_{H}(v) > 8 \gamma_{degConstr} \cdot \Delta$}{ \label{algo:UpdateAPSP:while}
    $\textsc{ReduceDegree}(E_{H}(v), \gamma_{degConstr})$. \\
    Update $H$ accordingly. 
}
Forward all updates made to $H$ to the recursive dynamic APSP datastructure that runs on $H$ by simulating the vertex splits accordingly
\caption{$\textsc{MaintainDynamicAPSP}(G, t)$}
\label{algo:UpdateAPSP}
\end{algorithm}
When queried for the distance between two vertices $u$ and $v$, our algorithm checks if the certified pivot function stores the distance, and if this is not the case it recursively asks the smaller instance $\textsc{DynamicAPSP}_{H}$ for the distance between $p(u)$ and $p(v)$ in $H$. 
\begin{algorithm}
\If{$v \in B_{G}\big(u, A\big)$}{
    \Return $\dist(u, v)$
}
\Else{
    \Return $\dist(u, p(u)) + \textsc{DynamicAPSP}_{H}.\textsc{Dist}(p(u),p(v)) + \dist(p(v), v)$ 
}
\caption{$\textsc{Dist}(u,v)$}
\label{alg:query}
\end{algorithm}

\subsection{Analysis}

We first show that the while loops in \Cref{algo:initDynamicAPSP} and \Cref{algo:UpdateAPSP} terminate sufficiently fast. This is the most crucial claim for the analysis. 

\begin{claim}
    If $R_V$ is the dynamic object counting the total number of vertex splits performed by $\textsc{ReduceDegree}()$ in \Cref{algo:initDynamicAPSP} and \Cref{algo:UpdateAPSP}, then $\norm{R_{V}} \leq 2\gamma_{\textit{VS}} (m / k + \norm{G})$. In particular, the while loops always terminate and immediately thereafter $H$ has maximum degree at most $8 \gamma_{degConstr} \cdot \Delta$. The total runtime of all calls to $\textsc{ReduceDegree}()$ is $6\gamma_{\textit{VS}}^2 \gamma_{\textit{ES}} (m + k \norm{G})$.
\end{claim}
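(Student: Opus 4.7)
The plan is to bound $\norm{R_V}$ amortically via a potential that reflects the total $H$-degree in excess of a threshold. Set
\[
\Phi(t) \defeq \sum_{v \in V(H^{(t)})} \max\!\bigl(\deg_{H^{(t)}}(v) - 7\gamma_{degConstr}\Delta,\; 0\bigr).
\]
I will show (i) every call to $\textsc{ReduceDegree}$ drops $\Phi$ by a lot compared to the splits it produces, (ii) only edge insertions into $H$ can increase $\Phi$, and (iii) edge insertions into $H$ are controlled via the spanner recourse bound. Combined with the vertex sparsifier recourse bound applied to $\tilde G$, this will give a self-referential inequality in $\norm{R_V}$ that closes for sufficiently large $\gamma_{degConstr}$.

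For the potential properties: a call on $v$ with $d_i := \deg_H(v) > 8\gamma_{degConstr}\Delta$ replaces $v$ by at most $\gamma_{\textit{VS}} d_i / \gamma_{degConstr}$ new vertices by \Cref{thm:VertexSparsifier}, each carrying at most $\gamma_{degConstr}\Delta$ of the $E_H(v)$-edges and no other edges at that instant, so every descendant contributes $0$ to $\Phi$ and the net decrease equals $\Phi(v) = d_i - 7\gamma_{degConstr}\Delta \geq \gamma_{degConstr}\Delta$. Any other vertex split (from input updates or from spanner propagation of input or $\textsc{ReduceDegree}$ splits) can only decrease $\Phi$, since for $c \geq 0$ the function $\max(\cdot - c, 0)$ is subadditive: $\max(d' - c, 0) + \max(d'' - c, 0) \leq \max(d' + d'' - c, 0)$. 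Edge deletions and isolated vertex insertions also do not raise $\Phi$, while each edge insertion into $H$ raises it by at most $2$. Telescoping and using $\Phi \geq 0$,
\[
\sum_i \bigl(d_i - 7\gamma_{degConstr}\Delta\bigr) \;\leq\; \Phi^{(0)} + 2 \cdot \#\{\text{edge insertions into } H\}.
\]
I then estimate the two external terms: $\Phi^{(0)} \leq 2|E(H^{(0)})| \leq 2\gamma_{\textit{ES}}\gamma_{\textit{VS}} m/k$ (via \Cref{thm:EdgeSparsifier} applied to $\tilde G^{(0)}$, whose size is controlled by \Cref{thm:VertexSparsifier}), and the insertion count is at most $\norm{H} \leq \gamma_{\textit{ES}}\norm{\tilde G} \leq \gamma_{\textit{ES}}(\gamma_{\textit{VS}}\norm{G} + \norm{R_V})$ by chaining \Cref{thm:EdgeSparsifier} with property~\ref{prop:RD} of \Cref{thm:VertexSparsifier}. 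Because each call satisfies $d_i > 8\gamma_{degConstr}\Delta$, the number of calls is at most $\sum_i d_i/(8\gamma_{degConstr}\Delta)$, so the term $7\gamma_{degConstr}\Delta \cdot (\#\text{calls})$ is at most $(7/8)\sum_i d_i$, leaving $\tfrac{1}{8}\sum_i d_i$ on the left. Substituting and using $\norm{R_V} = (\gamma_{\textit{VS}}/\gamma_{degConstr}) \sum_i d_i$ yields an inequality of the form $\norm{R_V} \leq A\,\gamma_{\textit{VS}}(m/k + \norm{G}) + B(\gamma_{\textit{VS}}\gamma_{\textit{ES}}/\gamma_{degConstr})\,\norm{R_V}$; choosing $\gamma_{degConstr}$ as a sufficiently large constant multiple of $\gamma_{\textit{VS}}\gamma_{\textit{ES}}$ absorbs the last term and delivers $\norm{R_V} \leq 2\gamma_{\textit{VS}}(m/k + \norm{G})$.

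The remaining claims are straightforward consequences. The while loops terminate because each call drops $\Phi$ by at least $\gamma_{degConstr}\Delta > 0$ and the total drop has just been shown to be finite; once the loops exit, the loop condition forces $\Delta_{\max}(H) \leq 8\gamma_{degConstr}\Delta$. The runtime bound follows from \Cref{thm:VertexSparsifier}'s per-call cost of $\gamma_{\textit{VS}} k\, d_i$, which summed gives $\gamma_{\textit{VS}} k \sum_i d_i \leq \gamma_{\textit{VS}} k \cdot (\gamma_{degConstr}/\gamma_{\textit{VS}})\,\norm{R_V} \leq 2\gamma_{\textit{VS}}\gamma_{degConstr}(m + k\norm{G})$, and fixing $\gamma_{degConstr} = \Theta(\gamma_{\textit{VS}}\gamma_{\textit{ES}})$ matches the stated $6\gamma_{\textit{VS}}^2\gamma_{\textit{ES}}(m + k\norm{G})$ up to the precise choice of constants. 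The principal obstacle is precisely the self-referential character of the recourse bound — $\norm{R_V}$ drives spanner recourse in $H$, whose edge insertions push more vertices over the threshold and trigger yet more $\textsc{ReduceDegree}$ calls — and the gap between the $7\gamma_{degConstr}\Delta$ inside $\Phi$ and the $8\gamma_{degConstr}\Delta$ loop threshold is precisely the amortized slack needed to close this recursion after tuning $\gamma_{degConstr}$.
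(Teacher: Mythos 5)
Your proposal is correct and follows essentially the same route as the paper's proof: the same excess-degree potential function (the paper uses threshold $\gamma_{degConstr}\cdot\Delta$ rather than your $7\gamma_{degConstr}\cdot\Delta$, but the slack exploited is identical), the same observation that only edge insertions into $H$ raise the potential while splits and deletions do not, the same chaining of the spanner recourse through $\norm{\Tilde{G}} \leq \gamma_{\textit{VS}}\norm{G} + \norm{R_V}$ to obtain a self-referential inequality, and the same closure by taking $\gamma_{degConstr} = \Theta(\gamma_{\textit{VS}}\gamma_{\textit{ES}})$. The only cosmetic difference is that the paper tracks the total number of edges $\norm{R_E}$ passed to $\textsc{ReduceDegree}$ explicitly (your $\sum_i d_i$) and fixes $\gamma_{degConstr} = \lceil 4\gamma_{\textit{VS}}\gamma_{\textit{ES}}\rceil$ to land exactly on the stated constants.
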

\begin{proof}
For tracking progress achieved by vertex splits, we introduce the following potential function
\[\Phi(H) \defeq \sum_{v \in V(H)} \max\{\deg_{H}(v) - \Delta \cdot \gamma_{degConstr}, 0\}.
\]

Let $t$ be the time before we enter the while loop at \Cref{algo:initDynamicAPSP:while} of \Cref{algo:initDynamicAPSP}. Then by \Cref{thm:VertexSparsifier} we have that $|V(\Tilde{G}^{(t)})| \leq \gamma_{\textit{VS}} m /k$ and thus by \Cref{thm:EdgeSparsifier}, we have that $|E(H^{(t)})| \leq \gamma_{\textit{VS}} \gamma_{\textit{ES}} m / k$. 

By the Handshake-Lemma, $\Phi(H^{(t)}) \leq 2 |E(H^{(t)})| \leq 2 \cdot \gamma_{\textit{ES}} \cdot \gamma_{\textit{VS}} \cdot m/k$. Note that whenever an update is made to $H$, the potential can increase by at most $2$. This is as edge deletions and vertex splits can only decrease it, and an edge insertion can increase it by at most $2$. 

Let $R_E$ be the dynamic object counting the number of edges that were passed to $\textsc{ReduceDegree}$, and note that if $\textsc{ReduceDegree}(E', \gamma_{degConstr})$ is called, we have $|E'| > 8 \gamma_{degConstr} \cdot \Delta$ by the while loop condition. 

By the properties of $\textsc{ReduceDegree}()$ in \Cref{thm:VertexSparsifier}, we have that $\norm{R_V} \leq \gamma_{\textit{VS}} / \gamma_{degConstr} \cdot \norm{R_E}$, and that none of the vertices $v'$ that result from the splits to $v$ will be adjacent to more than $\gamma_{degConstr} \cdot \Delta$ of the edges in $E'$. So if we just forwarded these splits to $H$, we would have $\deg_{H}(v') \leq \gamma_{degConstr} \cdot \Delta$, and the potential decreases by at least $7/8 \cdot |E'|$. 

However, forwarding these vertex splits to $H$, \Cref{thm:EdgeSparsifier} will cause further updates to $H$ to maintain the spanner. Fortunately by \Cref{prop:RD} of \Cref{thm:VertexSparsifier} we know that $\norm{H} \leq \gamma_{\textit{ES}} \cdot \norm{\Tilde{G}} \leq \gamma_{\textit{ES}} \cdot \gamma_{\textit{VS}} \cdot \norm{G} + \gamma_{\textit{ES}} \cdot \norm{R_V}$. Let $\gamma = \gamma_{\textit{ES}} \cdot \gamma_{\textit{VS}}$. Putting everything together we show that the potential decreases as follows. 
\begin{align*}
& \Phi(H) \leq 2 \gamma m/k - \frac{7}{8}\norm{R_E} + 2\norm{H} \leq 2 \gamma m/k - \frac{7}{8} \norm{R_E} + 2\gamma\norm{G} + 2\gamma_{\textit{ES}}\norm{R_V} \\
& \leq 2 \gamma m/k - \frac{7}{8} \norm{R_E} + 2\gamma \norm{G} + \frac{2 \gamma_{\textit{ES}} \gamma_{\textit{VS}}}{\gamma_{degConstr}} \norm{R_E}.
\end{align*}
Thus, choosing $\gamma_{degConstr} = \lceil 4 \gamma_{\textit{VS}} \gamma_{\textit{ES}} \rceil$, we get $\Phi(H) \leq 2 \gamma m/k - \frac{3}{8} \norm{R_E} + 2\gamma \norm{G}$.
As $0 \leq \Phi(H)$, this directly implies that $\norm{R_E} \leq 6 \gamma_{\textit{ES}} \gamma_{\textit{VS}}(m/k + \norm{G})$. Remembering that $\norm{R_V} \leq \frac{\gamma_{\textit{VS}}}{\gamma_{degConstr}}\norm{R_E}$ then yields the bound on $\norm{{R}_V}$. 

Since the total runtime is given by $\gamma_{\textit{VS}} \cdot k \cdot \norm{R_E}$, this bound follows directly. 
\end{proof}

We then show a bound on the total runtime of our dynamic APSP algorithm in terms of costs for smaller instances, setting us up for our final recursion.

To do so, assume the recursive APSP instance that is running on $H$ and used in \Cref{thm:EdgeSparsifier} is a $\gamma_{apxAPSP<m}$-approximate, $\Otil(1)$-query APSP. Let $\gamma_l = \Otil(\gamma_{apxAPSP<m})$ be the stretch parameter from \Cref{thm:EdgeSparsifier}.

\begin{theorem}
    \label{thm:recursive_runtime}
    The total combined runtime of \Cref{algo:initDynamicAPSP} and \Cref{algo:UpdateAPSP} over a sequence of up to $m/k$ updates is $m \cdot \gamma_{RC} \cdot \gamma_{l}^{O(K^2)} \Delta k^4 + APSP(\gamma_{RC} m/k, \gamma_{RC} \Delta, \gamma_{RC} \Delta  m/k)$ for some $\gamma_{RC} = \Otil(m^{2/K})$.
\end{theorem}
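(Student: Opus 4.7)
The plan is to account for four sources of work incurred by \Cref{algo:initDynamicAPSP} and \Cref{algo:UpdateAPSP}: (i) maintenance of the KMG-vertex sparsifier $\tilde{G}$, (ii) maintenance of the spanner $H$ on top of $\tilde{G}$, (iii) the calls to \textsc{ReduceDegree}, and (iv) the recursive dynamic APSP instance running on $H$. Because the algorithm processes at most $m/k$ updates in a single run, we may assume throughout that $\norm{G} \leq m/k$.

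First, I would apply \Cref{thm:VertexSparsifier} to bound (i) by $m \gamma_{VS} \Delta k^4 + \norm{G} \gamma_{VS} k^4 \Delta = O(m \gamma_{VS} \Delta k^4)$ and to record the initial sizes $|V(\tilde{G}^{(0)})| = O(\gamma_{VS} m/k)$ and $\Delta_{\max}(\tilde{G}^{(0)}) = O(\gamma_{VS} k \Delta)$. The preceding claim already gives $\norm{R_V} = O(\gamma_{VS} m/k)$, which combined with \Cref{prop:RD} yields $\norm{\tilde{G}} \leq \gamma_{VS}\norm{G} + \norm{R_V} = O(\gamma_{VS}^2 m/k)$; in particular the hypothesis $\norm{\tilde{G}} \leq |V(\tilde{G}^{(0)})|$ required by \Cref{thm:EdgeSparsifier} holds up to constant factors. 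Plugging $\tilde{G}$'s initial parameters into \Cref{thm:EdgeSparsifier} (and invoking the bucketing of \Cref{rem:unit_length} to handle non-unit lengths) bounds the non-APSP part of (ii) by $|V(\tilde{G}^{(0)})| \cdot \Delta_{\max}(\tilde{G}^{(0)}) \cdot \gamma_{ES} \gamma_l^{O(K^2)} = O(\gamma_{VS}^2 \gamma_{ES} m \Delta \gamma_l^{O(K^2)})$, plus an internal $APSP(\gamma_{ES}|V(\tilde{G}^{(0)})|, 3, \gamma_{ES}|V(\tilde{G}^{(0)})|)$ call whose parameters fit inside the outer APSP budget. The preceding claim already supplies the bound $O(\gamma_{VS}^2 \gamma_{ES} m)$ for (iii).

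For (iv), the plan is to count the number of edge-dynamic operations fed into the recursive APSP. By the spanner recourse guarantee, $\norm{H} \leq \gamma_{ES} \norm{\tilde{G}} = O(\gamma_{VS}^2 \gamma_{ES} m/k)$. Since the inner APSP only accepts edge insertions and deletions, each vertex split in $H$ must be simulated by edge operations at a vertex whose degree, by the while loops at \Cref{algo:initDynamicAPSP:while} and \Cref{algo:UpdateAPSP:while}, is at most $O(\gamma_{degConstr} \Delta) = O(\gamma_{VS} \gamma_{ES} \Delta)$. Multiplying, the recursive APSP receives at most $O(\gamma_{VS}^3 \gamma_{ES}^2 m \Delta/k)$ updates on a graph with $O(\gamma_{VS} \gamma_{ES} m/k)$ edges and max degree $O(\gamma_{VS} \gamma_{ES} \Delta)$. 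Setting $\gamma_{RC} = \Otil(\gamma_{VS}^3 \gamma_{ES}^2) = \Otil(m^{2/K})$ (using $\gamma_{VS} = \Otil(1)$ and $\gamma_{ES} = \Otil(m^{1/K})$), summing (i)--(iii) and absorbing the spanner's internal APSP into the outer one (its parameters are dominated since $\gamma_{ES}\gamma_{VS} \leq \gamma_{RC}$) yields the claimed bound.

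The main obstacle I anticipate is the bookkeeping in step (iv): one must verify that the degree cap guaranteed by the \textsc{ReduceDegree} while loops actually holds at every moment a split is forwarded to the inner APSP (so that the split-to-edge simulation costs what we claim), and that the recourse bound $\norm{\tilde{G}} \leq |V(\tilde{G}^{(0)})|$ required by \Cref{thm:EdgeSparsifier} remains valid after the extra splits triggered by \textsc{ReduceDegree}. A secondary technical point is ensuring the spanner's approximation parameter $\gamma_l = \Otil(\gamma_{apxAPSP<m})$ propagates consistently, so that the exponent $\gamma_l^{O(K^2)}$ in the runtime does not hide a further dependence on $m$ that would later spoil the bootstrapping calculation.
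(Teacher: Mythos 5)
Your proposal is correct and follows essentially the same decomposition as the paper's proof: bounding the vertex-sparsifier maintenance via \Cref{thm:VertexSparsifier}, the \textsc{ReduceDegree} work and $\norm{\tilde{G}}$ via the preceding claim, the spanner cost via \Cref{thm:EdgeSparsifier}, and the inner APSP update count by multiplying $\norm{H}$ by the $O(\gamma_{degConstr}\Delta)$ cost of simulating each vertex split. The minor constant-power differences in $\gamma_{VS}$ between your accounting and the paper's are immaterial since $\gamma_{VS} = \Otil(1)$ is absorbed into $\gamma_{RC}$.
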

\begin{proof}
We first show that for at most $m/k$ updates, initializing and maintaining $H$ is efficient. Note that by the previous claim, we can bound the total costs incurred in the while loops by $6\gamma_{\textit{VS}}^2 \gamma_{\textit{ES}} (m + k \norm{G}) \leq 12 \gamma_{\textit{VS}}^2 \gamma_{\textit{ES}} m$. The total costs of initializing and maintaining $\Tilde{G}$ are $2m \cdot \gamma_{\textit{VS}} \Delta k^4$ by \Cref{thm:VertexSparsifier}. Finally, by the previous lemma $\lVert{\Tilde{G}}\rVert \leq \gamma_{\textit{VS}} \norm{G} + \norm{R_V} \leq 3\gamma_{\textit{VS}} \norm{G} + 2 \gamma_{\textit{VS}} m/k \leq 4 \gamma_{\textit{VS}} m/k$, so that we can further bound the runtime of \Cref{thm:EdgeSparsifier} by $\gamma_{\textit{VS}} \gamma_{\textit{ES}} n \Delta \gamma_l^{O(K^2)} + APSP(\gamma_{\textit{ES}} \cdot n, 3, \gamma_{\textit{ES}} \cdot n)$. 

It remains to consider the APSP data structure that we have to run on $H$. First note that by the gurantees of \Cref{thm:VertexSparsifier} and \Cref{thm:EdgeSparsifier}, initially $|E(H)| \leq \gamma_{\textit{VS}} \gamma_{\textit{ES}} m/k$. Secondly, note that as $\lVert{\Tilde{G}}\rVert \leq 4 \gamma_{\textit{VS}} m/k$, we in particular have $\norm{H} \leq 4 \gamma_{\textit{ES}} \gamma_{\textit{VS}} m/k$. However, as $H$ is vertex dynamic, we also have to simulate all the vertex splits occuring in $H$, which can lead to up to $8 \gamma_{degConstr} \Delta$ extra updates per split. Hence the total amount of updates that the APSP data structure receives are up to $O(\gamma_{\textit{VS}}^2 \gamma_{\textit{ES}}^2 m/k)$. By the while loop conditions, we do know that the maximum degree of the graph is always bounded by $8 \gamma_{degConstr} \Delta$. Therefore, the total runtime bound follows. 
\end{proof}

We show that the stretch does not increase too much as the size of the graph increases.

\begin{claim}\label{lma:DynamicAPSPValidity}
\Cref{algo:initDynamicAPSP} provides a worst case stretch of $\Otil(\gamma_{apxAPSP<m})^{O(K)}$.
\end{claim}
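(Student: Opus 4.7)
The plan is to analyze the query in Algorithm~\ref{alg:query} by splitting on the two branches of its conditional. In the branch where $v \in B_G(u, A)$, the certified pivot function (Definition~\ref{def:certified_pivot}) guarantees that the exact $(u,v)$-distance has been pre-computed and stored, so the returned value equals $\dist_G(u,v)$ and the stretch is $1$ in this case.

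In the other branch, $v \notin B_G(u, A)$, so in particular $v \notin B_G(u, p(u))$, and the algorithm returns
\[
\widetilde{\dist}(u,v) \;=\; \dist_G(u, p(u)) \,+\, \textsc{DynamicAPSP}_H.\textsc{Dist}(p(u), p(v)) \,+\, \dist_G(p(v), v).
\]
I would first establish the lower bound $\widetilde{\dist}(u,v) \geq \dist_G(u,v)$ via the triangle inequality, using that $\tilde{G}$ never underestimates distances (Theorem~\ref{thm:VertexSparsifier}), that $H \subseteq \tilde{G}$ is a spanner (Theorem~\ref{thm:EdgeSparsifier}), and that the inner APSP data structure also never underestimates; hence $\textsc{DynamicAPSP}_H.\textsc{Dist}(p(u),p(v)) \geq \dist_H(p(u),p(v)) \geq \dist_{\tilde{G}}(p(u),p(v)) \geq \dist_G(p(u),p(v))$, while the two boundary terms are exact by certification.

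For the upper bound, I would chain the stretches of the three layers. Claim~\ref{thm:pivotdist} gives $\dist_G(p(u), p(v)) \leq 4 \dist_G(u,v)$, and its proof incidentally also yields $\dist_G(u, p(u)) \leq \dist_G(u,v)$ and $\dist_G(v, p(v)) \leq 2 \dist_G(u,v)$. Property~\ref{prop:workhorseDistancePreserve} of Theorem~\ref{thm:VertexSparsifier} then inflates the pivot-to-pivot distance by at most $\gamma_{\textit{VS}}$ in $\tilde{G}$; Theorem~\ref{thm:EdgeSparsifier} inflates it by a further $\gamma_l^{O(K)}$ in the spanner $H$; and the recursive APSP inflates it by at most a factor $\gamma_{apxAPSP<m}$. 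Combining, the middle term is bounded by $4 \gamma_{apxAPSP<m} \gamma_l^{O(K)} \gamma_{\textit{VS}} \cdot \dist_G(u,v)$, so the returned value is at most $\Otil(\gamma_{apxAPSP<m} \cdot \gamma_l^{O(K)}) \cdot \dist_G(u,v) = \Otil(\gamma_{apxAPSP<m})^{O(K)} \cdot \dist_G(u,v)$, using $\gamma_l = \Otil(\gamma_{apxAPSP<m})$.

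There is no deep mathematical obstacle; the proof is essentially careful bookkeeping of the composition of approximation factors along $G \to \tilde{G} \to H$ and through the recursive call. The delicate point is to verify that at each step the inputs live in the right vertex set: the pivots $p(u),p(v)$ lie in $A \subseteq V(\tilde{G})$ so the vertex-sparsifier distance guarantee applies to them, $H$ is maintained as a spanner of $\tilde{G}$ so the spanner stretch applies on the same pair, and the recursive instance runs on $H$ so $p(u),p(v)$ are among its vertices. One also relies on $p$ being a \emph{certified} pivot function, so that $\dist_G(u,p(u))$ and $\dist_G(p(v),v)$ are known exactly and contribute no further error.
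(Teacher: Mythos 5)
Your proposal is correct and follows essentially the same route as the paper: case split on $v \in B_G(u,A)$, then chain Claim~\ref{thm:pivotdist} with the stretch factors of $\tilde{G}$, $H$, and the recursive APSP instance. If anything, you are slightly more careful than the paper's own write-up, which bounds $\dist_H(p(u),p(v))$ but does not explicitly account for the extra $\gamma_{apxAPSP<m}$ factor from the recursive \textsc{Dist} call nor for the two detour terms $\dist_G(u,p(u))$ and $\dist_G(p(v),v)$ — all of which you correctly fold into the final $\Otil(\gamma_{apxAPSP<m})^{O(K)}$ bound.
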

\begin{proof}
    If $v \in B_G(u, A)$ there is nothing to show as then by \Cref{thm:VertexSparsifier} we have already pre-computed the exact distance between $u$ and $v$. 
    
    But if $v \not \in B_G(u, A)$, then by \Cref{thm:pivotdist}, we have that $\dist_G(p(u), p(v)) \leq 4 \cdot \dist_G(u, v)$. As $p(u), p(v) \in A$, we have $\dist_{H}(p(u), p(v)) \leq \gamma_l^{O(K)} \cdot \dist_{\Tilde{G}}(p(u), p(v)) \leq \gamma_l^{O(K)} \cdot \gamma_{\textit{VS}} \cdot \dist_{G}(p(u), p(v)) = \Otil(1) \cdot (\Otil(1) \cdot \gamma_{apxAPSP})^{O(K)} \cdot \dist_G(p(u), p(v)) = \Otil(\gamma_{apxAPSP})^{O(K)} \cdot \dist_G(p(u), p(v))$. 
\end{proof}

\begin{remark}
    We can directly extend this to receive a procedure $\textsc{Path}(u,v)$ that provides a $u$-$v$ path $P$ with $|P| \leq \Otil(\gamma_{apxAPSP<m})^{O(K)} \cdot \dist_G(u, v)$ and runs in time $\Otil(|P|)$. This is because \Cref{thm:VertexSparsifier} also explicitly stores exact shortest paths between vertices $u$ and their pivots $p(u)$. 
\end{remark}

We conclude this section by proving our main result, which we restate for convenience.

\mainTheorem*

\begin{proof}
First, by replacing vertices with Binary-Search-Trees as mentioned at the end of \Cref{fact:BSTred}, we can assume that $\Delta_{\max}(G) \leq 3 = \Delta_0$ and $|E(G)| = O(|V(G)|)$.

We recursively apply \Cref{algo:initDynamicAPSP} and \Cref{algo:UpdateAPSP}. We re-start both algorithms whenever the input graph has received more than $m/k$ updates. We fix the parameters as $k = m^{1 / (\log \log m)^{1/4}}$ and $K = (\log \log m)^{3/4}$. If $\Lambda := \log_k(m) = (\log \log m)^{1/4}$, then, as $\gamma_{RC} \in \Otil(m^{2/K})$, we in particular have $\gamma_{RC}^{\Lambda} = o(k)$. By \Cref{thm:recursive_runtime}, at each level of the recursion, the problem size decreases by a factor of $\gamma_{RC} /k$. Once a graph has size $\leq k$, we are done as then \Cref{thm:VertexSparsifier} directly computes all exact shortest paths. Hence, the maximal recursion depth of the algorithm is $\Lambda$. Repeatedly using \Cref{lma:DynamicAPSPValidity} then yields a worst case stretch of $\gamma_{apxAPSP} = \Otil(1)^{O(K)^{\Lambda}} = m^{1/\log(m)^{1-o(1)}} = m^{o(1)}$. In particular, we also have $\gamma_l = \Otil(\gamma_{apxAPSP})^{O(K^2)} = m^{o(1)}$.

It remains to analyse the runtime. To do so, let us denote by $\Delta_i$ the maximum degree at recursion depth $i$, and note that $\Delta_i \leq \gamma^i \Delta_0$, again by \Cref{thm:recursive_runtime}. Applying the same theorem we obtain
\begin{align*}
        & APSP(m, \Delta_0, m) = k APSP(m, \Delta_0, m/k) \\
        & \leq k (m \cdot \gamma_{RC} \cdot \gamma_l^{O(K^2)}\Delta_0 k^4 + APSP(\gamma_{RC} m/k, \gamma_{RC} \Delta_0, \gamma_{RC} \Delta_0 m/k)) \\
        & \overset{(i)}{=} m \cdot \gamma_{RC} \cdot \gamma_l^{O(K^2)}\Delta_0 k^5 + k^2 \Delta_0 APSP(\gamma_{RC} m/k, \Delta_1, \gamma_{RC} m /k^2) \\ & \leq m \cdot \gamma_{RC} \cdot \gamma_l^{O(K^2)}k^5 \cdot \big(\Delta_0 + \Delta_0 \Delta_1\big) + k^3 \Delta_0 \Delta_1 APSP(\gamma_{RC}^2 m/k^2, \gamma_{RC} \Delta_1, \gamma_{RC}^2 m/k^2) \\
        & \leq \cdots \leq \sum_{i=0}^{\Lambda} m \cdot \gamma_{RC} \cdot \gamma_l^{O(K^2)} k^5 \prod_{j \leq i} \Delta_j \leq \sum_{i=0}^{\Lambda}  m \cdot \gamma_l^{O(K^2)} k^5 \prod_{j \leq i} \gamma_{RC}^{j+1} \Delta_0  \\
        & \overset{(ii)}{\leq} 3 \Lambda m^{1+o(1)}k^5 \cdot \gamma_{RC}^{O(\Lambda^2)} = m^{1+o(1)+O(1/(\log \log m)^{1/4})} = \cdot m^{1 + o(1)},
    \end{align*}
    where $(i)$ follows as we re-start the algorithm after $\gamma_{RC} m / k^2$ updates are processed and in $(ii)$ we used that $\gamma_l^{O(K^2)} = m^{o(1)}$ as was proven above. This concludes the runtime analysis. 
    
    The query time is $O(1)$ per level, so the total query runtime is $O(\Lambda) = O(\log \log m)$.
\end{proof}

\section{Dynamic Spanner via APSP}
\label{sec:spanner}
In this section we prove \Cref{thm:EdgeSparsifier}, assuming access to a $\gamma_{apxAPSP}$-approximate, $\Otil(1)$-query APSP data structure (See \Cref{def:apsp}).
This theorem gives a computationally efficient version of the low-recourse spanner of \Cref{thm:RecourseEdgeSparsifier}.

\paragraph{Using APSP queries in a spanner.}
When given access to an APSP data structure, a simple strategy to improve the runtime of the spanner from \Cref{thm:RecourseEdgeSparsifier} is to replace the distance queries with calls to the data structure. Since we re-insert all the edges in the graph after every edge deletion/vertex split, processing such an update unfortunately still costs time at least linear in the total number of edges. 

\paragraph{Controlling congestion.} To avoid having to re-insert all the edges, we can try remembering the paths that witness short distances for the edges in the graph $G$ but not its spanner $H$. Then, if an edge is deleted from $H$ because of a deletion to $G$, we only have to re-insert all edges that have a broken path. We call such a map from edges to paths an edge embedding. 

Ideally, we want this edge embedding to ensure that 
every edge and every vertex of $H$ only has few edges of $G$ embedding through it.
We call this low \emph{congestion}.
When the edge embedding has low congestion, we can hope to efficiently repair it as $G$ and $H$ change.
It turns out that we can fairly easily control the number of edges of $G$ that embed into each edge of $H$. 
Unfortunately, $H$ may contain high-degree vertices, and naively embedding edges causes too high congestion of these vertices.
We therefore build $K + 1$ separate spanners $H_0, \ldots, H_K$, where the spanner at level $0$ does not constrain the edge congestion very much, but limits the degree. As $i$ increases, we allow the degree to become larger while trading it off with a stricter bound on the edge congestion. 

\paragraph{Controlling recourse with \emph{patching}.}
Once we have a spanner with bounded congestion, the number of edges to check after a broken embedding is bounded.
However, we also want to ensure that our spanner does not undergo too many insertions after each edge deletion.
To establish this, we use a fundamentally different strategy than \Cref{thm:RecourseEdgeSparsifier}.
We will explain how to deal with edge deletions, but this strategy
seamlessly extends to dealing with
vertex splits as well.

We will describe a strategy that allows us to repair the spanner $H$ after $G$ and $H$ undergo deletion of a (potentially large) set of edges $E'$.
Whenever a set of edges $E'$ is deleted from the spanner $H$, we project all the edges of $G$ with broken embedding paths onto the set of vertices incident to deleted edges by mapping each endpoint of the edge to the first vertex on the path that is incident to a deletion. We call the resulting `projected' graph $J'$ and compute a spanner $J$ of $J'$. 
Whenever an edge is chosen to be in the spanner $J$, we use the corresponding edge of $G$ to repair broken embeddings while carefully ensuring that $J$ is picked such that the congestion of the resulting edge embedding is not too high. We say that these edges patch the spanner after being projected back. See \Cref{fig:patch} for a small illustrative example. 

\begin{figure}[H]
    \centering
    \includegraphics[width=0.65\linewidth]{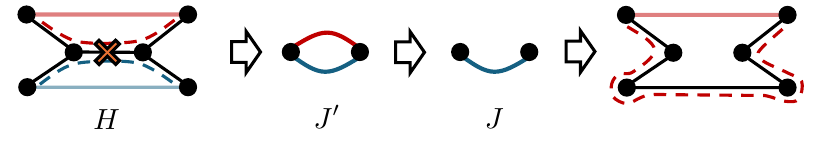}
    \caption{The black graph $H$ is a spanner of the graph $G$ that also contains the red and blue edge. After the central edge is removed, the red and blue edge are projected onto the graph $J'$ supported on the two vertices incident to the deletion. Then this graph is sparsified, and the spanner $H$ is patched with the remaining blue edge. The embedding path of the red edge is repaired using the previous blue path and the new edge. When sparsifying $J'$ we ensure that the extra vertex congestion caused by these repaired paths in $H$ only increases by a small sub-polynomial factor.}
    \label{fig:patch}
\end{figure}

Crucially, the number of edges in $J$ scales in the number of vertices in $J'$,
which in turn is bounded by $2 |E'|$.
This is generally much smaller than the number of edges in $J'$.
This ensures the recourse of repairing the spanner $H$ after the deletion of $E'$ is small.
We refer to this strategy for repairing $H$ as \emph{patching} the spanner.

\paragraph{Batching and patching.}

One obstacle remains:
each time we patch our spanner, the congestion of our edge embedding increases slightly, and the distance approximation of the spanner worsens by a polylogarithmic factor.
This means we have to ensure the sequential depth of our patching steps is very small.
Fortunately, we can do this with a standard batching scheme.
To this avail, consider a balanced tree with branching factor $n^{o(1)}$ that has the ordered sequence of updates as leaves. We then obtain a patched spanner by following the path to the newest update in the tree, and whenever we are at a node, we collect all updates to the left into a batch. This procedure divides the updates into batches such that the sequential depth is low, and large batches change their contents infrequently. The total time overhead is only a $n^{o(1)}$ factor over  the time spent to apply a single large batch.

\subsection{Patching Spanner}
\label{sec:comp_spanner}
\paragraph{Edge Embeddings. } As outlined above, we introduce edge embeddings to achieve an efficient spanner. These map all edges in an original graph $G$ to paths in its spanner $H \subseteq G$.

\begin{definition}[Edge Embedding]
Given two graphs $H, G$ and an injective map $\Pi: V(G) \rightarrow V(H)$, we call a map $\Pi_{G \rightarrow H}$ an edge embedding of $G$ into $H$ if it maps each edge $e = (u,v) \in E(G)$ into a $xy-$path $\Pi_{G \rightarrow H}(e)$ in $H$, where $x = \Pi(u), y = \Pi(v)$. 

For an embedding $\Pi_{G \rightarrow H}$, we then define its length as $\length(\Pi_{G \rightarrow H}) \defeq \max_{e \in E(G)} |\Pi_{G \rightarrow H}(e)|$. If $G'$ is a subgraph of $G$, we let $\Pi_{G' \rightarrow H}$ denote the embedding where we simply inherit the embedding paths for the subset of edges in $G'$.
\end{definition}
Now, if $H\subseteq G$ and $\Pi_{G \rightarrow H}$ is an embedding with $\length(\Pi_{G \rightarrow H}) \leq \alpha$, then this immediately acts as a certificate that $H$ is an $\alpha$-spanner of $G$. The strategy of our dynamic algorithm is be to explicitly maintain such an embedding. The advantage is that whenever an update is made to $G$ and forwarded to $H$, we only need to deal with those edges whose embedding paths are affected by the updates forwarded to $H$. By enforcing low \emph{vertex congestion} of our embeddings, we can keep the number of such edges small enough.

\begin{definition}[Edge and Vertex Congestion]
    For an edge $e^{H} \in E(H)$, we let $\econg(\Pi_{G \rightarrow H}, e^{H})$ denote the total amount of times the edge $e^H$ appears on paths in $\Pi_{G \rightarrow H}$\footnote{Since the embedding paths are not guaranteed to be simple, a path can contribute more than once to edge and vertex congestion.} and $\econg(\Pi_{G \rightarrow H}) \defeq \max_{e^H \in E(H)} \econg(\Pi_{G \rightarrow H}, e^H)$. Similarly, we let $\vcong(\Pi_{G \rightarrow H}, v^{H})$ denote the total amount of times the vertex $v^H$ appears on paths in $\Pi_{G \rightarrow H}$ and $\vcong(\Pi_{G \rightarrow H}) \defeq \max_{v^H \in V(H)} \vcong(\Pi_{G \rightarrow H}, v^H)$. 
\end{definition}

In this section, we present an algorithm that can also be used to initialize a spanner $H$ and embedding from $G$ into $H$ with $\vcong(\Pi_{G \rightarrow H}) \approx \Delta_{max}(G)$. This means that after the first update is made to $G$, we only need to deal with $\approx \Delta_{max}(G)$ edges.

Our algorithm can, however, do more than that. In \Cref{sec:single_batch}, we show how to repair our embeddings by finding short paths between pairs of so called \emph{touched vertices}. These paths will be of the form $[\Pi_{J' \rightarrow H} \circ \Pi_{J \rightarrow J'}](e)$, where $J'$ is an auxiliary graph that contains touched vertices as vertex set, and $J$ is an edge sparsifier of $J'$. Generally the vertex set of $J'$ will be significantly smaller than $|V(H)|$. Our algorithm will also be to find such a spanner $J$ while simultaneously making sure that the composed paths are still short and have relatively small vertex congestion.

\paragraph{The Static Patching Spanner Theorem. } We state the main theorem of this section. On an initial read, the reader might want to assume that $H = J'$ and $\Pi_{H \rightarrow J'}$ is the identity map, in which case the algorithm $\textsc{Sparsify}$ is simply computing a low vertex congestion spanner $J$ of $J'$ (See \Cref{lma:initspanner}). Later, we will need that $H$ can be a much larger graph than $J$.

\begin{theorem}[Patching Spanner]\label{lma:CompSpanner} Given unweighted, undirected graphs $H$ and $J'$, such that $V(J') \subseteq V\left(H\right)$, and an embedding $\Pi_{J' \rightarrow H}$ from $J'$ into $H$ there is a deterministic algorithm $\operatorname{SPARSIFY}\left(H, J', \Pi_{J' \rightarrow H}\right)$, that returns a sparsifier $J \subseteq J'$ with an embedding $\Pi_{J' \rightarrow J}$ from $J'$ to $J$ such that:
	\begin{enumerate}\item\label{lma:CompSpanner:vcong} $\vcong\left(\Pi_{J' \rightarrow H} \circ \Pi_{J' \rightarrow J}\right) \leq \Otil(\gamma_{apxAPSP}) \cdot \length\left(\Pi_{J' \rightarrow H}\right) \cdot\left(\vcong\left(\Pi_{J' \rightarrow H}\right)+\Delta_{\max }(J')\right)$, 
	\item\label{lma:CompSpanner:length}  $\length \left(\Pi_{J' \rightarrow H} \circ \Pi_{J' \rightarrow J}\right) \leq \Otil(\gamma_{apxAPSP}) \cdot \length\left(\Pi_{J' \rightarrow H}\right)$, and
	\item\label{lma:CompSpanner:sparsity} $|E(J)|=\widetilde{O}\left(|V(J')|\right)$.
	\end{enumerate}
	 The algorithm runs in time $\widetilde{O}\left(|E(J')| \cdot \length\left(\Pi_{J' \rightarrow H}\right)\cdot \gamma_{apx\textsc{APSP}}\right) + APSP(\Otil(|V(J')|), 3, \Otil(|V(J')|))$.
\end{theorem}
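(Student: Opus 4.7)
The plan is to realize a greedy spanner where the supplied APSP data structure replaces the explicit Dijkstra calls of the classical greedy construction. Starting from $J = (V(J'), \emptyset)$, I would instantiate a $\gamma_{apxAPSP}$-approximate APSP data structure on a degree-reduced copy of $J$, using the binary-search-tree replacement from \Cref{fact:BSTred} to satisfy the max-degree-$3$ requirement in \Cref{def:apsp}. Then I iterate through the edges $e = (u,v)$ of $J'$, querying $\widetilde{\dist}_J(u,v)$: if the estimate exceeds a threshold $T := \Theta(\gamma_{apxAPSP} \cdot \log |V(J')|)$, we insert $e$ into $J$ and set $\Pi_{J' \rightarrow J}(e) := e$; otherwise we retrieve a $uv$-path $P$ in $J$ of length at most $T$ via the APSP path query and set $\Pi_{J' \rightarrow J}(e) := P$.

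The sparsity property (item~\ref{lma:CompSpanner:sparsity}) would follow from a standard girth argument: each time an edge is added its endpoints were at true distance exceeding $T/\gamma_{apxAPSP} = \Omega(\log|V(J')|)$ in the current $J$, so the final $J$ has girth $\Omega(\log|V(J')|)$ and hence $|E(J)| = \tilde{O}(|V(J')|)$ by the Moore bound. The length bound (item~\ref{lma:CompSpanner:length}) is immediate: every non-added edge has an embedding path of length at most $T$ in $J$, so composing with $\Pi_{J' \rightarrow H}$ produces paths of length at most $T \cdot \length(\Pi_{J' \rightarrow H}) = \tilde{O}(\gamma_{apxAPSP}) \cdot \length(\Pi_{J' \rightarrow H})$.

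The main obstacle is the composed vertex congestion bound (item~\ref{lma:CompSpanner:vcong}). Expanding the definitions,
\[
\vcong(\Pi_{J' \rightarrow H} \circ \Pi_{J' \rightarrow J}, v^H) \;=\; \sum_{e' \in J} \econg(\Pi_{J' \rightarrow J}, e') \cdot \#\{v^H \in \Pi_{J' \rightarrow H}(e')\},
\]
and splitting the $\Pi_{J' \rightarrow H}$-occurrences of $v^H$ into \emph{internal} ones (whose total across $e' \in J$ is at most $\vcong(\Pi_{J' \rightarrow H})$, since $J \subseteq J'$) and \emph{endpoint} ones (which contribute one term per edge of $J$ incident to $v^H$, at most $\deg_J(v^H) \leq \Delta_{\max}(J')$), the task reduces to bounding $\max_{e' \in J} \econg(\Pi_{J' \rightarrow J}, e')$ by $\tilde{O}(\gamma_{apxAPSP}) \cdot \length(\Pi_{J' \rightarrow H})$. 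APSP on its own provides no such guarantee---many approximately shortest paths may route through the same bottleneck edge---so I would augment the algorithm to maintain current edge-congestion counters, and whenever the APSP-returned path $P$ contains an edge whose congestion would exceed this threshold, insert $e$ directly into $J$ rather than embed through $P$. Each such forced insertion can be charged to a saturated edge of $J$, so the number of forced insertions is dominated by a constant times the girth-based sparsity count and item~\ref{lma:CompSpanner:sparsity} continues to hold.

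The runtime analysis combines three contributions: the APSP initialization and up to $\tilde{O}(|V(J')|)$ edge insertions on the degree-reduced $J$ contribute the $APSP(\tilde{O}(|V(J')|), 3, \tilde{O}(|V(J')|))$ term; the $|E(J')|$ APSP distance queries contribute $\tilde{O}(|E(J')|)$; and for each non-added edge we perform an APSP path query of length $\tilde{O}(\gamma_{apxAPSP})$ whose composition with $\Pi_{J' \rightarrow H}$ (needed to update the $H$-side congestion bookkeeping used in the check above) costs $\tilde{O}(\length(\Pi_{J' \rightarrow H}))$ per edge of $P$, giving the $\tilde{O}(|E(J')| \cdot \length(\Pi_{J' \rightarrow H}) \cdot \gamma_{apxAPSP})$ overhead claimed in the theorem.
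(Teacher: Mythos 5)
Your single-layer greedy-plus-congestion-cap scheme diverges from the paper's construction, and the divergence is where the proof breaks: the claim that ``each forced insertion can be charged to a saturated edge of $J$'' does not bound the number of forced insertions. A single saturated edge can be charged arbitrarily many times -- once it is saturated, the APSP structure may keep returning (approximately) shortest paths through it, and every subsequent edge of $J'$ whose path crosses it is force-inserted. Concretely, take $J'=H$ with two hubs $a,b$ joined by an edge, $n/2$ leaves on each hub, and a complete bipartite graph between the two leaf sets: after the edge $(a,b)$ absorbs $T_c=\Otil(\gamma_{apxAPSP})\cdot\length(\Pi_{J'\to H})$ embeddings, all remaining $\Omega(n^2)$ bipartite edges are force-inserted, so $|E(J)|$ is nowhere near $\Otil(|V(J')|)$ and item~\ref{lma:CompSpanner:sparsity} fails. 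Even if you patch this by deleting saturated edges from the APSP query graph (so that the girth argument re-applies to the replacement edges), the number of edges added due to saturation is at least (total embedding mass)$/T_c \approx |E(J')|\cdot\gamma_{apxAPSP}/T_c$, which with your $\Delta$-free threshold is $\approx|E(J')|/\length(\Pi_{J'\to H})$ -- still up to $|E(J')|$, not $\Otil(|V(J')|)$.

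The deeper issue is that no single edge-congestion threshold works. To make the number of saturated edges $\Otil(|V(J')|)$ you need $T_c\gtrsim\gamma_{apxAPSP}\cdot\Delta$ where $\Delta=\vcong(\Pi_{J'\to H})+\Delta_{\max}(J')$ (since $|E(J')|$ can be $\Delta\,|V(J')|$), but your own reduction of item~\ref{lma:CompSpanner:vcong} to $\max_{e'}\econg(\Pi_{J'\to J},e')\le\Otil(\gamma_{apxAPSP})\cdot\length(\Pi_{J'\to H})$ forbids any $\Delta$-dependence in the cap; with $T_c\approx\gamma_{apxAPSP}\Delta$ you would only get a composed vertex congestion of order $\gamma_{apxAPSP}\Delta^2$, quadratic in $\Delta$. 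The paper resolves exactly this tension with $K=\Theta(\log(|V(J')|\Delta))$ layers: layer $i$ uses edge-congestion cap $\approx\gamma_{apxAPSP}\Delta\log n/2^i$ paired with a composed vertex-congestion cap $\approx 2^i\cdot\length(\Pi_{J'\to H})$, so the product (which is what enters \Cref{lma:vconggood}) is the same at every layer, while a counting argument (\Cref{lma:spannerasmpt}) shows the number of edges reaching layer $i$ halves each time, keeping every layer at $O(|V(J')|)$ edges. Your decomposition of the composed congestion into internal and endpoint occurrences is fine, and the length, girth, and runtime arguments are standard, but without the layering (or an equivalent mechanism trading edge congestion in $J$ against vertex congestion in $H$) items~\ref{lma:CompSpanner:vcong} and~\ref{lma:CompSpanner:sparsity} cannot both be achieved by your algorithm.
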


An alternate implementation of a composition spanner based on expander decompositions was given in \cite{maxflow}. Setting $H = J'$ and $\Pi_{J' \rightarrow H} = \text{Id}_{H \rightarrow H}$ immediately yields the following corollary.

\begin{corollary}\label{lma:initspanner}
Given an unweighted, undirected $m$-edge, $n$-vertex graph $G = (V, E)$, there is a deterministic algorithm $\textsc{Sparsify(G)}$, that returns a $\Otil(\gamma_{apxAPSP})$-spanner $H$ of $G$ with $|E(G)| = \Otil(n)$. It further returns an embedding $\Pi_{G \rightarrow H}$ from $H$ to $G$ with $\length(\Pi_{G \rightarrow H}) = \Otil(\gamma_{apxAPSP})$ and $\vcong(\Pi_{G \rightarrow H}) = \Otil(\Delta_{\max}(G))$. The algorithm runs in time $\Otil(m \gamma_{apxAPSP}) + APSP(\Otil(n), 3, \Otil(n))$.
\end{corollary}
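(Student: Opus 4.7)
The plan is to derive the corollary by directly invoking \Cref{lma:CompSpanner} with the trivial choice of parameters. Concretely, I would set $J' := G$, $H := G$, and take $\Pi_{J' \to H}$ to be the identity embedding $\text{Id}$ that sends each edge $e = (u,v) \in E(G)$ to the single-edge path $(u,v)$ in $H$. The output spanner and embedding of the corollary will then be $H^{\text{out}} := J$ and $\Pi_{G \to H^{\text{out}}} := \Pi_{J' \to J}$, i.e.\ viewed as paths in $J \subseteq G$ (which is the same as $\Pi_{J' \to H} \circ \Pi_{J' \to J}$ since $\Pi_{J' \to H}$ is the identity on edges).

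First I would compute the parameters of the identity embedding: clearly $\length(\text{Id}) = 1$ since each image is a single edge, and $\vcong(\text{Id}) = \Delta_{\max}(G)$ since each vertex $v$ occurs as an endpoint of the image path exactly $\deg_G(v)$ times. Then I would plug these values into the three conclusions of \Cref{lma:CompSpanner}. From item \ref{lma:CompSpanner:length} we obtain $\length(\Pi_{G \to H^{\text{out}}}) \leq \Otil(\gamma_{apxAPSP}) \cdot 1 = \Otil(\gamma_{apxAPSP})$. From item \ref{lma:CompSpanner:vcong} we get $\vcong(\Pi_{G \to H^{\text{out}}}) \leq \Otil(\gamma_{apxAPSP}) \cdot 1 \cdot (\Delta_{\max}(G) + \Delta_{\max}(G)) = \Otil(\gamma_{apxAPSP} \cdot \Delta_{\max}(G))$, matching the claimed bound (with $\gamma_{apxAPSP}$ absorbed suitably). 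From item \ref{lma:CompSpanner:sparsity} we conclude $|E(H^{\text{out}})| = \Otil(|V(G)|) = \Otil(n)$.

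Next I would verify the spanner property. Since each edge $(u,v) \in E(G)$ admits a path of length at most $\Otil(\gamma_{apxAPSP})$ in $H^{\text{out}}$, and $H^{\text{out}} \subseteq G$, concatenating such paths along any shortest path in $G$ gives $\dist_{H^{\text{out}}}(u,v) \leq \Otil(\gamma_{apxAPSP}) \cdot \dist_G(u,v)$ for every pair $u,v$, while $\dist_G \leq \dist_{H^{\text{out}}}$ trivially since $H^{\text{out}} \subseteq G$. Hence $H^{\text{out}}$ is an $\Otil(\gamma_{apxAPSP})$-spanner of $G$.

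Finally, the runtime bound follows by substitution: the time bound of \Cref{lma:CompSpanner} specializes to $\Otil(|E(G)| \cdot 1 \cdot \gamma_{apxAPSP}) + APSP(\Otil(n), 3, \Otil(n)) = \Otil(m \gamma_{apxAPSP}) + APSP(\Otil(n), 3, \Otil(n))$, exactly as claimed. Since the entire argument is a direct specialization of \Cref{lma:CompSpanner}, no significant obstacle arises; the only minor point to double-check is the bookkeeping of $\vcong(\text{Id})$ (treating endpoints as occurrences of the path) and that $\text{Id}$ is a valid embedding under the definition, both of which are immediate from the definitions.
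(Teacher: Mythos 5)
Your proposal is correct and is exactly the paper's own derivation: the paper obtains the corollary by "setting $H = J'$ and $\Pi_{J' \rightarrow H} = \mathrm{Id}$" in \Cref{lma:CompSpanner}, which is precisely your specialization with $\length(\mathrm{Id}) = 1$ and $\vcong(\mathrm{Id}) = \Delta_{\max}(G)$. Your observation that the vertex-congestion bound really comes out as $\Otil(\gamma_{apxAPSP} \cdot \Delta_{\max}(G))$ rather than $\Otil(\Delta_{\max}(G))$ is a fair reading of the theorem; the corollary's statement appears to absorb that factor into the $\Otil$.
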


\paragraph{Algorithm.} Let $\Delta \defeq \vcong\left(\Pi_{J' \rightarrow H}\right)+\Delta_{\max }(J')$. Then the algorithm consists of layers  $i = 0, \ldots, K \defeq 2 \log(|V(J')|\Delta)$. Each layer $i$ maintains a spanner $J_i$, and a subgraph $\widehat{J}_i \subseteq J_i$, which only contains low congestion edges. 

It considers the edges of $J'$ in arbitrary order, and each edge is first passed to layer $0$. Whenever layer $i$ is passed an edge, it either becomes responsible for the edge, or it passes it on to layer $i + 1$. It then distinguishes three cases. 
\begin{itemize}
    \item If there is a short path $P$ in $\widehat{H}_i$, layer $i$ becomes responsible for the edge and the path $P$ gets added to $\Pi$. Then, if any edge reaches congestion $2 \gamma_{\text{apxAPSP}} \cdot \Delta \cdot \log(n)/2^i$, it is removed from $\widehat{J}_i$ (but not $J_i$). 
    \item Else, if $\forall v \in \Pi_{J' \rightarrow H}(e)$ we have $ \vcong(\Pi_{J_i \rightarrow H}, v) \leq 16 \cdot 2^i \cdot  \length\left(\Pi_{J' \rightarrow H}\right)$ then the edge is added to $J_i$. 
    \item Otherwise, the edge is passed onto layer $i + 1$
\end{itemize}
Finally, our algorithm returns the graph $J = \bigcup_{i = 0}^K J_i$ and embedding $\Pi$. 
See \Cref{algo:sparsify} for detailed pseudocode of our algorithm  

\begin{algorithm}[h]
\caption{\textsc{Sparsify}$\left(H, J', \Pi_{J' \rightarrow H}\right)$}
\label{algo:sparsify}
$\Delta \leftarrow \vcong\left(\Pi_{J' \rightarrow H}\right)+\Delta_{\max }(J')$ \\
$K \leftarrow 2\log(|V(J')| \Delta)$ \\
$J_0, J_1, \ldots, J_K \leftarrow (V, \emptyset)$ \tcp*{Spanners at layers $i = 0, \ldots K$}
$\hat{J}_0, \hat{J}_1, \ldots, \hat{J}_K \leftarrow (V, \emptyset)$ \tcp*{Non congested subgraphs of the spanners}
For every $i = 0, \ldots, K$, initialize \textsc{APSP} datastructure $\mathcal{D}_i$ on $\hat{J}_i$. \\
For all $v \in V(J')$ and $i = 0, \ldots, K$, maintain $\operatorname*{vcong}(\Pi_{J_i \rightarrow H}, v)$. \\
\ForEach{$e = (u, v)  \in E(J')$}{ \label{alg:spanner:main_loop}
	$i \leftarrow 0 $ \\
	\While{$\Pi(e) = \emptyset$} {
	\If{$\mathcal{D}_i.\textsc{Dist}(u, v) \leq 2 \cdot \gamma_{\text {apx\textsc{APSP}}}\cdot \log(n) $}{
		$P \leftarrow \mathcal{D}_i.\textsc{Path}(u, v)$;
		$\Pi(e) \leftarrow P$ \\
        \ForEach{$e' \in P$ s.t. $\econg(\Pi, e') \geq 2\gamma_{\text {apx\textsc{APSP} }} \cdot \Delta \cdot  \log(n) / 2^i $}{ \label{alg:spanner:del_edges}
			Delete $e'$ from $\hat{J}_i$ and from the data structure $\mathcal{D}_i$. 
		}
	} 
    
	\ElseIf{$\forall v \in \Pi_{J' \rightarrow H}(e): \operatorname*{vcong}(\Pi_{J_i \rightarrow H}, v) \leq 32 \cdot 2^i \cdot  \length\left(\Pi_{J' \rightarrow H}\right)$}{\label{alg:spanner:vcong_if}
		Add $e$ to $J_i$ and $\hat{J}_i$. Update $\mathcal{D}_i$ accordingly.  \\
		$\Pi(e) \leftarrow e$;
	}\Else{
        $i \leftarrow i+1$; \\
    }
    }
}
$J \leftarrow \bigcup_{i=0}^K J_i$ \\
\Return{$J$, $\Pi$}
\end{algorithm}

\begin{remark}
    In the setting of \Cref{lma:initspanner}, where $J' = H$ and $\Pi_{J' \rightarrow H} = \operatorname{Id}$, the condition in \Cref{alg:spanner:vcong_if} simply checks if both endpoints of the edge $e = (u,v)$ have degree at most $32 \cdot 2^i$ in $J_i$. 
\end{remark}

\paragraph{Analysis.}
We first define what it means for an edge to be seen. 

\begin{definition}
    For every edge $e \in E(J)$, we say that it was seen by layer $j$ if the computation performed inside the for loop at \Cref{alg:spanner:main_loop} reaches $i \geq j$ when considering edge $e$. 
    Furthermore, we say layer $i$ is responsible for edge $e$. 
\end{definition}

\begin{claim}\label{lma:spannerlvl}If the total number of edges ever seen by layer $i$ is at most $\Delta |V(J')| / 2^i$, then $\left|E(J_i)\right| \leq 6 |V(J')|$ for all $i=0, \ldots, K$.
\end{claim}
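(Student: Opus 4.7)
The plan is to split the edges of $J_i$ into two groups: the edges still in $\hat{J}_i$ at the end of \textsc{Sparsify}, and the set of edges (of size $R_i$) that were once added to $\hat{J}_i$ but later removed on \Cref{alg:spanner:del_edges}. I would bound $|E(\hat{J}_i)|$ by a greedy–spanner girth argument and $R_i$ by a congestion–accounting argument, each giving at most $O(|V(J')|)$ with small constants that sum to at most $6$.

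First, I would argue that $\hat{J}_i$ has girth strictly greater than $2\log n$ at all times. Whenever an edge $e=(u,v)$ is inserted into $\hat{J}_i$ at \Cref{alg:spanner:vcong_if}, the preceding \textbf{if}-guard must have failed, so $\mathcal{D}_i.\textsc{Dist}(u,v) > 2\gamma_{\text{apx\textsc{APSP}}}\log n$. Since $\mathcal{D}_i$ is a $\gamma_{\text{apx\textsc{APSP}}}$-approximate APSP datastructure on $\hat{J}_i$, the true distance satisfies $\dist_{\hat{J}_i}(u,v) \geq \mathcal{D}_i.\textsc{Dist}(u,v)/\gamma_{\text{apx\textsc{APSP}}} > 2\log n$ at the moment of insertion. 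As edge deletions never decrease girth, an induction on insertions shows $\hat{J}_i$ has girth $> 2\log n$ throughout. By the standard Moore-type bound on graphs with girth $\geq 2\log n + 1$, this yields $|E(\hat{J}_i)| \leq O(|V(J')|^{1 + 1/\log|V(J')|}) \leq 5|V(J')|$.

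Next, I would bound $R_i$ by counting total edge congestion. Each edge removed on \Cref{alg:spanner:del_edges} had $\econg(\Pi, e') \geq 2\gamma_{\text{apx\textsc{APSP}}} \cdot \Delta \cdot \log n / 2^i$ at the moment of removal, so the total edge congestion of the embedding $\Pi$ across edges in $J_i$ is at least $R_i \cdot 2\gamma_{\text{apx\textsc{APSP}}} \cdot \Delta \cdot \log n / 2^i$. On the other hand, every edge $e$ that contributes any congestion at layer $i$ (through $\Pi(e)$ using edges of $J_i$) must have been seen by layer $i$, and contributes at most $|\Pi(e)| \leq 2\gamma_{\text{apx\textsc{APSP}}}\log n$: either $\Pi(e)=e$ with a single unit of self-congestion, or $\Pi(e)$ is a path returned by $\mathcal{D}_i.\textsc{Path}(u,v)$ whose length is bounded by the short-path threshold used to enter that branch. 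Hence the total edge congestion attributable to layer $i$ is at most $S_i \cdot 2\gamma_{\text{apx\textsc{APSP}}}\log n$, where $S_i$ is the number of edges seen by layer $i$. Combining with the hypothesis $S_i \leq \Delta|V(J')|/2^i$ yields $R_i \leq S_i \cdot 2^i/\Delta \leq |V(J')|$.

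Putting the two pieces together gives $|E(J_i)| \leq |E(\hat{J}_i)| + R_i \leq 5|V(J')| + |V(J')| = 6|V(J')|$, as claimed. The main obstacle I would need to be careful about is the congestion accounting: verifying that every unit of congestion that triggers a removal on \Cref{alg:spanner:del_edges} in layer $i$ is attributable to an edge that was in fact seen by layer $i$ (so that the hypothesis $S_i \leq \Delta|V(J')|/2^i$ applies), and that the short-path length bound from the \textbf{if}-guard legitimately controls $|\Pi(e)|$ for every covered edge, not just in an amortized sense.
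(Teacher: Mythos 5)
Your proposal is correct and follows essentially the same route as the paper's proof: both decompose $E(J_i)$ into the surviving girth-controlled subgraph $\hat{J}_i$ (bounded via the greedy-spanner/girth argument) and the congested edges removed on \Cref{alg:spanner:del_edges} (bounded by dividing the total congestion budget $S_i \cdot 2\gamma_{\text{apx\textsc{APSP}}}\log n$ by the removal threshold $2\gamma_{\text{apx\textsc{APSP}}}\Delta\log n/2^i$). The only differences are in the exact constants assigned to the two parts, and your closing concern is legitimately resolved because the layers are edge-disjoint, so congestion on $\hat{J}_i$ arises only from paths returned by $\mathcal{D}_i$ for edges that reached layer $i$.
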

\begin{proof}
We start by analysing the number of edges in $J_i \backslash \widehat{J}_i$. For every $e \in E(J_i \backslash \widehat{J}_i)$, we know that it occurs on $2 \gamma_{\text {apx\textsc{APSP} }} \cdot \Delta \cdot \log n / 2^i$ embedding paths. Since the total number of edges seen by layer $i$ is at most $\Delta |V(J')| / 2^i$ and the embedding paths have length at most $2 \gamma_{\text {apx\textsc{APSP} }} \log n$, the total number of edges in $J_i \backslash \widehat{J}_i$ can be at most 

\[\frac{\Delta |V(J')| / 2^i \cdot 2 \gamma_{\text {apx\textsc{APSP} }} \log n}{2 \gamma_{\text {apx\textsc{APSP} }} \cdot \Delta \cdot \log n / 2^i} = |V(J')|
\]
By standard techniques for analysing spanners \cite{althofer1993sparse} we have that the number of edges in $\widehat{J}_i$ is at most $4 |V(J')|$ because it has girth $> 2 \log (|V(J')|)$ by construction.
\end{proof}

\begin{claim}\label{lma:spannerasmpt}Assume the total number ever seen by layer $i$ is $\Delta |V(J')| / 2^i$. Then the total number of edges ever seen by layer $i+1$ is at most $\Delta |V(J')| / 2^{i+1}$.
\end{claim}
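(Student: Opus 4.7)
The plan is to count the edges that are passed from layer $i$ to layer $i+1$, which is exactly the set of edges seen by layer $i+1$. An edge $e \in E(J')$ is passed to layer $i+1$ only when both branches in the inner while loop fail at layer $i$; in particular, the vertex congestion test must fail, meaning there exists some $v \in \Pi_{J'\to H}(e)$ with $\vcong(\Pi_{J_i \to H}, v) > 32 \cdot 2^i \cdot \length(\Pi_{J'\to H})$. So it suffices to bound the number of edges of $J'$ whose embedding path in $H$ hits such a ``congested'' vertex.

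First, I would apply Claim \ref{lma:spannerlvl} to the hypothesis: since layer $i$ sees at most $\Delta |V(J')|/2^i$ edges, we get $|E(J_i)| \leq 6 |V(J')|$. Each edge of $J_i$ is a subset of an edge of $J'$ and so is embedded in $H$ via $\Pi_{J'\to H}$ along a path of length at most $\length(\Pi_{J'\to H})$. Hence the total vertex congestion is bounded by
\[
\sum_{v \in V(H)} \vcong(\Pi_{J_i \to H}, v) \;\leq\; |E(J_i)| \cdot \length(\Pi_{J'\to H}) \;\leq\; 6 |V(J')| \cdot \length(\Pi_{J'\to H}).
\]
A Markov-style argument then shows that the set $S_i$ of vertices $v \in V(H)$ with $\vcong(\Pi_{J_i \to H}, v) > 32 \cdot 2^i \cdot \length(\Pi_{J'\to H})$ satisfies $|S_i| \leq \tfrac{6 |V(J')|}{32 \cdot 2^i}$.

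Next, I would observe the (easy but crucial) monotonicity: edges are only ever added to $J_i$ (the deletions in the algorithm only affect $\hat{J}_i$), so $\vcong(\Pi_{J_i \to H}, v)$ is non-decreasing over the execution. Therefore any vertex that was ``congested'' at the moment some edge was passed up to layer $i+1$ still lies in the final set $S_i$. The number of edges of $J'$ whose $\Pi_{J'\to H}$-path crosses at least one vertex of $S_i$ is at most
\[
\sum_{v \in S_i} \vcong(\Pi_{J'\to H}, v) \;\leq\; |S_i| \cdot \vcong(\Pi_{J'\to H}) \;\leq\; |S_i| \cdot \Delta,
\]
where I used that $\vcong(\Pi_{J'\to H}) \leq \Delta$ by definition of $\Delta$.

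Combining these bounds gives at most $\tfrac{6 |V(J')|}{32 \cdot 2^i} \cdot \Delta = \tfrac{3 \Delta |V(J')|}{16 \cdot 2^i} \leq \tfrac{\Delta |V(J')|}{2^{i+1}}$ edges seen by layer $i+1$, as required. The only subtle step is the monotonicity argument that lets us compare edges considered at intermediate times against the final set $S_i$; the rest is a direct counting / Markov calculation and careful bookkeeping of the constants $32$ and $6$.
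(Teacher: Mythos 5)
Your proposal is correct and follows essentially the same route as the paper: bound $|E(J_i)|$ via Claim~\ref{lma:spannerlvl}, use an averaging/Markov argument on the total vertex congestion of $\Pi_{J_i \rightarrow H}$ to bound the size of the set $S_i$ of congested vertices, and then multiply by $\vcong(\Pi_{J' \rightarrow H}) \leq \Delta$ to count the edges that can be passed down. The only differences are cosmetic: you make the monotonicity of $\vcong(\Pi_{J_i \rightarrow H}, v)$ explicit (the paper handles it implicitly by working with the final $J_i$), and your congestion sum omits the factor of $2$ accounting for a path on $\ell$ edges having $\ell+1$ vertex occurrences, which the slack in the constant $32$ absorbs anyway.
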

\begin{proof} Consider the final spanner $J_i$ computed by layer $i$. For any vertex $v \in V(J')$, recall that $\operatorname*{vcong}(\Pi_{J_i \rightarrow H}, v)$ counts the total number of occurences of $v$ in any of the paths in $\Pi_{J_i \rightarrow H}$. Let $S_i \defeq \{v \in V(J'): \operatorname*{vcong}(\Pi_{J_i \rightarrow H}, v) \geq 32 \cdot \length\left(\Pi_{J' \rightarrow H}\right) \cdot 2^i\}$.  By \Cref{lma:spannerlvl} the total number of edges in $J_i$ is at most $6 |V(J')|$.  Therefore, we have that $\sum_{v \in V(J')} \operatorname*{vcong}(\Pi_{J_i \rightarrow H}, v) \leq 6 |V(J')| \cdot 2\length\left(\Pi_{H \rightarrow J'}\right) = 12 |V(J')| \length\left(\Pi_{H \rightarrow J'}\right)$, as the total amount any $e \in E(J_i)$ can add to the sum is at most $2 \length\left(\Pi_{H \rightarrow J'}\right)$. But this means that $|S_i| \leq |V(J')| / 2^{i+1}$.

By \Cref{alg:spanner:vcong_if}, we only pass an edge $e$ to layer $i+1$ if there exists $v \in S_i$ such that $v \in \Pi_{H \rightarrow J'}(e)$. As at most $ \vcong\left(\Pi_{H \rightarrow J'}\right) \leq \Delta$ many edges can embed through any vertex, the total number of edges passed down is at most $|S_i| \cdot \Delta = |V(J')| \Delta / 2^{i+1}$.
\end{proof}

\begin{lemma}\label{lma:sparsitygood}For all $i=0, \ldots, K$ we have
\begin{enumerate}
	\item $\left|E(J_i)\right| \leq 6 |V(J')|$ and
	\item the number of edges seen by layer $i$ is at most $\Delta |V(J')| / 2^i$.
\end{enumerate}
\end{lemma}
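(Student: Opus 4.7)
The plan is to prove both statements simultaneously by induction on $i$, using the two preceding claims as the workhorse. The key observation is that \Cref{lma:spannerlvl} converts a bound on edges seen by layer $i$ into a sparsity bound for $J_i$, while \Cref{lma:spannerasmpt} propagates the seen-edges bound from layer $i$ to layer $i+1$. Chaining these two implications gives exactly the induction step we need.

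For the base case $i = 0$, every edge of $J'$ is first passed to layer $0$, so the number of edges seen by layer $0$ equals $|E(J')|$. By the handshake lemma and $\Delta_{\max}(J') \leq \Delta$, we have $|E(J')| \leq \Delta \cdot |V(J')|/2 \leq \Delta \cdot |V(J')|/2^0$, which establishes statement (2) at $i = 0$. Then statement (1) at $i = 0$ follows from \Cref{lma:spannerlvl} applied to level $0$.

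For the inductive step, suppose statements (1) and (2) hold for some $i \geq 0$. Statement (2) for layer $i$ is precisely the hypothesis of \Cref{lma:spannerasmpt}, which yields that the number of edges seen by layer $i+1$ is at most $\Delta |V(J')|/2^{i+1}$, giving statement (2) for $i+1$. Applying \Cref{lma:spannerlvl} now at level $i+1$ (whose hypothesis is exactly the bound we just obtained) gives $|E(J_{i+1})| \leq 6 |V(J')|$, i.e., statement (1) for $i+1$. This closes the induction and covers all $i = 0, \ldots, K$.

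The main conceptual obstacle is already handled by the two claims; the only thing to verify carefully is the base case, where we must check that the initial pool of edges fed into layer $0$ is within the required budget. Since $\Delta$ is defined to be at least $\Delta_{\max}(J')$, the handshake argument above suffices without any further case analysis, and the induction goes through cleanly.
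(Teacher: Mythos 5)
Your proof is correct and follows essentially the same route as the paper: induction on $i$, with the base case bounded via the maximum degree of $J'$ and the inductive step obtained by chaining \Cref{lma:spannerasmpt} (to propagate the seen-edges bound) with \Cref{lma:spannerlvl} (to get sparsity). The only cosmetic difference is that you invoke the handshake lemma to gain an extra factor of $1/2$ in the base case, which the paper does not need.
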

\begin{proof} We proceed by induction. Consider the base case $i=0$. Point $2.$ is true because the total number of edges inserted and thus passed to layer $0$ is at most $\Delta_{\max}(J') |V(J')| \leq \Delta |V(J')|$. The first point is true because of \Cref{lma:spannerlvl}. We can now assume that the claim hold for layer $i$. Then, we have that at most $|V(J')| \Delta / 2^{i+1}$ edges get passed to layer $i+1$ by \Cref{lma:spannerasmpt} which shows the second point. Point $1.$ then follows from \Cref{lma:spannerlvl}.
\end{proof}

\begin{claim}
At the end of the algorithm, we have, for every $v \in V(J')$ and $i = 0, 1, \ldots,K$  that $\operatorname*{vcong}(\Pi_{J_i \rightarrow H}, v) \leq 34 \cdot 2^i \length\left(\Pi_{J' \rightarrow H} \right).$
\end{claim}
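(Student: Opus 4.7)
Fix an arbitrary $v \in V(J')$ and level $i \in \{0, 1, \ldots, K\}$. The plan is to track how $\operatorname*{vcong}(\Pi_{J_i \rightarrow H}, v)$ evolves over the execution of \Cref{algo:sparsify}. The first observation is that $J_i$ is monotone: edges are only added to $J_i$ (and never removed, since the deletions in \Cref{alg:spanner:del_edges} only affect $\widehat{J}_i$), and the embedding $\Pi_{J_i \rightarrow H}$ is simply the restriction of $\Pi_{J' \rightarrow H}$ to edges of $J_i$. Hence $\operatorname*{vcong}(\Pi_{J_i \rightarrow H}, v)$ is a nondecreasing quantity throughout, so it suffices to bound it at the moment right after the last relevant change.

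If no edge $e$ with $v$ appearing on $\Pi_{J' \rightarrow H}(e)$ is ever added to $J_i$, the congestion at $v$ is $0$ and the bound is trivial. Otherwise, let $e^\star$ be the \emph{last} such edge added to $J_i$. Since $e^\star$ enters $J_i$ through the branch at \Cref{alg:spanner:vcong_if}, the gating condition is satisfied for $e^\star$, and in particular holds for the vertex $v \in \Pi_{J' \rightarrow H}(e^\star)$; that is, immediately before inserting $e^\star$, we have
\[
\operatorname*{vcong}(\Pi_{J_i \rightarrow H}, v) \leq 32 \cdot 2^i \cdot \length(\Pi_{J' \rightarrow H}).
\]
Adding $e^\star$ increases $\operatorname*{vcong}(\Pi_{J_i \rightarrow H}, v)$ by the number of occurrences of $v$ on the single path $\Pi_{J' \rightarrow H}(e^\star)$, which is at most $|\Pi_{J' \rightarrow H}(e^\star)| \leq \length(\Pi_{J' \rightarrow H})$.

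By choice of $e^\star$ as the last contributing edge, no further additions to $J_i$ can increase the congestion at $v$. Therefore the final value satisfies
\[
\operatorname*{vcong}(\Pi_{J_i \rightarrow H}, v) \leq 32 \cdot 2^i \cdot \length(\Pi_{J' \rightarrow H}) + \length(\Pi_{J' \rightarrow H}) \leq 34 \cdot 2^i \cdot \length(\Pi_{J' \rightarrow H}),
\]
using $\length(\Pi_{J' \rightarrow H}) \leq 2 \cdot 2^i \cdot \length(\Pi_{J' \rightarrow H})$ for any $i \geq 0$ to absorb the slack.

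There is no real obstacle here: the argument is a standard ``last-edge-at-this-vertex'' bookkeeping, and the only thing to be careful about is that the gating condition in \Cref{alg:spanner:vcong_if} is checked against the current value of $\operatorname*{vcong}(\Pi_{J_i \rightarrow H}, v)$ (before the insertion), which is why we need an additive slack of one more path of length $\length(\Pi_{J' \rightarrow H})$ compared to the threshold $32 \cdot 2^i \length(\Pi_{J' \rightarrow H})$.
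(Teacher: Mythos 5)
Your proof is correct and follows essentially the same argument as the paper: the gating condition in the \textsc{Sparsify} algorithm guarantees that the congestion at $v$ was at most $32 \cdot 2^i \cdot \length(\Pi_{J' \rightarrow H})$ just before the last contributing edge was added, and that single addition increases it by at most $\length(\Pi_{J' \rightarrow H})$. The paper phrases this as ``once the congestion exceeds the threshold it can never increase again,'' which is the same last-increase bookkeeping you carry out (the paper in fact concludes with the slightly tighter constant $33$).
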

\begin{proof}
    Once $\operatorname*{vcong}(\Pi_{J_i \rightarrow H}, v) > 32 \cdot 2^i \cdot \length\left(\Pi_{J' \rightarrow H} \right)$, it can never increase again. But the last time it increases it can do so by at most $\length\left(\Pi_{J' \rightarrow H} \right)$, so at the end $\operatorname*{vcong}(\Pi_{J_i \rightarrow H}, v) \leq 33 \cdot 2^i \cdot \length\left(\Pi_{J' \rightarrow H} \right)$.
\end{proof}

\begin{lemma}\label{lma:vconggood}
	We have $\vcong\left(\Pi_{J \rightarrow H} \circ \Pi_{J \rightarrow \widetilde{J}}\right) \leq \gamma_c \cdot  \length\left(\Pi_{J' \rightarrow H}\right) \left(\vcong\left(\Pi_{J' \rightarrow H}\right)+\Delta_{\max }(J')\right)$,
	\\where $\gamma_c = \Otil(\gamma_{\text {apx\textsc{APSP}}})$
\end{lemma}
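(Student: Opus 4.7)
The plan is to bound $\vcong(\Pi_{J' \rightarrow H} \circ \Pi_{J' \rightarrow J}, v)$ at each $v \in V(H)$ by decomposing across the layers $i = 0, \ldots, K$ maintained by \textsc{Sparsify}, and then summing. Writing $n_f(v)$ for the number of occurrences of $v$ in $\Pi_{J' \rightarrow H}(f)$, the composed vertex congestion at $v$ equals
\[\sum_{f \in E(J)} \econg(\Pi_{J' \rightarrow J}, f) \cdot n_f(v),\]
so splitting this sum by layer $E(J) = \bigsqcup_i E(J_i)$ reduces the task to multiplying a per-layer edge congestion bound by a per-layer vertex congestion bound, and then adding across layers.

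For each layer $i$, I extract the edge congestion bound directly from the algorithm. In \Cref{alg:spanner:del_edges} an edge $f$ is removed from $\widehat{J}_i$ as soon as its congestion reaches $2 \gamma_{\text{apxAPSP}} \Delta \log n / 2^i$. Since the APSP data structure $\mathcal{D}_i$ operates on $\widehat{J}_i$, no case 2 path at layer $i$ can use $f$ after removal, and case 1 (the elif branch) can only contribute one additional unit of congestion by inserting $f$ itself. Hence $\econg(\Pi_{J' \rightarrow J}, f) \leq O(\gamma_{\text{apxAPSP}} \Delta \log n / 2^i)$ for every $f \in E(J_i)$.

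Paired with the per-layer vertex congestion bound from the preceding claim, namely $\sum_{f \in E(J_i)} n_f(v) = \vcong(\Pi_{J_i \rightarrow H}, v) \leq 34 \cdot 2^i \cdot \length(\Pi_{J' \rightarrow H})$, each layer contributes $O(\gamma_{\text{apxAPSP}} \Delta \log n \cdot \length(\Pi_{J' \rightarrow H}))$ to the composed vertex congestion at $v$, with the $2^i$ factors conveniently cancelling. Summing over the $K + 1 = O(\log(|V(J')| \Delta))$ layers produces the desired bound $\Otil(\gamma_{\text{apxAPSP}}) \cdot \length(\Pi_{J' \rightarrow H}) \cdot (\vcong(\Pi_{J' \rightarrow H}) + \Delta_{\max}(J'))$. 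The one subtlety I expect to be the main obstacle is verifying that edges evicted from $\widehat{J}_i$ still satisfy the stated edge congestion bound at the end of the algorithm, which I handle by enumerating the only two congestion sources (case 2 routing, blocked by eviction; case 1 self-insertion, bounded by a single $+1$) and noting that both are controlled by the removal rule and the elif check respectively.
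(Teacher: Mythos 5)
Your proposal is correct and follows essentially the same route as the paper's proof: decompose the composed congestion at a fixed vertex $v$ layer by layer, bound each layer's contribution by $\econg(\Pi_{J_i \rightarrow J'}) \cdot \vcong(\Pi_{J_i \rightarrow H}, v) \leq \left(O(\gamma_{\text{apxAPSP}} \Delta \log n / 2^i)\right) \cdot \left(34 \cdot 2^i \length(\Pi_{J' \rightarrow H})\right)$ with the $2^i$ factors cancelling, and sum over the $\Otil(1)$ layers. Your extra justification of the per-edge congestion bound via the eviction rule in \Cref{alg:spanner:del_edges} is a detail the paper leaves implicit, and it is sound.
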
 
\begin{proof}
	let $J_i$ be the final spanner computed by layer $i$. Consider a fixed vertex $v$. Note that any edge $e \in E(J_i)$ contributes to $\vcong\left(\Pi_{J \rightarrow H} \circ \Pi_{J \rightarrow \widetilde{J}}, v\right)$ by exactly $\econg(\Pi_{J_i \rightarrow J'}, e)$ times the number of times that $v$ occurs on the embedding path $\Pi_{J \rightarrow H}(e)$. This means that the total amount layer $i$ contributes to $\vcong\left(\Pi_{J \rightarrow H} \circ \Pi_{J \rightarrow \widetilde{J}}, v\right)$ is at most $\operatorname*{econg}(\Pi_{J_i \rightarrow J'}) \cdot \operatorname*{vcong}(\Pi_{J_i \rightarrow H}, v) \leq \left(2  \Delta \log(n) \gamma_{\text {apxAPSP }} / 2^i\right) \cdot \big(34 \cdot 2^i  \length\left(\Pi_{J' \rightarrow H}\right)\big) = \Otil(\Delta \length\left(\Pi_{J' \rightarrow H}\right))$. 
    Finally, the number of layers $K = 2 \log(|V(J')| \Delta) = \Otil(1)$ is suitably bounded, which finishes the proof. 
\end{proof}

\begin{proof}[Proof of correctness and runtime]
We start with the correctness analysis. We first note that by \Cref{lma:sparsitygood}, we have that the number of edges seen by layer $K$ is at most $|V(J')| \Delta / 2^K = 1/2$, which proves that every edge is indeed embedded and that the algorithm terminates. 
\begin{enumerate}
    \item This was already shown in \Cref{lma:vconggood}.
    \item By design of the algorithm, we have $\length \left(\Pi_{J' \rightarrow J}\right) \leq 2 \log n \cdot \gamma_{apx\textsc{APSP}}$. Hence we have $\length \left(\Pi_{J' \rightarrow H} \circ \Pi_{J' \rightarrow J}\right) \leq \length \left(\Pi_{J' \rightarrow H}\right) \cdot \length \left(\Pi_{J' \rightarrow J}\right) \leq 2 \log n \cdot \gamma_{apx\textsc{APSP}} \cdot \length \left(\Pi_{J' \rightarrow H}\right)$.
    \item By \Cref{lma:sparsitygood}, each $J_i$ has, at the end of the algorithm, at most $6|V(J')|$ edges. As there are $K = \Otil(1)$ many levels, $J$ has in total at most $\Otil(|V(J')|)$ many edges.
\end{enumerate}
It remains to show the bounds on the runtime. We can maintain the numbers $\vcong(\Pi_{J_i \rightarrow H}, v)$ in time $O(\length\left(\Pi_{J' \rightarrow H}\right))$ per edge that gets passed to $H_i$, yielding a total cost of $\Tilde{O}(|E(J)| \cdot \length\left(\Pi_{J' \rightarrow H}\right))$. Initializing the graphs $J_i$, $\hat{J}_i$ can be done in time $\Otil(n)$.

Consider now the \textsc{APSP} datastructures $\mathcal{D}_i$. By the discussion at the end of \Cref{fact:BSTred}, we can run $\mathcal{D}_i$ on a modified graph with maximum degree at most $3$. Then, as by \Cref{lma:sparsitygood} we have $|E(J_i)| \leq 6 |V(J')|$, this modified graph only has $\Otil(|V(J')|)$ edges and vertices. By initializing one big APSP data structure that will be shared by all layers, we can bound the total costs required by initialization and updates as $APSP(\Otil(|V(J')|, 3, \Otil(|V(J')|))$.

The total number of distance queries can be bounded by $K \cdot |E(J')|$, which thus costs $\Otil(|E(J)|)$ in total. We also need to ask $O(K m) = \Otil(m)$ times for paths of length $\leq 2 \cdot \gamma_{apx\textsc{APSP}} \cdot \log n$, yielding a cost of $\Otil(m \cdot \gamma_{apx\textsc{APSP}})$. The total costs incurred by  \Cref{alg:spanner:del_edges} and \Cref{alg:spanner:vcong_if} can be bounded by $\Otil(m \gamma_{apx\textsc{APSP}})$ and $\Otil(m \cdot \length \left(\Pi_{J \rightarrow J'}\right)))$, respectively, which finishes the runtime analysis. 
\end{proof}

\subsection{Processing a Single Batch of Updates}
\label{sec:single_batch}

In this section, we show that \Cref{lma:CompSpanner} can be used to fix a spanner after a batch of updates have been made with a small loss in vertex congestion and length in our embeddings. As long as the sequential depth of applying this routine is low, this yields a good spanner.

We denote by $U_G$ a batch of updates made to $G$, i.e. a set of updates consisting of edge insertions/deletions, vertex splits and isolated vertex insertions that are applied to $G$. We first state the main lemma of this section. 

\begin{lemma}[Process Batch] \label{lma:procces_batch}
    Given an $n$-vertex graph $G = (V, E)$ and a spanner $H$ with embedding $\Pi_{G \rightarrow H}$, there is an algorithm $\textsc{ProcessBatch}(\cdot)$ that, after an update batch $U_G$ is applied to $G$ can output a batch of updates $U_H$ such that if $H', G'$ are the graphs obtained when applying the respective update batches
    \begin{itemize}
        \item $H' \subseteq G$ and $|U_H| = \Otil(|U_G|)$
        \item it returns a modified embedding $\Pi_{G' \rightarrow H'}$ such that 
    	\begin{enumerate}
            \item $\vcong\left(\Pi_{G' \rightarrow H'}\right) \leq \Otil(\gamma_{apxAPSP}) \cdot \length\left(\Pi_{G \rightarrow H}\right) \cdot \vcong\left(\Pi_{G \rightarrow H}\right) $, and
    	    \item  $\length \left(\Pi_{G' \rightarrow H'}\right) \leq \Otil(\gamma_{apxAPSP}) \cdot \length\left(\Pi_{G \rightarrow H}\right)$.
    	\end{enumerate}
    \end{itemize}
    The algorithm runs in time $\Otil(|U_G| \cdot \gamma_{apxAPSP} \cdot \vcong\left(\Pi_{G \rightarrow H}\right)) + APSP(\Otil(|U_G|), 3, \Otil(|U_G|))$.
\end{lemma}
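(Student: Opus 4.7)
The plan is to realize the patching strategy outlined before the lemma via the $\textsc{Sparsify}$ routine from \Cref{lma:CompSpanner}. First I would forward the decremental portion of $U_G$ (edge deletions and vertex splits) directly to $H$, together with isolated vertex insertions. This preserves $H \subseteq G'$ but leaves many embedding paths stale. Let $T \subseteq V(H')$ be the set of \emph{touched vertices}: vertices incident to a deleted edge, produced by a split, or endpoints of a newly inserted edge. Since each update in $U_G$ contributes $O(1)$ touched vertices, $|T| = O(|U_G|)$. An edge $e_G$ of $G'$ needs a refreshed embedding precisely when $e_G$ is newly inserted or its old path $\Pi_{G \rightarrow H}(e_G)$ traverses a deleted edge or a split vertex.

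Next I would build the auxiliary graph $J'$ on vertex set $T$ as follows: for every edge $e_G = (u,v)$ to be refreshed, walk along $\Pi_{G \rightarrow H}(e_G)$ from each endpoint to find the first touched vertex $u'$ respectively $v'$, and add a (multi-)edge $(u',v')$ to $J'$ carrying a pointer back to $e_G$. Define $\Pi_{J' \rightarrow H}((u',v'))$ to be the intact sub-path of $\Pi_{G \rightarrow H}(e_G)$ from $u'$ to $v'$. Because each touched vertex can be the break-point of at most $\vcong(\Pi_{G \rightarrow H})$ paths, we obtain $|V(J')| = O(|U_G|)$, $|E(J')| = O(|U_G| \cdot \vcong(\Pi_{G \rightarrow H}))$, $\Delta_{\max}(J') = O(\vcong(\Pi_{G \rightarrow H}))$, $\length(\Pi_{J' \rightarrow H}) \leq \length(\Pi_{G \rightarrow H})$, and $\vcong(\Pi_{J' \rightarrow H}) \leq \vcong(\Pi_{G \rightarrow H})$. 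Identifying broken paths and their break-points can be done with an inverted index mapping each vertex to the list of embedding paths visiting it, for total cost $\Otil(|U_G| \cdot \vcong(\Pi_{G \rightarrow H}))$.

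Then I would invoke $\textsc{Sparsify}(H', J', \Pi_{J' \rightarrow H})$ from \Cref{lma:CompSpanner} to obtain $J \subseteq J'$ with $|E(J)| = \Otil(|V(J')|) = \Otil(|U_G|)$ and embedding $\Pi_{J' \rightarrow J}$. For every $e_{J'} \in J$ I would add the underlying edge $e_G$ it points to into $H$, producing $H'$. Hence $H' \subseteq G'$ and $|U_H| = O(|U_G|) + |E(J)| = \Otil(|U_G|)$. For every refreshed edge $e_G$ (with projection $e_{J'}$), the new embedding $\Pi_{G' \rightarrow H'}(e_G)$ is obtained by concatenating the surviving intact prefix from $u$ to $u'$, the composed path $(\Pi_{J' \rightarrow H} \circ \Pi_{J' \rightarrow J})(e_{J'})$ (a path in $H'$ made of intact sub-paths selected by the sparsifier), and the surviving intact suffix from $v'$ to $v$. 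All unaffected embeddings are kept verbatim. The bound on $\length(\Pi_{G' \rightarrow H'})$ follows because each of the three pieces has length at most $\Otil(\gamma_{apxAPSP}) \cdot \length(\Pi_{G \rightarrow H})$ by \Cref{lma:CompSpanner:length} of \Cref{lma:CompSpanner}, and the bound on $\vcong(\Pi_{G' \rightarrow H'})$ follows because the middle pieces have vertex congestion bounded by \Cref{lma:CompSpanner:vcong} using $\Delta_{\max}(J') + \vcong(\Pi_{J' \rightarrow H}) = O(\vcong(\Pi_{G \rightarrow H}))$, while prefix/suffix pieces inherit only pre-existing congestion.

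The main obstacle I expect is handling the accounting subtleties rather than any genuinely new algorithmic step. Specifically, (i) care is needed in the congestion analysis to distinguish between the contribution of intact prefix/suffix segments, which inherit the old congestion, and the newly routed middle segments, which is where \Cref{lma:CompSpanner} bites; (ii) the interaction of vertex splits with the inverted index must be handled so that break-points for split vertices are correctly propagated, which is why we fix $T$ and the set of broken embeddings before doing any repair; and (iii) the runtime for building $J'$ must avoid walking through full intact segments — one has to exploit the inverted index so the cost is charged to $|U_G| \cdot \vcong(\Pi_{G \rightarrow H})$ rather than also paying a $\length$ factor, after which the $\textsc{Sparsify}$ invocation absorbs the remaining runtime through the stated $APSP(\Otil(|U_G|),3,\Otil(|U_G|))$ term.
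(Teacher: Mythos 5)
Your overall skeleton (forward decremental updates, project broken embeddings onto touched vertices, sparsify the projected multigraph, add back only the selected original edges, patch with prefix--detour--suffix concatenations) is exactly the paper's strategy, and your accounting of $|V(J')|$, $|E(J')|$, $\Delta_{\max}(J')$ and the final congestion/length bounds matches the paper's proof. However, there is one genuine error at the heart of the construction: you define the auxiliary embedding by $\Pi_{J' \rightarrow H}((u',v')) :=$ ``the intact sub-path of $\Pi_{G \rightarrow H}(e_G)$ from $u'$ to $v'$.'' That middle segment is precisely the \emph{broken} part of the old path --- $u'$ and $v'$ are by definition the first touched vertices met when walking in from each end, so every deleted edge and split vertex on the old path lies inside $[u',v']$. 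If that segment were intact, $e_G$ would not need refreshing at all. As written, the embedding you hand to $\textsc{Sparsify}$ is not a valid embedding into $H'$ (and you also invoke $\textsc{Sparsify}$ with host graph $H'$, which is not yet determined at that point, since $H'$ depends on which edges the sparsifier selects).

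The paper's fix is to route the projected edge $\hat{e}=(\hat{u},\hat{v})$ the \emph{other} way around: backwards along the intact prefix from $\hat{u}$ to $u$, then across the original edge $e=(u,v)$ of $G$ itself, then along the intact suffix from $v$ to $\hat{v}$. This is a genuine $\hat{u}$-$\hat{v}$ path, but it uses the edge $e$, which need not lie in $H$; hence $\textsc{Sparsify}$ must be called with host graph $H \cup E_{\mathit{affected}}$ and embedding $\Pi_{J \rightarrow H \cup E_{\mathit{affected}}}$. Only after the sparsifier selects $\tilde{J}$ does one add the original edges corresponding to $E(\tilde{J})$ to $H$, at which point the composed detours $[\Pi_{J \rightarrow H \cup E_{\mathit{affected}}} \circ \Pi_{J \rightarrow \tilde{J}}](\hat{e})$ become valid paths in $H'$. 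With this correction (and the attendant change $\length(\Pi_{J \rightarrow H\cup E_{\mathit{affected}}}) \leq 2\length(\Pi_{G\rightarrow H})+1$, which does not affect the stated bounds), the rest of your argument goes through; your treatment of newly inserted edges via $T$ is also slightly nonstandard (the paper simply adds them to $H$ and embeds them onto themselves), but that is cosmetic.
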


We remark that both the processing time and number of edges $U_H$ added to $H$ are roughly proportional to the size of the update $U_G$, which nicely sets us up for a batching scheme which we present in \Cref{sec:batching_scheme}.

\paragraph{Algorithm. } Before we describe our algorithm, we define the set of vertices that are touched by an update batch and the edge embedding projection. 

\begin{definition}[Touched Vertices]\label{def:touchedVertex}
We say that an update to $G$ \emph{touches} a vertex $u$ if the update is an edge deletion and $u$ is one of its endpoints, or if the update is a vertex split and $u$ is one of the resulting vertices from the split.
\end{definition}

\begin{definition}[Edge-Embedding Projection]\label{def:edgeProj}
Let $S$ be the set of all vertices touched by updates in $U_G$ and $e =(u, v) \in E(G)$ such that $\Pi_{G \rightarrow H}(e) \cap S \neq \emptyset$. Then, we let $\proj(e)$ be a new edge $\hat{e}$ that is associated with $e$ and has endpoints $\hat{u}$ and $\hat{v}$ being the closest vertices in $S$ to the endpoints of $e$ in $G'$.
\end{definition}

Intuitively, $\hat{u}$ may be thought of as the earliest failure point that our old embedding path $\Pi_{G \rightarrow H}(e)$ from $u$ to $v$ faces once the updates in $U_G$ are forwarded to $H$. In particular, the path segement $\Pi_{G \rightarrow H}(e)[u, \hat{u}]$ remains a valid path. The same holds for $v$ and $\hat{v}$. Hence, if we can update $H$ to find a short path $P_{\hat{e}}$ between $\hat{u}$ and $\hat{v}$, the patched path $\Pi_{G \rightarrow H}(e)[u, \hat{u}] \pconcat P_{\hat{e}} \pconcat \Pi_{G \rightarrow H}(e)[\hat{v}, v]$ will be a valid and short path between $u$ and $v$ again. 

By simply adding the edge $e = (u, v) \in E(G)$ to $H$, we could set $P_{\hat{e}} = \Pi_{G \rightarrow H}(e)[\hat{u}, u] \pconcat e \pconcat \Pi_{G \rightarrow H}(e)[v, \hat{v}]$. Of course, without further steps this does not help at all: by adding the edge $e$, we could simply directly embed it onto itself, and adding an edge for every broken path would lead to very high recourse for $H$.

To circumvent this issue, our algorithm first collects all edges whose embedding was broken in the set $E_{\mathit{affected}}$, and lets $\hat{E}_{\mathit{affected}} = \{\proj(e): e \in E_{\mathit{affected}}\}$. It then considers the graph $J = (S, \hat{E}_{\mathit{affected}})$ with embeddings $\Pi_{J \rightarrow H \cup E_{\mathit{affected}}}$ as described before. Again, for this embedding to be valid we had to add all edges in $E_{\mathit{affected}}$ to $H$, which are way too many. 

Crucially, we have that $|V(J)| = |S| \leq 2|U_G|$, so that we can drastically reduce the number of edges we have to add to $H$. We achieve this by first calling $\textsc{Sparsify}(H \cup E_{\mathit{affected}}, J, \Pi_{J \to H \cup E_{\mathit{affected}}})$, obtaining an edge-sparsified graph $\tilde{J}$ and an embedding $\Pi_{J \rightarrow \tilde{J}}$ with the guarantees from \Cref{lma:CompSpanner}. Then, for every edge $\hat{e} = (\hat{u}, \hat{v}) \in \tilde{J}$, we add the back-projected edge $e = (u, v) \in E(G)$ to $H$,  obtaining the graph $H'$. For all edges $e \in E_{\mathit{affected}}$ where $\proj(e) \not \in E(\Tilde{J})$, we update the embedding by using the $\hat{u}$-$\hat{v}$-path in $H'$ that is given by $[\Pi_{J \to H \cup E_{\mathit{affected}}} \circ \Pi_{J \to\tilde{J}}](\hat{e})$ as a patch. See \Cref{alg:updateSparsifier} for detailed pseudocode.

\begin{algorithm}
Apply updates $U_G$ to $H$; $\Pi_{G' \rightarrow H'} \gets \Pi_{G \rightarrow H}$ \tcp*{new edges in $G'$ map to themselves}
$E_{\mathit{affected}} \gets \{ e \in E(G) \text{ with } \Pi_{G \rightarrow H}(e) \cap S \neq \emptyset\}$ \tcp*{$S$ as in \Cref{def:edgeProj}}
$J \gets (S, \emptyset)$; $\Pi_{G \to H \cup E_{\mathit{affected}}} \gets \emptyset$;
\ForEach{edge $e \in E_{\mathit{affected}}$}{
  $\hat{e} \gets \proj(e)$\tcp*{Find Projected Edge of $e$. See \Cref{def:edgeProj}.}
      Let $u$ and $v$ be the endpoints of $e$ and $\hat{u}$ and $\hat{v}$ be the endpoints of $\hat{e}$;
    Add $\hat{e}$ to $J$.
    $\Pi_{J \to H \cup E_{\mathit{affected}}}(\hat{e}) \gets \Pi_{G \rightarrow H}(e)[\hat{u}, u] \pconcat e \pconcat \Pi_{G \rightarrow H}(e)[v, \hat{v}]$.
}
\tcp{Sparsify projected graph and translate back to patch $H$.}
$(\tilde{J}, \Pi_{J \to \tilde{J}}) \gets \textsc{Sparsify}(H \cup E_{\mathit{affected}}, J, \Pi_{J \to H \cup E_{\mathit{affected}}})$;

\lForEach{edge $e \in E_{\mathit{affected}}$ and $\hat{e} = \proj(e)\in \tilde{J}$}{Add $e$ to $U$.}
\ForEach{edge $e \in E_{\mathit{affected}}$}{%
    $\hat{e} = \proj(e)$;
    Let $a$ and $b$ be the endpoints of $e$, and $\hat{u}$ and $\hat{v}$ be the endpoints of $\hat{e}$.;
    $\Pi_{G' \rightarrow H'}(e) \gets \Pi_{G \rightarrow H}(e)[u, \hat{u}] \pconcat  [\Pi_{J \to H \cup E_{\mathit{affected}}} \circ \Pi_{J \to\tilde{J}}](\hat{e}) \pconcat \Pi_{G \rightarrow H}(e)[\hat{v}, v]$.
}
\Return $U_H = U \cup U_G$, $\Pi_{G' \rightarrow H'}$
\caption{$\textsc{ProcessBatch}(U_G)$}
\label{alg:updateSparsifier}
\end{algorithm}

\begin{proof}[Proof of \Cref{lma:procces_batch}.]
We first prove that the embedding $\Pi_{G' \rightarrow H'}$ maps each edge in $G'$ to a proper path in $H'$. Since $\Pi_{G' \rightarrow H'}$ is initialized to $\Pi_{G \rightarrow H}$ with identity maps for new edges, every edge that is not in $E_{\mathit{affected}}$ has a proper embedding that does not get updated. It remains to show that $\Pi_{G \rightarrow H}(e)[u, \hat{u}] \pconcat  [\Pi_{J \to H \cup E_{\mathit{affected}}} \circ \Pi_{J \to\tilde{J}}](\hat{e}) \pconcat \Pi_{G \rightarrow H}(e)[\hat{v}, v]$ is a proper embedding path. Firstly, $\Pi_{G \rightarrow H}(e)[u, \hat{u}]$ and $\Pi_{G \rightarrow H}(e)[\hat{v}, v]$ are proper embedding paths by \Cref{def:touchedVertex}, and for the same reason $\Pi_{J \rightarrow H \cup E_{\textit{affected}}}(\hat{e})$ is a valid embedding path in $H \cup E_{\textit{affected}}$. By \Cref{lma:CompSpanner}, $[\Pi_{J \to H \cup E_{\mathit{affected}}} \circ \Pi_{J \to\tilde{J}}](\hat{e})$ is then also a proper embedding path from $J$ to $H \cup E_{\textit{affected}}$. Now note that for edges in $E(\Tilde{J})$, we have that the paths $\Pi_{J \rightarrow H \cup E_{\textit{affected}}}$ in fact only use edges in $H'$. Hence $[\Pi_{J \to H \cup E_{\mathit{affected}}} \circ \Pi_{J \to\tilde{J}}](\hat{e})$ is a valid $(\hat{u}, \hat{v})$ path in $H' \subset H \cup E_{\textit{affected}}$.

Next, we show the vertex congestion bound. As the new embedding either consists of unchanged paths or of patched paths of the form $\Pi_{G \rightarrow H}(e)[u, \hat{u}] \pconcat  [\Pi_{J \to H \cup E_{\mathit{affected}}} \circ \Pi_{J \to\tilde{J}}](\hat{e}) \pconcat \Pi_{G \rightarrow H}(e)[\hat{v}, v]$, we can bound $\vcong(\Pi_{G' \rightarrow H'}) \leq 2 \vcong(\Pi_{G \rightarrow H}) + \vcong(\Pi_{J \rightarrow H'} \circ \Pi_{J \to\tilde{J}}) \leq \Otil(\gamma_{apxAPSP}) \cdot \length(\Pi_{G \rightarrow H}) \cdot (\vcong(\Pi_{G \rightarrow H})+\Delta_{\max}(J))$, where the last inequality follows from \Cref{lma:CompSpanner:vcong} in \Cref{lma:CompSpanner}. As by definition of $J$ we have $\Delta_{\max}(J) \leq \vcong(\Pi_{G \rightarrow H})$, the bound follows.

An analogous argument using \Cref{lma:CompSpanner:length} of \Cref{lma:CompSpanner} yields the bound on $\length(\Pi_{G' \rightarrow H'})$.

The runtime bound follow directly from \Cref{lma:CompSpanner} and the definition of our algorithm because $|E_{\mathit{affected}}| \leq \vcong\left(\Pi_{G \rightarrow H}\right)|U_G|$. As $|S| \leq 2|U_G|$, by \Cref{lma:CompSpanner:sparsity} of \Cref{lma:CompSpanner}, we then also have $|E(\Tilde{J})| = \Otil(|U_G|)$, so also $|U_H| \leq |U_G| + |E(\Tilde{J})| = \Otil(|U_G|)$.
\end{proof}

\subsection{Fully Dynamic Spanner}\label{sec:dynamicspanner}
\label{sec:batching_scheme}

Given the $\textsc{ProcessBatch}$ algorithm based on APSP from \Cref{sec:single_batch}, we run the batching scheme from Section 5 of \cite{maxflow} to obtain a dynamic algorithm. We describe the full scheme for clarity and completeness, especially because our $\textsc{ProcessBatch}$ routine depends on bootstrapped APSP instances. 

The algorithm consists of $K + 1$ layers, and layer $i = 0, \ldots, K$ maintains a graph $H_i$, where we initially set $H_0$ to be the spanner from \Cref{lma:initspanner}, and $H_1, \ldots, H_K = (V, \emptyset)$. Throughout $H \defeq \bigcup_{i = 0}^{K} H_i$ denotes the maintained spanner. The algorithm periodically recomputes these graphs, and layers with higher indices are re-computed more often than layers with lower indices. The layers $i$ that do not get re-computed after an update do not change their graph if the update is an insertion. Edge deletions, vertex splits and isolated vertex inserts are however immediately forwarded to $H_i$. 

Additionally, every layer $i$ maintains an embedding $\Pi_i$ that maps a subset of $E(G)$ to $H_{\leq i} \defeq \bigcup_{j = 0}^i H_j$. We do not require that each edge in $E(G)$ only gets mapped by one embedding, and when considering the embedding $\Pi_{\leq i}$ we choose the embedding path of the layer with the largest index $j \leq i$ that contains the edge (i.e. the most frequently updated one). The final embedding will then be $\Pi = \Pi_{\leq K}$. We note that some embeddings might not embed edges to proper paths because of edge deletions that happened since the embedding was computed. The algorithm will ensure that the layer with the highest index that embeds the edge always embeds to a proper path.

Whenever layer $j$ is recomputed, $\Pi_j$ is recomputed such that $\Pi_{\leq j}$ is a valid embedding of the whole graph $G$. To do so, we pass the embedding $\Pi_{\leq j - 1}$ and all updates $U_{j-1}$ that happened since $H_{j - 1}$ was recomputed last to layer $j$, and call $\textsc{UpdateBatch}(U_{j - 1}, H_{j - 1}, \Pi_{\leq j - 1})$ as described in \Cref{lma:procces_batch}. We can assume throughout that $n^{1/K}$ is an integer. See \Cref{alg:fullydynamicupdate} for pseudocode.

\begin{algorithm}
    Update all sparsifiers $H_0, H_1, \ldots, H_K$ with the $t$-th update if it applies. \\
    Add update $t$ to the batches $U_0, U_1, \ldots, U_K$. \\
    $j \gets \min\{j' \in \Z_{\geq 0}: t \text{ is divisible by }n^{1-j'/K}\}$ \\
    $H_{j}, \ldots, H_K \gets (V, \emptyset)$; $\Pi_{j}, \ldots, \Pi_K \gets \emptyset$; $U_{j}, \ldots, U_K \gets \emptyset$\label{alg:fullydynamicupdate:removal}\\
    $H_j, \Pi_j \gets \textsc{ProcessBatch}(U_{j-1}, H_{j-1}, \Pi_{\leq j - 1})$\label{alg:fullydynamicupdate:processbatch}
\caption{$\textsc{Update}(t)$}
\label{alg:fullydynamicupdate}
\end{algorithm}

Since the guarantees achieved by \Cref{algo:sparsify} are equivalent to the ones by the corresponding procedure in \cite{maxflow} (and \cite{CKL24, detmax}), the analysis of all but the runtime remains unchanged and we refer to Lemma $5.6$ and Claims $5.7$ and $5.8$ in \cite{maxflow}.

\begin{claim}[See Lemma $5.6$ and Claims $5.7$ and $5.8$ in \cite{maxflow}] \label{clm:spanner_correctness}
    At all times, $\Pi$ constitutes a valid embedding from $H$ into $G$.  Moreover we have that 
    \begin{itemize}
        \item $\length(\Pi) \leq \Otil(\gamma_{apxAPSP})^{O(K)}$, and
        \item $\vcong(\Pi) \leq \Otil(\gamma_{apxAPSP})^{O(K^2)} \cdot \Delta$
    \end{itemize}
\end{claim}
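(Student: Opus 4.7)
The plan is to import the batching-scheme analysis of Section 5 of \cite{maxflow} essentially verbatim, substituting our \Cref{lma:procces_batch} for their corresponding batch-update primitive. The argument proceeds by induction on the layer index $j \in \{0, 1, \ldots, K\}$, maintaining the invariant that immediately after $\textsc{ProcessBatch}$ is invoked at layer $j$, the combined embedding $\Pi_{\leq j}$ is a valid embedding of the current $G$ into $H_{\leq j}$ with length at most $\Otil(\gamma_{apxAPSP})^{j+1}$ and vertex congestion at most $\Delta \cdot \Otil(\gamma_{apxAPSP})^{O(j^2)}$.

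For validity, the base case $j = 0$ is \Cref{lma:initspanner}. For the inductive step, when layer $j$ is rebuilt at some time $t$, the batch $U_{j-1}$ by construction contains exactly the updates applied to $G$ since the last rebuild of layer $j - 1$, at which point $\Pi_{\leq j - 1}$ was valid by induction. Feeding these into \Cref{lma:procces_batch} yields $\Pi_j$ such that $\Pi_{\leq j}$ patches the stale embedding into a valid one for the current $G$. Between rebuilds, decremental updates (edge deletions and vertex splits) are immediately forwarded to every $H_i$, preserving $H_i \subseteq G$, and edge insertions are captured because layer $K$ is rebuilt after every single update (since $n^{1 - K/K} = 1$ divides every $t$), ensuring that $\Pi_{\leq K}$ always contains an up-to-date path for every current edge of $G$. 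The length and congestion bounds then follow from the per-layer inflation factors of \Cref{lma:procces_batch}: setting $\ell_j := \length(\Pi_{\leq j})$ and $c_j := \vcong(\Pi_{\leq j})$ gives the recurrences $\ell_j \leq \Otil(\gamma_{apxAPSP}) \cdot \ell_{j-1}$ and $c_j \leq \Otil(\gamma_{apxAPSP}) \cdot \ell_{j-1} \cdot c_{j-1}$, with base values $\ell_0 = \Otil(\gamma_{apxAPSP})$ and $c_0 = \Otil(\Delta)$ from \Cref{lma:initspanner}; unrolling telescopes to $\ell_K \leq \Otil(\gamma_{apxAPSP})^{O(K)}$ and $c_K \leq \Delta \cdot \Otil(\gamma_{apxAPSP})^{O(K^2)}$.

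The main obstacle is the validity argument rather than the numerical bounds: one must carefully reconcile three different timing behaviors across layers --- decremental updates propagate instantaneously to every $H_i$, insertions are deferred until the next rebuild of a given layer, and the selection rule for $\Pi$ always prefers the highest-indexed (freshest) layer containing a given edge --- to conclude that the ``active'' embedding path for every edge remains a genuine path in the current $H$. Since the interface of our $\textsc{ProcessBatch}$ matches that of the corresponding routine in \cite{maxflow} (the only substantive difference being the use of a bootstrapped APSP instance internally, which affects only $\gamma_{apxAPSP}$), the bookkeeping argument of Lemma 5.6 there transfers without modification, and the length/congestion recurrences match Claims 5.7 and 5.8 up to the constants absorbed into the $\Otil(\cdot)$ notation.
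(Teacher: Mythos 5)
Your proposal is correct and follows exactly the argument the paper relies on: the paper does not spell out a proof of this claim but defers verbatim to the batching analysis of Lemma 5.6 and Claims 5.7--5.8 in \cite{maxflow}, and your layer-by-layer induction with the recurrences $\ell_j \leq \Otil(\gamma_{apxAPSP})\,\ell_{j-1}$ and $c_j \leq \Otil(\gamma_{apxAPSP})\,\ell_{j-1}\,c_{j-1}$ is precisely that analysis instantiated with \Cref{lma:procces_batch} and \Cref{lma:initspanner} as the base case. The validity bookkeeping you describe (decremental updates forwarded immediately, insertions caught because level $K$ is rebuilt every step, and $\Pi$ preferring the freshest layer) is likewise the intended argument.
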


\begin{proof}[Proof of \Cref{thm:EdgeSparsifier}]
    By \Cref{clm:spanner_correctness}, it remains to prove the statements about sparsity, recourse and runtime. We consider these properties separately for each layer $0 \leq i \leq K$.
    
    As $H_0$ is never updated after initialization, it causes no recourse and the runtime of this layer is, by \Cref{lma:initspanner}, $\Otil(m \gamma_{apxAPSP}) + APSP(\Otil(n), 3, \Otil(n))$. We additionally have $|E(H_0)| = \Otil(n)$.

    Now fix $1 \leq i \leq K$ and note that at all times, $|U_{i-1}| = O(n^{1 - (i-1)/K})$. By \Cref{lma:procces_batch}, we then have that at all times $|E(H_i)| = \Otil(n^{1 - (i-1)/K})$. This immediately implies the sparsity of $H$. Moreover, whenever layer $i$ is recomputed, the recourse caused by the edge removals occurring in \Cref{alg:fullydynamicupdate:removal} is $\Otil(1) \cdot \sum_{j = i}^{K} n^{1 - (j-1)/K} = \Otil(n^{1 - (i-1)/K})$. The runtime costs of that line are proportional to the recourse. When $\textsc{ProcessBatch}(\cdot)$ is called in  \Cref{alg:fullydynamicupdate:processbatch}, its runtime is $\Otil(n^{1-(i-1)/K} \gamma_{apxAPSP} \vcong(\Pi_{\leq i-1}))$ + $APSP(\Otil(n^{1-(i-1)/K}), 3, \Otil(n^{1-(i-1)/K}))$, and the recourse caused (solely through edge insertions) is $\Otil(n^{1-(i-1)/K})$. From \Cref{clm:spanner_correctness}, we know that at all times we have $\vcong(\Pi_{\leq i-1}) \leq \Otil(\gamma_{apxAPSP})^{O(K^2)} \cdot \Delta$. Hence, the re-computation of layer $i$ can be done in time $n^{1-(i-1)/K} \cdot \Otil(\gamma_{apxAPSP})^{O(K^2)} \Delta + APSP(\Otil(n^{1-(i-1)/K}), 3, \Otil(n^{1-(i-1)/K}))$ with recourse $\Otil(n^{1-(i-1)/K})$.

    Noting that layer $i$ gets re-computed at most $n/n^{i/K} = n^{1-i/K}$ many times, and then summing the total costs of all layers, we can bound the total run time of the algorithm by
    \begin{align*}
        (K+1) \Otil(\gamma_{apxAPSP})^{O(K^2)} \cdot n^{1+1/K} \cdot \Delta + \sum_{i=0}^{K}n^{i/K}APSP(\Otil(n^{1-(i-1)/K}), 3, \Otil(n^{1-(i-1)/K})).
    \end{align*} We can replace the latter term by $APSP(\Otil(n^{1+1/K}), 3, \Otil(n^{1+1/K}))$ by noting that we can initialize one big APSP data structure at the beginning of the algorithm that all subsequent calls of $\textsc{UpdateBatch}(\cdot)$ will share. As $K = o(\log^{1/3}n)$, this finishes the runtime analysis.

    Finally, after $\norm{G^{(t)}}$ many updates have been made, we know that layer $i$ got called at most $\norm{G^{(t)}} / n^{1-i/K}$ many times, and each time it got called it led to $\Otil(n^{1 - (i-1)/K})$ many updates to our spanner. Hence we can bound the recourse at any time by
    $$
    \norm{H^{(t)}} \leq \sum_{i = 1}^{K} \bigg(\big(\norm{G^{(t)}} / n^{1-i/K}\big) \cdot \big(\Otil(n^{1 - (i-1)/K})\big)\bigg) = \Otil(n^{1/L}) \cdot \norm{G^{(t)}}.
    $$
\end{proof}

\section*{Acknowledgements}

The authors are very grateful for insightful discussions with Maximilian Probst Gutenberg. 

\newpage

\bibliographystyle{alpha}
\bibliography{refs}

\newpage

\appendix

\section{Dynamic Vertex Sparsifiers via Low-Stretch Trees}
\label{apx:vertex_sparsifier}

In \cite{kyng2023dynamic}, the authors dynamize the vertex sparsifier of \cite{andoni20vert}. Since we use a simpler, white-boxed part of their algorithm, we sketch their proof for completeness. 

We first restate the vertex sparsifer theorem as used in \Cref{sec:vertex_sparsification}. 

\vertexSparsifier*

The vertex sparsifier is constructed in two steps.

Firstly, we compute a path collection $\mathcal{P}$ consisting of $k^{o(1)} \cdot n \cdot \Delta$ paths. This collection remains static throughout the sequence of updates, except that we remove paths that use deleted edges and add inserted edges as paths of length one. Then, our algorithm maintains a pivot set $A$ as described such that the distances between vertices in $A$ are approximately preserved when only using paths in $\mathcal{P}$.   

Secondly, the path collection $\mathcal{P}$ is used to achieve the desired vertex sparsification via core graphs. Core graphs are partial static low stretch spanning trees. Both steps are outlined in more detail below 

\paragraph{The Path Set: Dynamizing the \cite{andoni20vert}-Vertex Sparsifier. } 

We initially let $A$ be a set of $n/k$ vertices such that $|B(v, A)| \leq \tilde{O}(k)$ for all $v \in V$.\footnote{While this can be achieved directly via random sampling, \cite{kyng2023dynamic} derandomize this construction in Appendix A.1.} Then, the path set $\mathcal{P}$ is constructed as follows

\begin{itemize}
    \item For every vertex $u, v \in E$, and every pair of vertices $u', v'$ such that $u' \in \bar{B}(u, A)$ and $v' \in \bar{B}(v, A)$, add the path $\pi_{u', u} \concat (u, v) \concat \pi_{v, v'}$ to $\mathcal{P}$. 
\end{itemize}

Then, whenever an update (i.e. an edge insertion or deletion) occurs, we add both endpoints of the affected edge to $A$. If its a deletion, we remove broken paths from $\mathcal{P}$, and if its an insertion we add it to $\mathcal{P}$ as a path of length one. 

This ensures that the following crucial invariant is preserved. Recall that the pivot $p(v)$ is defined to be the closest vertex to $v$ in $A$. 

\begin{invariant}
    \label{inv:invariant_paths}
    For every edge $(u, v) \in E$, the path $\pi_{p(u), u} \concat (u, v) \concat \pi_{v, p(v)}$ from $u$ to $p(v)$ is in $\mathcal{P}$. 
\end{invariant}
\begin{proof}
    If $(u,v)$ is an inserted edge then $p(u) = u$, $p(v) = v$ and the statement follows. Otherwise, it suffices to notices that $\bar{B}(u, A)$ contains all possible pivots of $u$, and that it is not possible to tamper with the initial shortest path $\pi_{p(u), u}$, as otherwise $p(u)$ is no longer be the pivot of $u$. 
\end{proof}

Given \Cref{inv:invariant_paths}, we conclude that distances between vertices $u, v \in A$ are conserved up to a small constant factor analogously to the static construction of \cite{andoni20vert}. 

\paragraph{Core Graphs: Vertex Reduction and Maintaining Path Lengths. }

A core graph is a forest $F \subseteq G$, and each rooted tree component of the forest is referred to as a core.

For every edge $e = (u,v) \in E$, we define its stretch $\str_F(e)$ as follows 
\begin{align*}
    \str_F(e) \defeq 
    \begin{cases}
        1 + \ll(F[u,v])/\ll(e) & \text{if } \root(u) = \root(v)  \\
        1 + (\ll(F[\root(u), u]) + \ll(F[v, \root(v)]))/\ll(e) & \text{else} 
    \end{cases}
\end{align*}
and after contracting cores, we replace edge lengths by $\ll(e) + \ll(F[\root(u), u]) + \ll(F[v, \root(v)])$. Here, we let $F[u, v]$ be the unique forest path between $u$ and $v$. 

We ensure that every forest component has size at most $k^{o(1)}$, contains at most one vertex in $A$, and that all vertices in $A$ are roots. 

\cite{maxflow} and \cite{detmax} gave a deterministic algorithm for maintaining such a core graph with initial stretch over-estimates $\tilde{\str}(e)$ for each edge such that a constant fraction of the paths in $\mathcal{P}$ only get stretched by $\tilde{O}(1)$ when measured with respect to $\tilde{\str}(e)$, and the stretch $\str(e)$ of edge $e$ remains bounded by $\tilde{\str}(e)$ at all times. A simple multiplicative weights scheme then allows \cite{kyng2023dynamic} to use $\tilde{O}(1)$ core graphs to preserve all paths in some core graph. 

Finally, we add edges of length one between roots that are copies of the same vertices in different core graphs to obtain $\tilde{G}$ as described in the theorem.\footnote{\cite{kyng2023dynamic} collapses such vertices. This is not necessary for our purposes.}

\begin{remark}
    For \Cref{thm:RecourseVertexSparsifier}, we observe that we can first make vertices terminals before we split them. Then, one can check that the stretch of edges does not increase after a split. 
\end{remark}

\end{document}